\def\identity{\leavevmode\hbox{\small1\kern-3.8pt\Normauyutfrdesawlop;kjhglsize1}}
\pgfplotsset{
	log x ticks with fixed point/.style={
		xticklabel={
			\pgfkeys{/pgf/fpu=true}
			\pgfmathparse{exp(\tick)}%
			\pgfmathprintnumber[fixed relative, precision=3]{\pgfmathresult}
			\pgfkeys{/pgf/fpu=false}
		}
	},
	log y ticks with fixed point/.style={
		yticklabel={
			\pgfkeys{/pgf/fpu=true}
			\pgfmathparse{exp(\tick)}%
			\pgfmathprintnumber[fixed relative, precision=3]{\pgfmathresult}
			\pgfkeys{/pgf/fpu=false}
		}
	}
}
\newtheorem{theorem}{Theorem}
\newtheorem{definition}{Definition}
\newtheorem{lemma}{Lemma}
\renewcommand{\epsilon}{\varepsilon}
\newcommand{\Norm}[1]{\left\lVert#1\right\rVert}
\newcommand{\norm}[1]{\lVert#1\rVert}
\newcommand{\comm}[1]{\left[#1\right]}
\newcommand{\Abs}[1]{\left|#1\right|}
\newcommand{\abs}[1]{|#1|}
\renewcommand{\O}[1]{O\left(#1\right)}
\newcommand\jus[1]{\medskip\par\noindent\textbf{}\justifying} 
\newcommand{\etal}{{et al.}}
\renewcommand{\O}[1]{\mathcal O\left(#1\right)}
\renewcommand{\d}{d}
\newcommand{\mainref}[1]{\ref{#1}}
\newcommand{\Section}[1]{\section{#1}}
\newcommand{\U}[2]{U^{#1}_{#2}}
\newcommand{\Udag}[2]{\left(U^{#1}_{#2}\right)^\dag}
\newcommand{\dtrunc}{\delta_\text{trunc}}
\newcommand{\doverlap}{\delta_\text{overlap}}
\newcommand{\veci}{\vec\imath}
\newcommand{\vecj}{\vec\jmath}
\DeclareMathOperator{\distop}{dist}
\newcommand{\dist}[1]{\distop\left(#1\right)}
\newcommand{\B}{\mathcal{B}}
\newcommand{\lr}{\textnormal{lr}}
\newcommand{\tr}{\textnormal{tr}}
\newcommand{\ov}{\textnormal{ov}}
\crefname{section}{Sec.}{Secs.}
\Crefname{section}{Section}{Sections}
\crefname{appsec}{Appendix}{Appendices}
\begin{document}

\title{Locality and digital quantum simulation of power-law interactions}

\author{Minh~C.~Tran}
\affiliation{Joint Center for Quantum Information and Computer Science, NIST/University of Maryland, College Park, Maryland 20742, USA}
\affiliation{Joint Quantum Institute, NIST/University of Maryland, College Park, Maryland 20742, USA}
\affiliation{Kavli Institute for Theoretical Physics, University of California, Santa Barbara, California 93106, USA}
\author{Andrew~Y.~Guo}
\affiliation{Joint Center for Quantum Information and Computer Science, NIST/University of Maryland, College Park, Maryland 20742, USA}
\affiliation{Joint Quantum Institute, NIST/University of Maryland, College Park, Maryland 20742, USA}
\author{Yuan~Su}
\affiliation{Joint Center for Quantum Information and Computer Science, NIST/University of Maryland, College Park, Maryland 20742, USA}
\affiliation{Department of Computer Science, University of Maryland, College Park, Maryland 20742, USA}
\affiliation{Institute for Advanced Computer Studies, University of Maryland, College Park, Maryland 20742, USA}
\author{James~R.~Garrison}
\affiliation{Joint Center for Quantum Information and Computer Science, NIST/University of Maryland, College Park, Maryland 20742, USA}
\affiliation{Joint Quantum Institute, NIST/University of Maryland, College Park, Maryland 20742, USA}
\author{Zachary~Eldredge}
\affiliation{Joint Center for Quantum Information and Computer Science, NIST/University of Maryland, College Park, Maryland 20742, USA}
\affiliation{Joint Quantum Institute, NIST/University of Maryland, College Park, Maryland 20742, USA}
\author{Michael~Foss-Feig}
\affiliation{United States Army Research Laboratory, Adelphi, Maryland 20783, USA}
\affiliation{Joint Center for Quantum Information and Computer Science, NIST/University of Maryland, College Park, Maryland 20742, USA}
\affiliation{Joint Quantum Institute, NIST/University of Maryland, College Park, Maryland 20742, USA}
\author{Andrew~M.~Childs}
\affiliation{Joint Center for Quantum Information and Computer Science, NIST/University of Maryland, College Park, Maryland 20742, USA}
\affiliation{Department of Computer Science, University of Maryland, College Park, Maryland 20742, USA}
\affiliation{Institute for Advanced Computer Studies, University of Maryland, College Park, Maryland 20742, USA}
\author{Alexey~V.~Gorshkov}
\affiliation{Joint Center for Quantum Information and Computer Science, NIST/University of Maryland, College Park, Maryland 20742, USA}
\affiliation{Joint Quantum Institute, NIST/University of Maryland, College Park, Maryland 20742, USA}

\begin{abstract}
	The propagation of information in non-relativistic quantum systems obeys a speed limit known as a Lieb-Robinson bound.  
	We derive a new Lieb-Robinson bound for systems with interactions that decay with distance $r$ as a power law, $1/r^\alpha$.
	The bound implies an effective light cone tighter than all previous bounds.
	Our approach is based on a technique for approximating the time evolution of a system, which was first introduced as part of a quantum simulation algorithm by Haah~{\it et al.}, FOCS'18.
	To bound the error of the approximation, we use a known Lieb-Robinson bound that is weaker than the bound we establish.
	This result brings the analysis full circle, suggesting a deep connection between Lieb-Robinson bounds and digital quantum simulation.
	In addition to the new Lieb-Robinson bound, our analysis also gives an error bound for the Haah~{\it et al.} quantum simulation algorithm when used to simulate power-law decaying interactions.
	In particular, we show that the gate count of the algorithm scales with the system size better than existing algorithms when $\alpha>3D$ (where $D$ is the number of dimensions).
\end{abstract}

\maketitle

\section{Introduction}

Lieb-Robinson bounds limit the rate at which information can propagate in systems that obey the laws of non-relativistic quantum mechanics~\cite{LR,NachtergaeleOS2006,Nachtergaele2006,HK,GongFF,Foss-FeigG,Storch15,NRSS09,SHKM10,SH10}.
These bounds have found a plethora of applications~\cite{Nachtergaele11,BravyiHV06,Cheneau2012,Lashkari2013,Kliesch14,Nachtergaele11,Hamza2012,Barmettler12,Hastings09,Schollwock11,Enss12,Woods2015,Woods2016}, including recent results on entanglement area laws~\cite{Hastings07,Eisert2010,Gong17}, the classical complexity of sampling bosons~\cite{Deshpande17}, and even a quantum algorithm for digital quantum simulation~\cite{Haah}.

Lieb and Robinson's original proof applies only to short-range interactions, i.e.,\ those that act over a finite range or decay at least exponentially in space. 
However, interactions in many physical systems, such as trapped ions~\cite{Britton2012,Kim2011}, Rydberg atoms~\cite{Saffman10}, ultracold atoms and molecules~\cite{Douglas2015,Yan2013}, nitrogen-vacancy centers~\cite{Maze2011}, and superconducting circuits~\cite{Otten16}, can decay with distance $r$ as a power law ($1/r^\alpha$) and, hence, lie outside the scope of the original Lieb-Robinson bound. 
Thus, understanding the fundamental limit on the speed of information propagation  
in these systems holds serious physical implications, including for the applications mentioned above.
Despite many efforts in recent years~\cite{HK,GongFF,Foss-FeigG,Storch15}, a \emph{tight} Lieb-Robinson bound for such long-range interactions remains elusive. 

In this paper, we derive a new Lieb-Robinson bound for systems with power-law decaying interactions in $D$ dimensions.
While our bound is not known to be tight, it has four main benefits compared to the best previous bound for such systems~\cite{Foss-FeigG}:
(i) It is tighter, resulting in the best effective light cone to date [\cref{EQ_LightCone}].
(ii) The bound applies at all times, and not just asymptotically in the large-time limit.
(iii) The framework behind the proof is conceptually simpler, with an easy-to-understand interpretation based on physical intuition.
(iv) Our approach is potentially applicable to studying a wider variety of quantities, including connected correlators~\cite{Bravyi06,Tran17} and higher-order correlators (for instance, the out-of-time-ordered correlator~\cite{Larkin1969,kitaev_soft_2018} and the full measurement statistics of boson sampling \cite{Aaronson2011,Deshpande17}) as we discuss in \cref{Sec_Outlook}.

In contrast to the previous long-range Lieb-Robinson bounds~\cite{HK,GongFF,Foss-FeigG,Storch15}, which all relied on the so-called Hastings-Koma series~\cite{HK}, 
our approach is based on a generalization of the framework Haah \etal~\cite{Haah} (HHKL) introduced as a building block for their quantum simulation algorithm. 
The essence of their framework is a technique for decomposing the time evolution of a system into evolutions of subsystems, with an error bounded by the Lieb-Robinson bound for short-range interactions~\cite{LR}.
We extend the HHKL framework to long-range interactions and to a more general choice of subsystems. 
Remarkably, these modifications enable us to derive a tighter Lieb-Robinson bound for long-range interactions than the one we use in the analysis of the decomposition~\cite{GongFF}.

Additionally, we return to the original motivation of Haah \etal's framework: the digital simulation of lattice-based quantum systems. 
We generalize the HHKL algorithm to simulate systems with power-law decaying interactions. 
The algorithm scales better as a function of system size than previous algorithms when $\alpha>3D$, and the speed-up becomes more dramatic as $\alpha$ is increased.

The structure of the paper is as follows.
In \cref{Sec_Overview}, we state our main results and summarize the proof of the new Lieb-Robinson bound.
In \cref{Sec_Framework}, we lay out the precise mathematical framework for the proof and generalize the technique for decomposing time-evolution unitaries~\cite{Haah} to power-law decaying interactions and to more general choices of subsystems.
After that, we present two applications of the unitary decomposition in \cref{Sec_LR} and \cref{Sec_qu-sim}, which can be read independently of each other.
Specifically, in \cref{Sec_LR}, we use the unitary decomposition to derive the improved Lieb-Robinson bound for long-range interactions.
In \cref{Sec_qu-sim}, we analyze the performance of the HHKL algorithm from Ref.~\cite{Haah} when applied to simulating long-range interacting systems.
We conclude in \cref{Sec_Outlook} with an outlook for the future.
\Section{Summary of results}
\label{Sec_Overview}
In this section, we summarize our main results for the case of a one-dimensional lattice.
Without loss of generality, we assume that the distance between neighboring sites is one.
The unitary decomposition technique in \cref{Sec_Framework} is generalized from a similar result for short-range interactions in Ref.~\cite{Haah}. We use it to approximate the evolution of a long-range interacting system $ABC$ by three sequential evolutions of its subsystems $AB$, $B$, and $BC$ (see Fig.~\ref{FIG_Lem1-demo}). 
We assume that the interaction strength between any two sites in the system is bounded by $1/r^\alpha$, with $r$ being the distance between the sites and $\alpha$ a nonnegative constant.
This restriction on the Hamiltonian norm also sets the time unit for the evolution of the system.

There are two sources of error in the approximation:
one due to the truncation of the Hamiltonian of the system $ABC$ (we ignore the interactions that connect $A$ and $C$), 
and the other due to the Hamiltonians of the subsystems $AB,B$, and $BC$ not commuting with each other. 
For a fixed value of $\alpha$, if the distance $\ell$ between the two regions $A$ and $C$ (see Fig.~\ref{FIG_Lem1-demo}a) is large enough, namely\ $\ell\gg \alpha$, the two error sources have the same scaling with $\ell$.
To estimate the error, for example from the truncation, we sum over interactions connecting sites in $A$ and $C$, and obtain a total error of $\O{1/\ell^{\alpha-2}}$ (in one dimension) for the approximation in the unitary decomposition (as shown in \cref{SM_Subsec_dtrunc}).
\begin{figure}
	\includegraphics[width=0.48\textwidth]{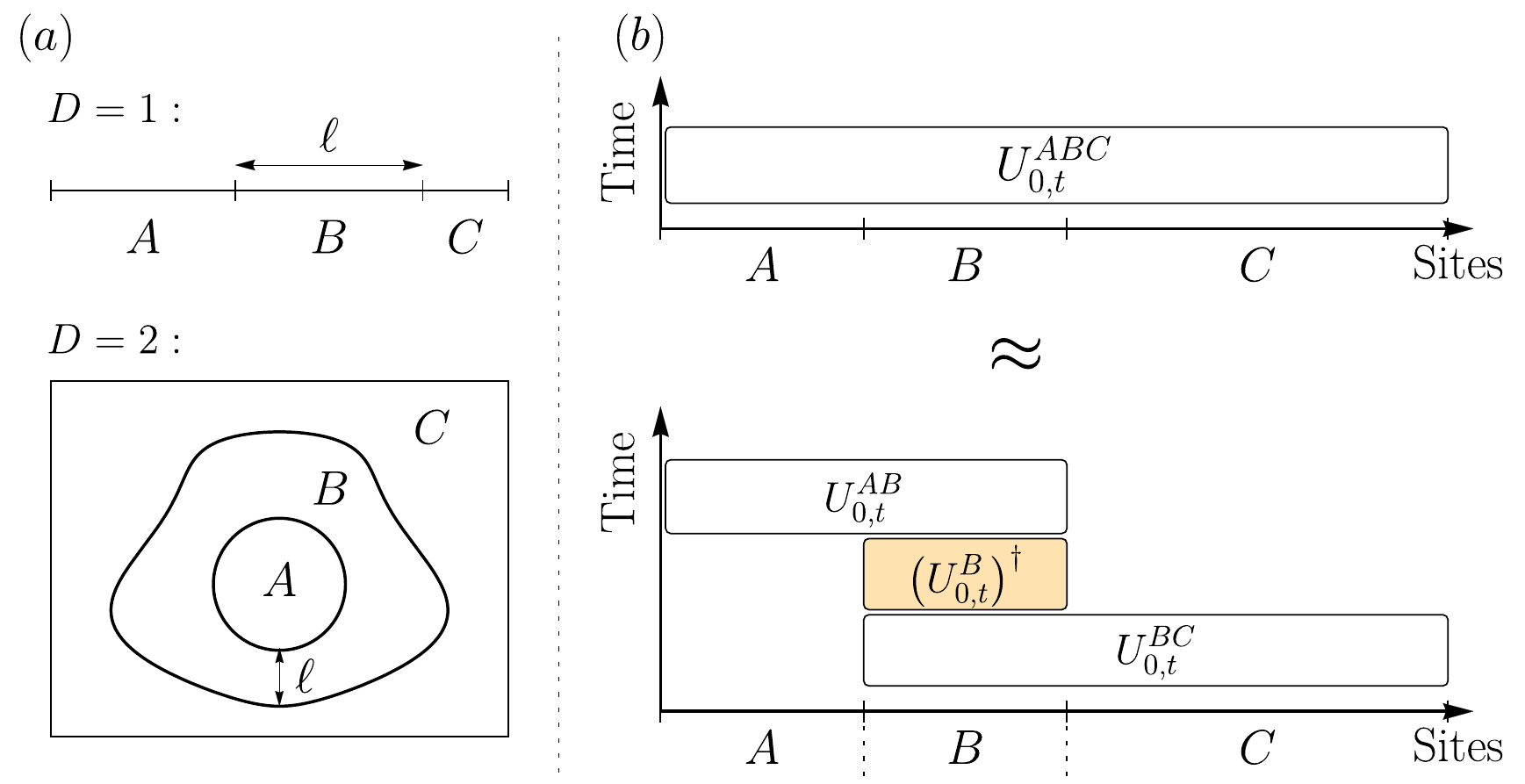}
	\caption{A demonstration of the unitary decomposition in Lemma~\ref{LEM_BREAK_D}. Panel $(a)$: the three disjoint regions $A,B,C$ in $D=1$ and $D=2$ dimensions with $A$ convex and compact. Panel $(b)$: Lemma~\ref{LEM_BREAK_D} allows the evolution of the whole system to be approximated by a series of three evolutions of subsystems. The horizontal axis lists the sites in each of the three sets $A,B,C$ (not necessarily according to their geometrical arrangement, particularly in higher dimensions). Each box is an evolution for time $t$ of a Hamiltonian supported on the sites the box covers. These evolutions can be forward (white fill) or backward (orange fill, with dagger) in time.}
	\label{FIG_Lem1-demo}
\end{figure}

\begin{figure}[t]
	\includegraphics[width=0.5\textwidth]{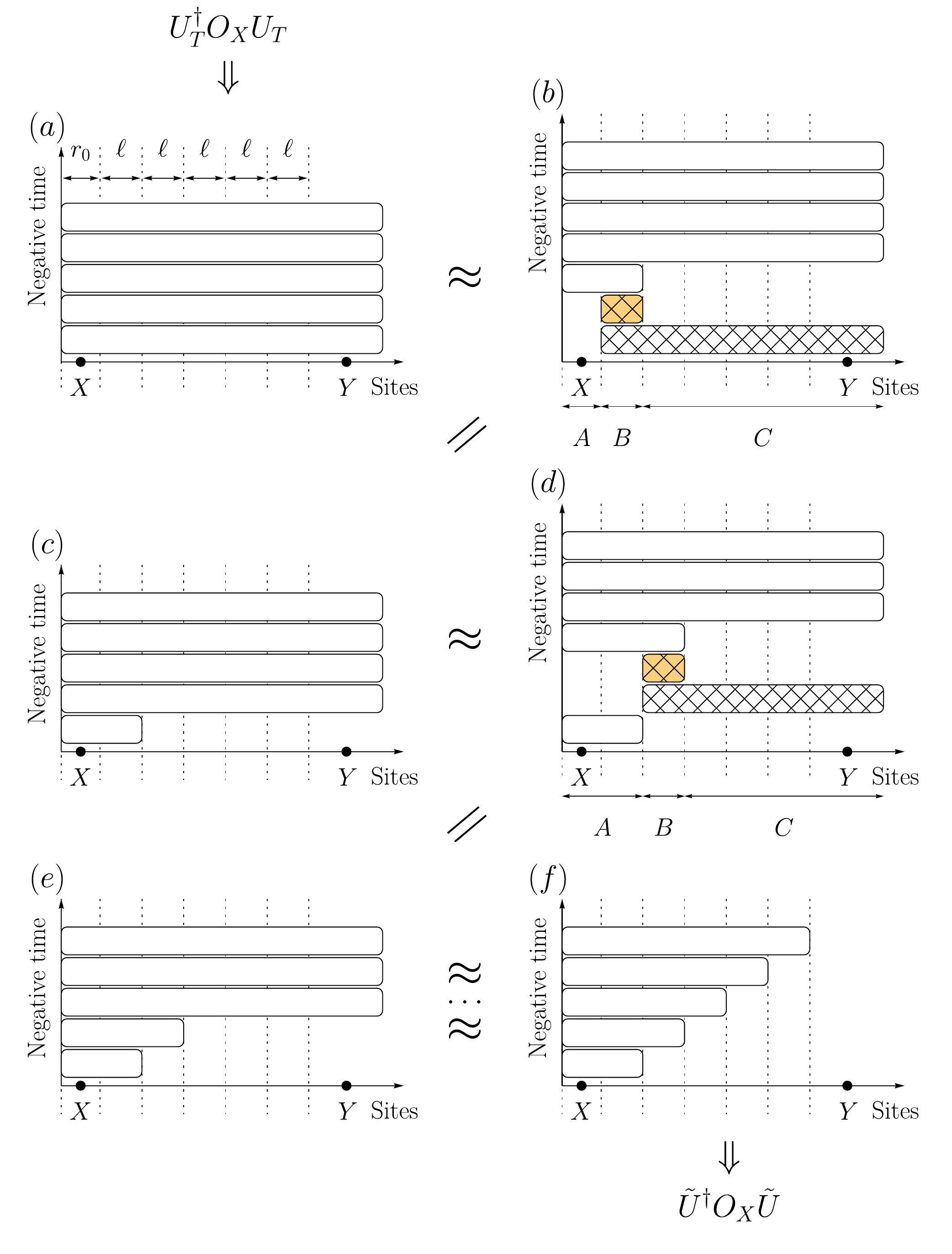}
	\caption{A step-by-step construction of the unitary $\tilde U$ such that $\tilde U^\dag O_X \tilde U \approx U_T^\dag O_X U_T$.  
		Each box represents an evolution of the subsystem covered by the width of the box for a fixed time. 
		The colors of the boxes follow the same convention as in Fig.~\ref{FIG_Lem1-demo}.
		In panel \emph{(a)}, the unitary $U_T$ is written as a product of evolutions of the same system in $M=5$ consecutive time slices.	
		\emph{(b)} The evolution in the last (bottom) time slice is decomposed using the method in Fig.~\ref{FIG_Lem1-demo}, with the choice of subsystems $A,B,C$ such that $X$ is contained in $A$.
		The evolutions of the subsystems $B$ and $BC$ (hatched boxes) therefore commute with $O_X$ and cancel out with their counterparts from $U_T^\dag$, resulting in \emph{(c)}.	
		In panel \emph{(d)}, we repeat the procedure for the second-from-bottom time slice, but note the different choice of $A,B,C$ from panel \emph{(b)}.
		This difference is necessary to ensure that the evolutions of $B$ and $BC$ commute with the evolution(s) from the previously decomposed time slice(s).
		We then commute them through $O_X$ again and remove them from the construction of $\tilde U$ in panel \emph{(e)}.
		Repeatedly applying the unitary decomposition for the other time slices, we obtain the unitary $\tilde U$ in panel \emph{(f)}, which is supported on a smaller region than the original unitary $U_T$. 
		With a proper choice of the size $\ell$ of $B$, we can make sure that $Y$ lies outside this region, and, therefore, $\tilde U$ commutes with $O_Y$.
	}
	\label{FIG_TH-LR-demo-more-step}
\end{figure}

In \cref{Sec_LR}, we use the unitary decomposition to prove a Lieb-Robinson bound for long-range interactions that is stronger than previous bounds, including the one we use in the proof of the unitary decomposition. 
The subject of such a bound is usually the norm of the commutator $\Norm{\comm{O_X(T),O_Y}}$ between an operator $O_X(T)=U_T^\dag O_X U_T$ evolved under a long-range Hamiltonian for time $T$ and another operator $O_Y$ supported on a set $Y$ that is at least a distance $R$ away from the support $X$ of $O_X$.
Here, we briefly explain the essence of the proof using a one-dimensional system with fixed $\alpha$ and large enough $R,T\gg\alpha$ as an example.
The strategy is to use the aforementioned unitary decomposition to construct another unitary $\tilde U$ such that \emph{(i)} $\tilde U^\dag O_X \tilde U$ approximates $U_T^\dag O_X U_T$  and \emph{(ii)} $\tilde U^\dag O_X \tilde U$ commutes with $O_Y$, so the commutator norm $\Norm{\comm{O_X(T),O_Y}}$ will be approximately zero, up to the error of our approximation.
For fixed $\alpha$, we consider $M\propto T$ equal time slices and use the unitary decomposition to extract the relevant parts from the evolution $U_T$ in each time slice. 
Each time we decompose a unitary, we choose the subsystems $A,B,C$ so that only $A$ overlaps with the supports of the unitaries from the previous time slices (see \cref{FIG_TH-LR-demo-more-step}), and therefore the evolutions of $B$ and $BC$ can be commuted through $O_X$ to cancel their counterparts from $U_T^\dag$ (\cref{FIG_TH-LR-demo-more-step}b and \cref{FIG_TH-LR-demo-more-step}d):
\begin{align}
	&\Udag{ABC}{} O_X \U{ABC}{} \nonumber\\
	&\approx \Udag{AB}{}\U{B}{}\Udag{BC}{} O_X \U{BC}{}\Udag{B}{}\U{AB}{} \nonumber\\
	&= \Udag{AB}{}O_X \U{AB}{}.
\end{align}
The remaining evolutions that contribute to the construction of $\tilde U$ are supported entirely on a ball of radius $\sim M\ell$ around $X$, where $\ell$ is the size of $B$ and is chosen to be the same in all time slices.
By choosing $\ell\sim R/M$ and $M\ell<R$ so that $Y$ lies outside this ball, the commutator norm $\Norm{\comm{O_X(T),O_Y}}$ is at most the number of time slices multiplied by $\O{1/\ell^{\alpha-2}}$, which is the decomposition error per time slice.
Therefore, we obtain a Lieb-Robinson bound for long-range interactions in one dimension:
\begin{align}
\Norm{\comm{O_X(T),O_Y}}\leq c_{\lr,\alpha}\frac{T}{\ell^{\alpha-2}} =c_{\lr,\alpha} \frac{T^{\alpha-1}}{R^{\alpha-2}},
\end{align} 
where $c_{\lr,\alpha}$ is a constant that may depend on $\alpha$, but not on $T,R$.
Setting the commutator norm to a small constant yields the causal region inside the effective light cone: $T\gtrsim R^{\frac{\alpha-2}{\alpha-1}}$. 
For comparison, the previous best Lieb-Robinson bound produces a light cone $T\gtrsim R^{\frac{\alpha-2}{\alpha}}$~\cite{Foss-FeigG}. 
Our bound is therefore tighter in the asymptotic limit of large $R$ and large $T$, while its proof is substantially more intuitive than in Ref.~\cite{Foss-FeigG}.
A more careful analysis (\cref{Sec_LR}) shows that our light cone also becomes linear in the limit $\alpha\rightarrow\infty$, where the power-law decaying interactions are effectively short-range.
Moreover, our bound works for arbitrary time $T$, while the bound in Ref.~\cite{Foss-FeigG} applies only in the long-time limit.
We provide a more rigorous treatment as well as a bound for $D$-dimensional systems in \cref{Sec_LR}. 

\Cref{Sec_qu-sim} then then discusses the original motivation for the unitary decomposition\textemdash digital quantum simulation\textemdash in the case of long-range interactions that decay as a power law. 
For $\alpha>2D$, our analysis shows that the HHKL algorithm~\cite{Haah} requires only  $\O{{Tn(Tn/\epsilon)^{\frac{2D}{\alpha-D}}}\log\frac{Tn}{\epsilon}}$ two-qubit gates to simulate the evolution of a system of $n$ sites arranged in a $D$-dimensional lattice for time $T$ with an error at most $\epsilon$.
For large $\alpha$, the gate count of the algorithm scales with $n$ significantly better than other algorithms. 

\Section{Framework}%
\label{Sec_Framework}
In this section, we present the technique for approximating the time evolution of a system by evolutions of subsystems. We later use this technique to derive a stronger Lieb-Robinson bound (\cref{Sec_LR}) and an improved quantum simulation algorithm (\cref{Sec_qu-sim}) for systems with long-range interactions.

We consider $n$ sites arranged in a $D$-dimensional lattice $\Lambda\subset \mathbb N^D$ of size $L = \O{n^{1/D}}$ and $D\geq 1$. 
Recall that, without loss of generality, we assume the spacing between neighboring lattice sites is one. 
This assumption sets the unit for distances between sites in the lattice.
We shall embed the lattice $\Lambda$ into the real space $\mathbb R^D$. 
The intersection $X\cap \Lambda$ therefore contains every lattice site in a subset $X\subset \mathbb R^D$.
The system evolves under a (possibly) time-dependent Hamiltonian
$
H_\Lambda(t)=\sum_{\veci,\vecj}h_{\veci,\vecj}(t),
$
with $h_{\veci,\vecj}{(t)}$ being the interaction between two sites $\veci,\vecj \in \Lambda$.
Without ambiguity, we may suppress the time-dependence in the Hamiltonians.
We say a system has power-law decaying interactions if $\Norm{h_{\veci,\vecj}{}}\leq\frac{1}{\Norm{\veci-\vecj}^\alpha}$, where $\Norm{\cdot}$ denotes both the matrix and the vector 2-norms, for some nonnegative constant $\alpha$ and for all $\veci\neq\vecj$. [Note that $h_{\veci,\veci}$ may have arbitrarily large norm.]
For readability, we denote by $H_X=\sum_{\veci,\vecj\in X} h_{\veci,\vecj}$ the terms of $H_\Lambda$ that are supported entirely on a subset $X\cap \Lambda$, and by $U^X_{t_1,t_2}\equiv \mathcal T \exp\left(-i\int_{t_1}^{t_2} H_X dt\right)$ the evolution unitary under $H_X$ from time $t_1$ to $t_2$, where $\mathcal T$ is the time-ordering operator. 
We also denote by $\dist{X,Y}$ the minimum distance between any two sites in $X$ and $Y$, by $X^c=\mathbb R^D\setminus X$ the complement of $X$ in real space, by $\partial X$ the boundary of a compact subset $X$, by $\Phi(X)$ the area of $\partial X$, and by $XY$ the union $X\cup Y$.
In the following, we keep track of how errors scale with time, distance, and $\alpha$, while treating the dimension $D$ as a constant.

We now describe how to approximate the evolution of the system to arbitrary precision by a series of evolutions of subsystems using a technique we generalize from Ref.~\cite{Haah}.
\begin{lemma}
	\label{LEM_BREAK_D}
	Let $A,B,C\subset \mathbb R^D$ be three distinct regions with non-empty interiors such that $A\cup B\cup C = \mathbb R^D$. Let $A$ be both compact (closed and bounded) and convex. 
	We have
	\begin{align}
	&\Norm{\U{ABC}{0,t}-\U{AB}{0,t}\left(\U{B}{0,t}\right)^\dag\U{BC}{0,t}}\leq c_0 (e^{vt}-1) \Phi(A) \xi_\alpha(\ell),	\nonumber
	\end{align}  
	with 
	\begin{align}
	\xi_{\alpha}(\ell) = 
	\left(\frac{16}{1-\gamma}\right)^\alpha\frac{1}{\ell^{\alpha-D-1}}	
	+e^{-\gamma \ell},\label{EQ_xi}
	\end{align}
	for all $\alpha>D+1$. Here, $v,c_0\in\mathbb R^+$ are positive constants, $\gamma$ is a constant that can be chosen arbitrarily in the range $(0,1)$, and $\ell = \dist{A,C}$ is the distance between sets $A$ and $C$.
\end{lemma}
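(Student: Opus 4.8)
The plan is to mirror and extend the argument of Ref.~\cite{Haah}, splitting the left-hand side by the triangle inequality into a \emph{truncation} error and a \emph{non-commutation} error. Write $\Delta(t):=H_{BC}(t)-H_B(t)$ for the part of $H_{BC}$ not supported entirely on $B$ (equivalently, the terms with at least one endpoint in $C\setminus B$), and let $\tilde H(t):=H_{AB}(t)+\Delta(t)=H_{AB}(t)+H_{BC}(t)-H_B(t)$; this is exactly the full Hamiltonian $H_{ABC}=H_\Lambda$ with every interaction $h_{\veci,\vecj}$ joining $A$ and $C$ deleted, and in the regime of interest $A$ and $C$ are disjoint so every such pair obeys $\Norm{\veci-\vecj}\ge\ell$. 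Inserting $\U{\tilde H}{0,t}$ gives
\begin{align}
	&\Norm{\U{ABC}{0,t}-\U{AB}{0,t}(\U{B}{0,t})^\dag\U{BC}{0,t}} \nonumber\\
	&\quad\le \Norm{\U{ABC}{0,t}-\U{\tilde H}{0,t}} \nonumber\\
	&\qquad\quad + \Norm{\U{\tilde H}{0,t}-\U{AB}{0,t}(\U{B}{0,t})^\dag\U{BC}{0,t}}, \nonumber
\end{align}
and I bound the two terms in turn.

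For the truncation term I use the elementary estimate $\Norm{\U{G_1}{0,t}-\U{G_2}{0,t}}\le\int_0^t\Norm{G_1(s)-G_2(s)}\,ds$, which makes it at most $t\sup_s\Norm{H_{ABC}(s)-\tilde H(s)}\le 2t\sum_{\veci\in A\cap\Lambda}\sum_{\vecj\in C\cap\Lambda}\Norm{\veci-\vecj}^{-\alpha}$. Here the convexity and compactness of $A$ are what lets one control the double sum: for fixed $\veci\in A$, summing $\Norm{\veci-\vecj}^{-\alpha}$ over all $\vecj\in\Lambda$ at distance at least $d:=\dist{\veci,C}$ costs $\O{d^{-(\alpha-D)}}$; and because $A$ is compact and convex, the number of its sites with $\dist{\veci,C}\in[\ell+k,\ell+k+1]$ is $\O{\Phi(A)}$ uniformly in $k$, so summing over $\veci\in A$ yields $\O{\Phi(A)\sum_{k\ge0}(\ell+k)^{-(\alpha-D)}}=\O{\Phi(A)\,\ell^{-(\alpha-D-1)}}$ for $\alpha>D+1$ --- a constant times $\Phi(A)$ times the polynomial part of $\xi_\alpha(\ell)$.

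For the non-commutation term, differentiating $P(t):=\U{AB}{0,t}(\U{B}{0,t})^\dag\U{BC}{0,t}$ shows $i\dot P(t)=G(t)P(t)$, $P(0)=I$, with generator $G(t)=H_{AB}(t)+W(t)\,\Delta(t)\,W(t)^\dag$ where $W(t):=\U{AB}{0,t}(\U{B}{0,t})^\dag$, whereas $\U{\tilde H}{0,t}$ has generator $\tilde H(t)=H_{AB}(t)+\Delta(t)$. Applying the same elementary estimate once more reduces this term to
\begin{align}
	\int_0^t\Norm{\Delta(s)-W(s)\,\Delta(s)\,W(s)^\dag}\,ds. \nonumber
\end{align}
I then break $\Delta$ into its two-site pieces $h_{\veci,\vecj}$ --- each with an endpoint in $C\setminus B$, hence (by a short triangle-inequality check) with \emph{both} endpoints a distance $\gtrsim\ell$ from $A$ whenever $h_{\veci,\vecj}$ is not already negligibly small --- and bound the change of each piece under conjugation by $W(s)$. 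The crucial observation is that $W(s)=\U{AB}{0,s}(\U{B}{0,s})^\dag$ is an evolution supported on $AB$ whose purely-$B$ part cancels between its two factors, so it acts \emph{effectively} only near $A$; a \emph{known} (and weaker) Lieb-Robinson bound for power-law interactions~\cite{GongFF} then controls how much conjugation by $W(s)$ can move an operator localized a distance $\gtrsim\ell$ into $C$, and its polynomial-plus-exponential profile --- summed against the radial and $\O{\Phi(A)}$-boundary lattice sums exactly as in the truncation step --- reproduces $\Phi(A)\,\xi_\alpha(\ell)$, with the prefactor $\bigl(16/(1-\gamma)\bigr)^\alpha$ and the correction $e^{-\gamma\ell}$ inherited from that bound and the time factor $e^{vt}-1$ coming from its Gr\"onwall/time-integral step. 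Collecting the two contributions and absorbing all constants into $c_0$ and $v$ yields the claimed inequality.

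The main obstacle is the non-commutation term: one has to make rigorous the heuristic that $W(s)=\U{AB}{0,s}(\U{B}{0,s})^\dag$ ``lives near $A$'' --- which amounts to identifying the region and parameters with which the known power-law Lieb-Robinson bound should be invoked (and this is what pins down the awkward constants $\gamma$ and $16^\alpha$) --- and the whole estimate has to be run at the level of the individual two-site terms $h_{\veci,\vecj}$ and their Heisenberg-evolved versions, never through $\Norm{H_B}$ or $\Norm{H_{BC}}$, which can be extensive or even infinite when $B$ or $C$ is unbounded.
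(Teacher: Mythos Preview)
Your proposal is correct and follows essentially the same route as the paper: both split the error into a truncation piece $\Norm{H_{A:C}}$ (controlled via the convexity/compactness of $A$ and radial lattice sums) and an overlap piece $\Norm{(\U{AB}{0,s})^\dag H_{B:C}\,\U{AB}{0,s}-(\U{B}{0,s})^\dag H_{B:C}\,\U{B}{0,s}}$ (bounded term-by-term via the Gong~\etal\ Lieb-Robinson bound after an interaction-picture reduction). The only organizational difference is that you insert the intermediate unitary $\U{\tilde H}{0,t}$ and apply the triangle inequality, whereas the paper writes $\U{ABC}{0,t}=\U{AB}{0,t}\U{C}{0,t}W_t$ and compares the generator of $W_t$ directly with that of $(\U{C}{0,t})^\dag(\U{B}{0,t})^\dag\U{BC}{0,t}$; once the commuting $H_C$ part of your $\Delta$ is dropped, your non-commutation integrand $\Norm{\Delta-W\Delta W^\dag}$ is literally the paper's $\Norm{\doverlap}$.
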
	
We emphasize that this lemma applies to arbitrary sets $A$ that are both convex and compact.
The sets we focus on include $D$-balls and hyperrectangles in $\mathbb R^D$. The former geometry is relevant in the proof of our new Lieb-Robinson bound, the latter in the analysis of the HHKL algorithm for long-range interactions. 

Lemma~\ref{LEM_BREAK_D} allows us to approximate the evolution of a long-range interacting system $ABC$ by that of subsystems $AB,B,BC$ (Fig.~\ref{FIG_Lem1-demo}).
The features of the function $\xi_{\alpha}(\ell)$ are better understood by considering two limiting cases of physical interest. 
First, when $\alpha$ is finite and $\ell$ (the distance between $A$ and $C$) is large compared to $\alpha$, the function $\xi_{\alpha}(\ell)$ behaves like
\begin{align}
	\O{\frac{1}{\ell^{\alpha-D-1}}},
\end{align}
which decays only polynomially with $\ell$.
In the second limit, as $\alpha\to\infty$ for a large but finite $\ell$, we recover from $\xi_\alpha(\ell)$ the exponentially decaying error bound $e^{-\gamma \ell}$\textemdash a trademark of finite-range interactions~\cite{LR,Haah}.

The proof of Lemma~\ref{LEM_BREAK_D}, while more general, bears close resemblance to the corresponding analysis for short-range interactions in Ref.~\cite{Haah}.
However, there are two key differences. 
First, in order to make the approximation in Lemma~\ref{LEM_BREAK_D}, some interactions between sites separated by a distance greater than $\ell$ are truncated from the Hamiltonian. 
While such terms vanish in a system with short-range interactions, here they contribute $\O{{\Phi(A)}/{\ell^{\alpha-D-1}}}$ to the error of the approximation.
In addition, instead of the original Lieb-Robinson bound~\cite{LR} which applies only to systems with short-range interactions, we use Gong \etal's generalization of the bound for long-range interactions~\cite{GongFF}.
The result is an approximation error that decays with $\ell$ polynomially as $\O{{\Phi(A)}/{\ell^{\alpha-D-1}}}$, \emph{in addition} to the exponentially decaying error that exists already for short-range interactions.
Nevertheless, the error can always be made arbitrarily small by choosing $\ell$ to be large enough. 

In \cref{SM_Sec_Lem1_Proof} below, we present the proof of Lemma~\ref{LEM_BREAK_D}.
After that, we demonstrate the significance of Lemma~\ref{LEM_BREAK_D} with two applications: a stronger Lieb-Robinson bound for long-range interacting systems (\cref{Sec_LR}) and an improved error bound for simulating these systems (\cref{Sec_qu-sim}).
Both sections are self-contained, and readers may elect to focus on either of them.
\subsection{Error bound on the unitary decomposition}
\label{SM_Sec_Lem1_Proof}

Here, we will outline the proof of Lemma~\ref{LEM_BREAK_D}.
Similar to  Ref.~\cite{Haah}, we begin with an identity:
\begin{align}
\U{ABC}{0,t}=\U{AB}{0,t}\U{C}{0,t}\underbrace{\Udag{C}{0,t}\Udag{AB}{0,t}\U{ABC}{0,t}}_{=W_t}.
\end{align}
Our aim is to approximate $W_t$ by $\Udag{C}{0,t}\Udag{B}{0,t}\U{BC}{0,t}$, from which Lemma~\mainref{LEM_BREAK_D} will follow.
For that, we look at the generator of $W_t$~\cite{Haah}, i.e., a Hamiltonian $\mathcal{G}_t$ such that 
\begin{align}
\frac{dW_t}{dt} = -i \mathcal{G}_t W_t,\label{EQ_G_def}
\end{align}
for all time. 
Exact differentiation of $W_t$ yields~\cite{Osborne06,Michalakis12}
\begin{align}
\mathcal{G}_t &=\Udag{C}{0,t}\Udag{AB}{0,t}
(\underbrace{H_{ABC}-H_{AB}-H_C}_{=H_{A:C}+H_{B:C}}) 
\U{AB}{0,t}\U{C}{0,t}\label{EQ_LEM_BREAK1}\\
&=\Udag{C}{0,t}\Udag{AB}{0,t}H_{B:C}\U{AB}{0,t}\U{C}{0,t}+\dtrunc\label{EQ_LEM_BREAK2}\\
&=\Udag{C}{0,t}\Udag{B}{0,t}H_{B:C}\U{B}{0,t}\U{C}{0,t}+ \doverlap+\dtrunc,\label{EQ_LEM_BREAK3}
\end{align}
where $H_{X:Y}=\sum_{i\in X,j\in Y}h_{ij}(t)$ denotes the sum of terms supported \emph{across} disjoint sets $X$ and $Y$, and $\dtrunc,\doverlap$ are error terms we now define and evaluate.
Note that the first term in Eq.~\eqref{EQ_LEM_BREAK3} is the generator of $\Udag{C}{0,t}\Udag{B}{0,t}\U{BC}{0,t}$\textemdash the unitary with which we aim to approximate $W_t$.

In contrast to the approximation for short-range interacting systems in Ref.~\cite{Haah}, there are two sources of error in Eq.~\eqref{EQ_LEM_BREAK3}. 
The first error term $\dtrunc$ arises after we discard $H_{A:C}$ from Eq.~\eqref{EQ_LEM_BREAK1}.
For the short-range interactions in Ref.~\cite{Haah}, this error vanishes when the distance $\ell$ between $A$ and $C$ is larger than the interaction range.
However, in our case, there is a nontrivial truncation error associated with ignoring long-range interactions between $A$ and $C$:
\begin{align}
\Norm{\dtrunc }= \Norm{H_{A:C}}&=c_{\tr}2^\alpha\frac{\Phi(A)}{\ell^{\alpha-D-1}}\label{EQ_sumAC}
\end{align}
for $\alpha>D+1$, where $c_{\tr}$ is a constant [\cref{EQ_ctr_def}], $\ell=\dist{A,C}$ is the distance between $A$ and $C$. 
The factor of $1/l^{\alpha}$ in the bound comes from the requirement that the two-body interactions decay as a power law $1/r^\alpha$,
while the term $\ell^{D}$ is due to the sum over all sites in the $D$-dimensional set $C$. Another factor of $\ell\Phi(A)$ arises after summing over the volume of $A$, which we assume to be a compact and convex set.
The detailed evaluation of the norm is presented in Appendix~\ref{SM_Subsec_dtrunc}.

The other error, which we define to be $\doverlap$, is the result of the approximation used between \cref{EQ_LEM_BREAK2,EQ_LEM_BREAK3}.
In the former equation, the operator evolves under $H_{AB}+H_C$, whereas in the latter, it evolves under the reduced Hamiltonian $H_{B}+H_C$, thus incurring the error:
\begin{align}
\Norm{\doverlap} &=  \Norm{\Udag{AB}{0,t}H_{B:C}\U{AB}{0,t}-\Udag{B}{0,t}H_{B:C}\U{B}{0,t}}.\label{EQ_doverlap}
\end{align}
To understand why $\norm{\doverlap}$ is small, recall that $H_{B:C}$ is the sum of terms $h_{\vec b,\vec c}$ that are supported on two sites $\vec b\in B$ and $\vec c\in C$. 
Since the strengths of such terms decay as $1/r^\alpha$ (with $r$ the distance between the sites $\vec b$ and $\vec c$), the main contribution to $H_{B:C}$\textemdash and thus to $\doverlap$\textemdash comes from the terms where $\vec b$ and $\vec c$ are spatially close to each other.
But since the sets $A,C$ are separated by a large distance $\ell$, if the site $\vec b$ is close to $C$, then it must be far from $A$. Thus, the evolution of $h_{\vec b,\vec c}$ for a short time under $H_{AB}$ can be well-approximated by evolution under $H_{B}$ alone.
In \Cref{SM_Subsec_doverlap}, we make this intuition rigorous using Gong \etal~\cite{GongFF}'s generalization of the Lieb-Robinson bound to systems with long-range interactions.

In the end, we obtain the following bound on $\doverlap$:
\begin{align}
\Norm{\doverlap}\leq c_{\ov}(e^{vt}-1) \Phi(A)\Bigg[
\frac{\left(\frac{16}{1-\gamma}\right)^\alpha}{\ell^{\alpha-D-1}}
+\frac{1}{e^{\gamma \ell}}\Bigg],
\end{align}
where $c_{\ov}$ is a constant [\cref{EQ_cov_def}] and $\gamma\in(0,1)$ is a free parameter.
The bound has contributions from two competing terms: one that decays polynomially with $\ell$ and another that decays exponentially.
The polynomially decaying term is dominant for fixed $\alpha$ and large $\ell$, whereas the exponentially decaying term prevails as $\alpha \to \infty$ for fixed $\ell$. 
The errors $\dtrunc$ and $\doverlap$ in approximating the generator $\mathcal G_W$ combine to give an overall error in approximating $W_t$ with $\Udag{C}{0,t}\Udag{B}{0,t}\U{BC}{0,t}$ (see \Cref{APP_ERR_PROP}). From this, we obtain the error bound in \Cref{LEM_BREAK_D}, with $c_0 = \max\{c_{\tr},c_{\ov}\}/v$.

Before discussing applications of \Cref{LEM_BREAK_D}, we pause here to note that the Lieb-Robinson bound in Gong \etal~\cite{GongFF} used in the above analysis is not the tightest-known bound for long-range interactions~\cite{Foss-FeigG}. 
Our use of this bound, however, does not lead to a suboptimal error bound in \Cref{LEM_BREAK_D}. For finite $\alpha$, the error bound is dominated by the polynomially decaying term $1/\ell^{\alpha-D-1}$, which arises from the truncation error $\dtrunc$ rather than $\doverlap$. Therefore, this error term would not benefit from a tighter Lieb-Robinson bound.
In the limit $\alpha\to \infty$, on the other hand, we shall see later that the lemma already reproduces the short-range Lieb-Robinson bound, which is optimal up to a constant factor. 
Thus, we expect that using stronger Lieb-Robinson bounds would produce no significant improvement for the error bound in \Cref{LEM_BREAK_D}.
\Section{A stronger Lieb-Robinson bound}
\label{Sec_LR}

In this section, we will use Lemma~\ref{LEM_BREAK_D} to derive a stronger Lieb-Robinson bound for long-range interactions.
The first generalization of the Lieb-Robinson bound to power-law decaying interactions was given by Hastings and Koma~\cite{HK}.
However, their bound diverges in the limit $\alpha \to\infty$, where the power-law decaying interactions are effectively short-range.
Later, Gong \etal~\cite{GongFF} derived a different bound that, in this limit, does indeed converge to the Lieb-Robinson bound for short-range interactions.
While we used this bound in \cref{Sec_Framework} to prove Lemma~\ref{LEM_BREAK_D}, we will also show that by \emph{using} this lemma, we can in turn derive a Lieb-Robinson bound for long-range interactions that is stronger than the one in Gong \etal\ \
In fact, our bound produces a tighter effective light cone than even the strongest Lieb-Robinson bound for long-range interactions known previously~\cite{Foss-FeigG}.

Recall that the subject of a Lieb-Robinson bound is the commutator norm
\begin{align}
\mathcal{C}(T,R)\equiv  \Norm{\comm{\Udag{\Lambda}{0,T}O_X\U{\Lambda}{0,T},O_Y}},\label{EQ_CTR}
\end{align}
where $O_X,O_Y$ are two operators supported respectively on two sets $X,Y$ geometrically separated by a distance $R$, and $\U{\Lambda}{0,T}$ is the time-evolution unitary of the full lattice $\Lambda$ under a power-law decaying Hamiltonian, as defined above. 

To compare different bounds, we analyze their effective light cones, which, up to constant prefactors, predict the minimum time it takes for the correlator $\mathcal C(T,R)$ to reach a certain value.
For example, the original Lieb-Robinson bound~\cite{LR} produces a linear light cone 
$T\gtrsim R$ for short-range interactions.
For long-range interactions, Hastings and Koma~\cite{HK} first showed that $\mathcal C(T,R)\leq c {e^{v T}}/{R^\alpha}$ for some ($\alpha$-dependent) constants $c,v$.
By setting $\mathcal C(T,R)$ equal to a constant, the bound gives an effective light cone $T\gtrsim\log R$
in the limit of large $T$ and $R$.
Gong \etal~\cite{GongFF} later achieved a tighter light cone that is linear for short distances and becomes logarithmic only for large $R$. 
Shortly after, Foss-Feig \etal~\cite{Foss-FeigG} derived a bound with a polynomial light cone:
\begin{align}
T \gtrsim R^{\frac{\alpha-2D}{\alpha-D+1}}.\label{EQ_MikeLR}
\end{align}
\Cref{EQ_MikeLR} was the tightest light cone known previously.

In the remainder of this section, we use Lemma~\ref{LEM_BREAK_D} to derive a Lieb-Robinson bound for long-range interactions that produces an effective light cone tighter than the one in Ref.~\cite{Foss-FeigG}, while also using a much more intuitive approach.
In addition, our bound works for all times, unlike the bound in Ref.~\cite{Foss-FeigG}, which applies only in the long-time limit.

\begin{theorem}[Lieb-Robinson bound for long-range interactions]
	\label{TH_LR}
	Suppose $O_X$ is supported on a fixed subset $X$.
	For $\alpha>2D$, we have
	\begin{align}
	\mathcal{C}(T,R)\leq
	\begin{cases}
	c_{\lr} e^\alpha T R^{D-1} \xi_\alpha\left(\frac{R\alpha}{vT}\right), & \text{if } vT\geq \alpha,\\
	\tilde c_{\lr}\left(e^{vT}-1\right) \xi_\alpha(R), & \text{if }
	vT< \alpha.
	\end{cases} \label{EQ_OurLR}
	\end{align}
	Here $R=\dist{X,Y}$ is the distance between the supports of $O_X$ and $O_Y$, $c_{\lr},\tilde c_{\lr},v$ are constants that may depend only on $D$ [defined in \cref{APP_LR_PROOF}], and $\xi_\alpha$ is given by \cref{EQ_xi}.
\end{theorem}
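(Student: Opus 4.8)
The plan is to use \cref{LEM_BREAK_D} to show that the Heisenberg operator $\Udag{\Lambda}{0,T}O_X\U{\Lambda}{0,T}$ is close, in operator norm, to $\tilde U^\dagger O_X\tilde U$ for some unitary $\tilde U$ whose support is a ball around $X$ not reaching $Y$. Once this is done, $\comm{\tilde U^\dagger O_X\tilde U,O_Y}=\tilde U^\dagger\comm{O_X,O_Y}\tilde U=0$ since $X$ and $Y$ are disjoint, so $\mathcal C(T,R)$ is at most $2\Norm{O_Y}$ times the approximation error.

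To construct $\tilde U$, I would cut $[0,T]$ into $M$ equal slices of length $t=T/M$, writing $\U{\Lambda}{0,T}=V_M\cdots V_1$ with $V_j=\U{\Lambda}{(j-1)t,jt}$, and process the slices one at a time starting from the one adjacent to $O_X$. At step $j$ I apply \cref{LEM_BREAK_D} to $V_{M-j+1}$ in the time-reversed orientation (which places the $B$- and $BC$-evolutions adjacent to $O_X$), choosing $A_j$ a ball centred on $X$ that contains the support $A_{j-1}B_{j-1}$ accumulated so far (with $A_0B_0$ just a ball of radius $r_X=\O{1}$ containing the fixed set $X$), $B_j$ a shell of width $\ell$ around $A_j$, and $C_j$ the rest; this gives $\dist{A_j,C_j}=\ell$ and $\Phi(A_j)=\O{(r_X+(j-1)\ell)^{D-1}}$. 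Because $B_j$ and $C_j$ are disjoint from the current accumulated operator — which lives inside $A_j$ — the freshly produced $B_j$- and $B_jC_j$-evolutions commute through it and cancel exactly against their Hermitian conjugates; the surviving piece is the evolution on $A_jB_j$, a ball of radius $r_X+j\ell$. After $M$ steps, by the triangle inequality and unitarity, the error is at most $2\Norm{O_X}\sum_{j=1}^M c_0(e^{vt}-1)\Phi(A_j)\xi_\alpha(\ell)=\O{(e^{vt}-1)\,M\,R^{D-1}\,\xi_\alpha(\ell)}$ provided we impose $r_X+M\ell<R$; then $\tilde U$ is supported on the ball of radius $r_X+M\ell<R$ and hence avoids $Y$.

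It remains to choose $M$ and $\ell$. In the regime $vT\ge\alpha$ I take $M=\lceil vT/\alpha\rceil$, so that $vt\in[\alpha/2,\alpha]$ (hence $e^{vt}-1=\O{e^\alpha}$ and $M=\O{T}$), together with $\ell\asymp R/M\asymp R\alpha/(vT)$; this produces the first case of \cref{EQ_OurLR}. In the regime $vT<\alpha$ a single slice suffices: $M=1$, $\ell$ taken up to $\dist{X,Y}=R$, and $\Phi(A_1)=\O{1}$ since $X$ is fixed, which produces the second case. The hypothesis $\alpha>2D$ enters exactly here: the polynomial part of $\xi_\alpha$ contributes $R^{D-1}\cdot(R/M)^{D+1-\alpha}\cdot M\asymp R^{2D-\alpha}T^{\alpha-D}$, which decays in $R$ at fixed $T$ precisely when $\alpha>2D$; in the limit $\alpha\to\infty$ the exponential term in $\xi_\alpha$ instead dominates and the construction reproduces a linear light cone.

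The main obstacle I anticipate is the bookkeeping in the iterated decomposition: one must verify that the nested balls $A_1\subset A_1B_1\subset A_2\subset A_2B_2\subset\cdots$ can be chosen so that every newly generated $B_j$- and $B_jC_j$-evolution is disjoint from \emph{all} operators accumulated in the earlier steps — so that the cancellations are exact and the per-slice errors add rather than compound — and that the orientation of \cref{LEM_BREAK_D} used indeed places the cancelling factors next to $O_X$ on both sides (this follows from applying the lemma to the time-reversed evolution and then taking an adjoint). A secondary point is tuning $M$, $\ell$, and the free parameter $\gamma$ so that all $\alpha$-dependence is confined to the prefactor $e^\alpha$ and the function $\xi_\alpha$, with the remaining constants $c_{\lr},\tilde c_{\lr},v$ depending only on $D$, and checking that the two regimes match continuously at $vT=\alpha$.
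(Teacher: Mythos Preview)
Your proposal is correct and follows essentially the same approach as the paper: the paper likewise proves an intermediate lemma (\cref{LEM_LR_exist}) constructing $\tilde U$ via $M$ time slices with concentric balls $A_j=\B_{r_0+(j-1)\ell}$ and shells $B_j=\mathcal S_{r_0+(j-1)\ell}$, sums the per-slice errors $c_0 e^{vt}\Phi(A_j)\xi_\alpha(\ell)$, and then optimizes by setting $\ell=R\alpha/(vT)$ when $vT\ge\alpha$ and $\ell=R$, $M=1$ when $vT<\alpha$. The orientation issue you flag is handled in the paper exactly as you suggest, by decomposing $\Udag{\Lambda}{}$ rather than $\U{\Lambda}{}$ so that the $B$- and $BC$-factors sit adjacent to $O_X$ and cancel.
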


Before we prove \Cref{TH_LR}, let us analyze the features of the bound.
Although the general bound in \cref{EQ_OurLR} looks complicated, it can be greatly simplified in some limits of interest.
For example, for finite $\alpha$, in the limit of large $vT>\alpha$ and large $R$ such that $R/(vT)\gg \alpha$, the term algebraically decaying with $R/(vT)$ in $\xi_\alpha(R\alpha/(vT))$ dominates the exponentially decaying one [see also \cref{EQ_xi} and \cref{EQ_xia_gtr}]. 
Therefore, the Lieb-Robinson bound in this limit takes the form:
\begin{align}
\mathcal{C}(T,R)\leq  c_{\lr,\alpha}   
\frac{T^{\alpha-D}}{R^{\alpha-2D}}, 
\end{align}
where $c_{\lr,\alpha}$ is finite and may depend on $\alpha$ [\cref{EQ_clra}].
We can immediately deduce the effective light cone given by our bound for a finite $\alpha$:
\begin{align} 
T \gtrsim R^{\frac{\alpha-2D}{\alpha-D}},\label{EQ_LightCone}
\end{align}
which is tighter than \cref{EQ_MikeLR} (as given by Ref.~\cite{Foss-FeigG}).
In particular, for $\alpha$ close to $2D$, the exponent in \cref{EQ_LightCone} can be almost twice that of Ref.~\cite{Foss-FeigG} (the larger the exponent, the tighter the light cone).

On the other hand, in the limit $\alpha\rightarrow \infty$, $vT$ is finite and therefore always less than $\alpha$.
Hence our bound converges to the short-range bound $\mathcal{C}(T,R)\leq2\tilde c_{lr}e^{vT-\gamma R}$.
We note that in this limit, the exponent of the light cone in \cref{EQ_LightCone} also converges to one, which corresponds to a linear light cone, at a linear convergence rate [see \cref{EQ_converge_rate} for details]. 
These behaviors are naturally expected since a power-law decaying interaction with very large $\alpha$ is essentially a short-range interaction.

As mentioned earlier, we derive \Cref{TH_LR} by constructing a unitary $\tilde U$ such that \emph{(i)} $\tilde U^\dag O_X \tilde U$ approximates $\Udag{\Lambda}{0,T}O_X\U{\Lambda}{0,T}$ and \emph{(ii)} $\tilde U$ commutes with $O_Y$.
We note that $\tilde U$ does not necessarily approximate $\U{\Lambda}{0,T}$.
It then follows from the two requirements that the commutator norm $\mathcal C(T,R)$, defined in \cref{EQ_CTR}, is upper bounded by the error of the approximation in (i). 
\begin{figure}[t]
	\includegraphics[width=0.45\textwidth]{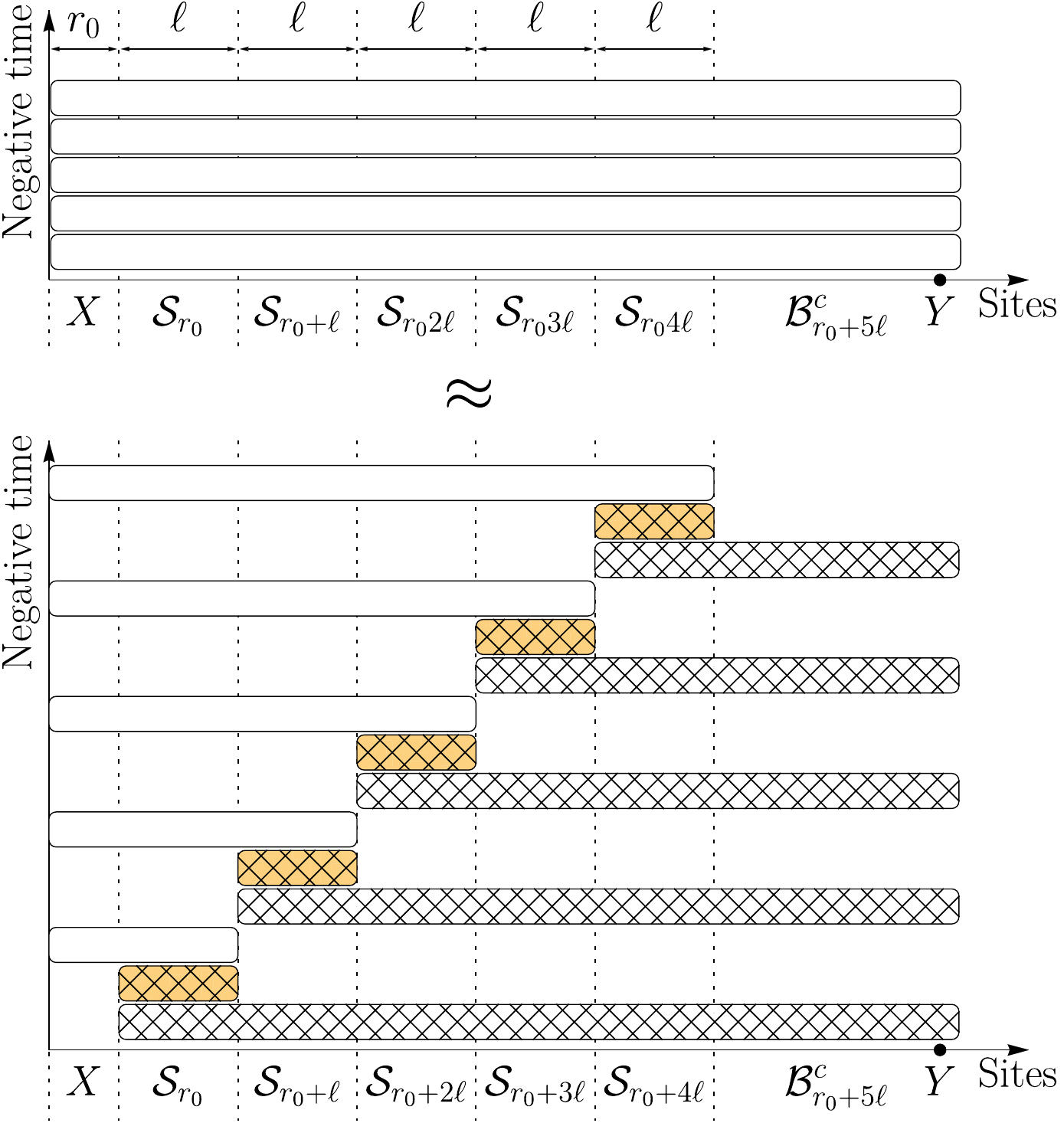}
	\caption{A construction of the unitary $\tilde U$ which results in an improved Lieb-Robinson bound for long-range interactions in Theorem~\ref{TH_LR}.
		The horizontal axes list the sites in each subset.
		Here $\mathcal B_r$ denotes a $D$-ball of radius $r$ centered on $X$, and $\mathcal S_r=\mathcal B_{r+\ell}\setminus\mathcal B_{r}$ a $D$-dimensional shell of inner radius $r$ and outer radius $r+\ell$, for some parameter $\ell$ to be chosen later. (See \cref{FIG_TH-LR-shell} in \Cref{APP_LR_PROOF} for an illustration of the sets.)
		The evolution unitaries are represented by boxes with the same color convention as in Fig.~\ref{FIG_Lem1-demo}.
		We first divide the interval $[0,T]$ into $M=5$ equal time slices (upper panel).
		Note that because we consider $O_X(T)$ in the Heisenberg picture, the vertical axis is therefore backward in time so that the bottom time slice will correspond to the first unitary applied on $O_X$.
		The evolution during each time slice is approximated by three evolutions of subsystems using Lemma~\ref{LEM_BREAK_D} (lower panel).
		The bottom two unitaries have their supports outside $X$ and therefore commute with $O_X$.
		They cancel with their Hermitian conjugates from $\tilde U^\dag$ in $\tilde U^\dag O_X \tilde U$.
		Repeating the argument for higher time slices, we can eliminate some unitaries (hatched boxes) from the construction of $\tilde U$.
		Finally, we are left with $\tilde U$ consisting only of unitaries (white boxes) that are supported entirely on the $D$-ball $\B_{r_0+5\ell}$ of radius $r_0+5\ell$.
		Therefore, $\tilde U$ commutes with $O_Y$, whose support lies in the complement $\B_{r_0+5\ell}^c$ of $\B_{r_0+5\ell}$.
	}
	\label{FIG_TH-LR-demo}
\end{figure}

We also note that the assumption on the norms of the interactions being bounded excludes several physical systems whose local dimensions are unbounded, e.g.\ bosons [see Ref.~\cite{Eisert2009,Junemann2013} for discussions of information propagation and Lieb-Robinson bounds in these systems]. 
However, our Lieb-Robinson bound may still apply if the dynamics of the systems can be restricted to local Hilbert subspaces which are finite-dimensional.
Examples of such situations include trapped ions in the perturbative regime~\cite{Kim2011} and noninteracting bosons~\cite{Deshpande17}.

To construct $\tilde U$, we use \Cref{LEM_BREAK_D} to decompose the unitary $\U{\Lambda}{0,T}$ into unitaries supported on subsystems, each of which either contains $X$ or is disjoint from $X$. 
The unitaries of the latter type can be commuted through $O_X$ to cancel out with their Hermitian conjugates from $\Udag{\Lambda}{0,T}$.
The remaining unitaries form $\tilde U$, which is supported on a smaller subset than $\U{\Lambda}{0,T}$.
In particular, with a suitable decomposition, the support of $\tilde U$ can be made to not contain $Y$, and, therefore, $\tilde U$ commutes with $O_Y$.
The step-by-step construction of the unitary $\tilde U$ has also been briefly described earlier in \cref{Sec_Overview} and in \cref{FIG_TH-LR-demo-more-step}, using the specific case of a one-dimensional system with a finite $\alpha$.
This construction immediately generalizes to higher dimensions and to arbitrary $\alpha$, including the $\alpha\to\infty$ limit.
The construction of $\tilde U$ for arbitrary $D$ is summarized in \cref{FIG_TH-LR-demo}.

We note that there is more than one way to decompose the unitary $\U{\Lambda}{0,T}$ in the construction of $\tilde U$.
Different constructions of $\tilde U$ result in different approximation errors, each of which provides a valid bound on the commutator norm $\mathcal C(T,R)$.
Therefore, the goal is to find a construction of $\tilde U$ with the least approximation error.
In \cref{APP_LR_PROOF}, we present the construction that results in the bound in \Cref{TH_LR}.
Although we have evidence suggesting that the construction is optimal, we do not rule out the existence of a better construction.

\Section{Better performance of digital simulation}
\label{Sec_qu-sim}
In this section, we generalize the algorithm in Ref.~\cite{Haah} to simulating long-range interactions.
In general, the aim of quantum simulation algorithms is to approximate the time evolution unitary $\U{\Lambda}{0,T}$ using the fewest number of primitive, e.g.\ two-qubit, quantum gates.
Here, we show that in addition to the stronger Lieb-Robinson bound presented in the previous section, Lemma~\ref{LEM_BREAK_D} can also be used to perform error analysis for the HHKL algorithm (Ref.~\cite{Haah}) in the case of interactions that decay as a power law,
therefore improving the theoretical gate count of digital quantum simulation for such interactions.

Using the best known rigorous error bounds, simulations based on the first-order Suzuki-Trotter product formula~\cite{Lloyd1073} use $\O{T^2n^6/\epsilon}$ gates to simulate the evolution $\U{\Lambda}{0,T}$ of a time-dependent Hamiltonian on $n$ sites up to a fixed error $\epsilon$.
(In this section, the big $\mathcal O$ is with respect to $n,T,$ and $1/\epsilon$.)
The generalized ($2k$)th-order product formula uses $\O{n^2(Tn^2)^{1+1/(2k)}/\epsilon^{1/2k}}$ quantum gates. 
While this scaling asymptotically approaches $\O{Tn^4}$ as $k\rightarrow\infty$, it suffers from an exponential prefactor of $5^{2k}$~\cite{BerryACS07}.
More advanced algorithms, e.g., those using quantum signal processing (QSP)~\cite{LowC2017} or linear combinations of unitaries (LCU)~\cite{BerryCCKS2015}, can reduce the gate complexity to $\O{Tn^3\log(Tn/\epsilon)}$.
Our error analysis below (\cref{LEM_HHKL_existence}) reveals that, when $\alpha$ is large, the number of quantum gates required by the HHKL algorithm to simulate long-range interactions scales better as a function of the system size than previous algorithms.

The HHKL algorithm itself uses either the QSP algorithm or the LCU algorithm as a subroutine to simulate the dynamics of a subset of the sites for one time step.  Although the QSP algorithm does not handle time-dependent Hamiltonians, LCU can be applied to time-dependent Hamiltonians.
Our results assume that \emph{(i)} the local terms $h_{\veci,\veci}(t)$ have bounded norms for all $\veci\in \Lambda$, and \emph{(ii)} the Hamiltonian $H_\Lambda(t)$ varies slowly and smoothly with time so that $h_{\abs{X}}'\equiv \max_{t}\Norm{\partial H^X_t /\partial t}$ exists and scales at most polynomially with $\abs{X}$ for all subsets $X\subset \Lambda$.  
These restrictions allow portions of the system to be faithfully simulated using LCU (or QSP, for a time-independent Hamiltonian).

\begin{figure}[t]
	\includegraphics[width=0.45\textwidth]{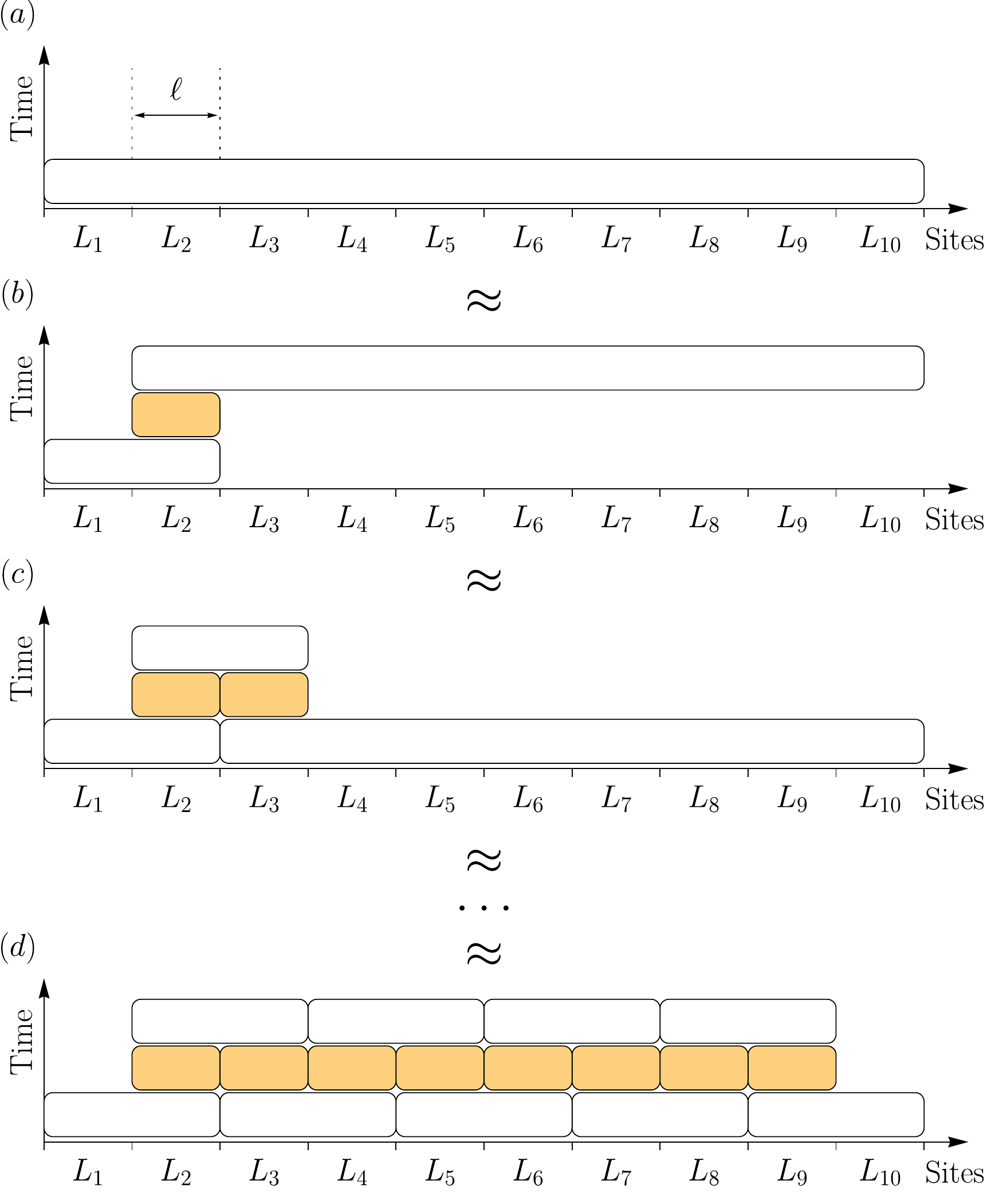}
	\caption{A demonstration of the HHKL decomposition~\cite{Haah} of the evolution of a fixed time interval for a system with $m=10$ blocks, each consisting of $\ell$ sites. 
		As before, each box represents a unitary (white) or its Hermitian conjugate (orange) supported on the covered sites. 
		Using Lemma~\ref{LEM_BREAK_D}, the HHKL decomposition approximates the evolution of the whole system [panel (a)] by three unitaries supported on subsystems [panel (b)]. 
		By applying Lemma~\ref{LEM_BREAK_D} repeatedly [panels (c) and (d)], the evolution of the whole system is decomposed into a series of evolutions of subsystems, each of size at most $2\ell$.
	}
	\label{FIG_ALGO}
\end{figure}

\subsection{HHKL-type algorithm for simulating long-range interactions}

Although Ref.~\cite{Haah} focused on simulating short-range interactions, their (HHKL) algorithm can also be used to simulate long-range interactions. 
Here, we analyze the performance of their algorithm in simulating such systems.
In the HHKL algorithm~\cite{Haah}, the evolution of the whole system is decomposed, using Lemma~\ref{LEM_BREAK_D}, into \emph{elementary} unitaries, each evolving a subsystem of at most $(2\ell)^D$ sites, where $\ell$ is again a length scale to be chosen later.
For a fixed time $t$, the algorithm simply simulates each of these elementary unitaries using one of the existing quantum simulation algorithms.
In particular, we shall use LCU or (for a time-independent Hamiltonian) QSP  due to their logarithmic dependence on the accuracy.

In this section, we assume $\alpha$ is finite and analyze the gate count in the limit of large system size $n\gg \alpha$. 
As a consequence, the block size $\ell$ can also be taken to be much larger than $\alpha$.
For simplicity, we will not keep track of constants that may depend on $\alpha$.
Recall that in this limit, the error bound in \Cref{LEM_BREAK_D} is at most
\begin{align}
\O{\frac{\Phi(A)}{\ell^{\alpha-D-1}}},\label{EQ_error_fixed_a}
\end{align}
where we have assumed $t=\O{1}$.
Using \cref{LEM_BREAK_D}, we obtain the error bound for the first step of the HHKL algorithm, which can be summarized by the following lemma.

\begin{lemma}[HHKL decomposition]
	\label{LEM_HHKL_existence}
	There exists a circuit that approximates $\U{\Lambda}{0,T}$ up to error $\O{{Tn}/{\ell^{\alpha-D}}}$, where $\ell\leq n^{1/D}/2$ is a free parameter. The circuit has depth at most $3^D T$ and consists of $\O{{Tn}/{\ell^D}}$ \emph{elementary} unitaries, each of which evolves a subsystem supported on at most $(2\ell)^D$ sites for time $t = \O{1}$.   
\end{lemma}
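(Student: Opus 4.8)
The plan is to couple a trivial time slicing to the spatial decomposition of \Cref{LEM_BREAK_D}, following the short-range construction of Ref.~\cite{Haah} and carefully bookkeeping the extra long-range truncation error that \Cref{LEM_BREAK_D} introduces. First I would split the evolution into $M=T$ slices of unit length, $\U{\Lambda}{0,T}=\U{\Lambda}{T-1,T}\cdots\U{\Lambda}{1,2}\U{\Lambda}{0,1}$ (rescaling $t=\O{1}$ slightly if $T\notin\mathbb N$). Since the operator norm is unitarily invariant and subadditive under composition of unitaries, it suffices to build, for a single slice $\U{\Lambda}{0,1}$, a depth-$3^D$ circuit of $\O{n/\ell^D}$ elementary block unitaries with error $\O{n/\ell^{\alpha-D}}$; the claimed depth, gate count, and total error then follow by multiplying by $M=T$.

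For one slice I would partition $\Lambda$ into $m=\O{n/\ell^D}$ blocks, each a hypercube of at most $\ell^D$ sites (this needs $\ell\le n^{1/D}/2$ so the partition is nontrivial and leaves room for buffers), and peel the blocks off one lattice direction at a time, in $D$ rounds. In round $j$, \Cref{LEM_BREAK_D} is applied repeatedly with $A$ a hyperrectangular slab\textemdash so the convexity and compactness hypotheses are automatic\textemdash $B$ a buffer of width $\ell=\dist{A,C}$, and $C$ the remainder of $\mathbb R^D$; each application trades one evolution for $\U{AB}{0,1}\Udag{B}{0,1}\U{BC}{0,1}$, and the $\U{BC}{0,1}$ piece is decomposed further. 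Arranged as in Ref.~\cite{Haah}, the applications in a given round sort into three layers of pairwise-disjoint supports, so each round contributes depth $3$; since a round replaces every current unitary by its depth-$3$ sub-circuit, after all $D$ rounds the slice is a depth-$3^D$ circuit whose elementary unitaries are supported on slabs of thickness at most $2\ell$ in every direction, i.e.\ on at most $(2\ell)^D$ sites, and which number $\O{(n^{1/D}/\ell)^D}=\O{n/\ell^D}$. The purely combinatorial/geometric part of this step\textemdash choosing the regions in $D$ dimensions so that $A$ stays convex and $A\cup B\cup C=\mathbb R^D$ at every application, handling blocks that double as buffers, and fitting everything into $3^D$ layers\textemdash is inherited verbatim from Ref.~\cite{Haah}, since locality enters only through \Cref{LEM_BREAK_D}.

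It remains to estimate the error, which is where the long-range nature actually shows up. Each application of \Cref{LEM_BREAK_D} with $t=\O{1}$ costs $c_0(e^{v}-1)\Phi(A)\,\xi_\alpha(\ell)=\O{\Phi(A)/\ell^{\alpha-D-1}}$, using that for $\alpha>D+1$ and $\ell\gg\alpha$ the polynomial term of $\xi_\alpha$ in \cref{EQ_xi} dominates the exponential one [cf.\ \cref{EQ_error_fixed_a}]. So the per-slice error is $\O{\ell^{-(\alpha-D-1)}}$ times the total boundary area of all the cuts. In round $j$ the current pieces number $\O{(n^{1/D}/\ell)^{j-1}}$, each peeling removes a slab whose boundary has area $\O{\ell^{j-1}(n^{1/D})^{D-j}}$, and there are $\O{n^{1/D}/\ell}$ peels per piece, so the round-$j$ boundary area totals $\O{(n^{1/D}/\ell)^{j}\,\ell^{j-1}(n^{1/D})^{D-j}}=\O{n/\ell}$\textemdash the same in every round. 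Summing over the $D=\O{1}$ rounds, one slice incurs error $\O{(n/\ell)\,\ell^{-(\alpha-D-1)}}=\O{n/\ell^{\alpha-D}}$, and the triangle inequality over the $M=T$ slices yields the stated $\O{Tn/\ell^{\alpha-D}}$.

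I expect the main obstacle to be not any individual estimate but the $D$-dimensional bookkeeping of the second step: verifying that the regions $A,B,C$ can be chosen throughout so that \Cref{LEM_BREAK_D} applies (convex, compact $A$; the three regions covering $\mathbb R^D$), that the resulting elementary unitaries really do collapse into only $3^D$ layers, and that the boundary areas of all cuts sum to $\O{n/\ell}$ rather than something larger. All of this is exactly the HHKL construction~\cite{Haah}, which carries over unchanged because the only place locality is used\textemdash \Cref{LEM_BREAK_D}\textemdash already packages the long-range correction into its error term; so the genuinely new work is the truncation-error accounting of the previous paragraph.
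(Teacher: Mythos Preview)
Your proposal is correct and follows the same strategy as the paper: slice time into $M\propto T$ unit intervals, then iterate \cref{LEM_BREAK_D} along each of the $D$ lattice directions to peel off hyperrectangular slabs, with the per-application error read off from \cref{EQ_error_fixed_a} and the depth/geometry inherited from Ref.~\cite{Haah}. The only minor difference is in the $D$-dimensional bookkeeping: the paper's appendix (\cref{SM_Sec_HHKL}) argues that rounds $j>1$ contribute strictly less error than round~$1$, whereas your accounting shows each round contributes the same $\O{n/\ell}$ total boundary area\textemdash either way round~$1$ already yields the claimed $\O{Tn/\ell^{\alpha-D}}$, so the final bound is unchanged.
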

\begin{proof}
	We now demonstrate the proof by constructing the circuit for a one-dimensional lattice (Fig.~\ref{FIG_ALGO}). A generalization of the proof to arbitrary dimension follows the same lines and is presented in \Cref{SM_Sec_HHKL}.
	
	First, we consider $M \propto T$ equal time intervals $0=t_0<t_1<\dots<t_M=T$ such that $t_{j+1}-t_{j}= t=T/M$ is a constant for all $j=0,\dots,M-1$. 
	The simulation of $U^\Lambda_{0,T}$ then naturally decomposes into $M$ consecutive simulations of $U^\Lambda_{t_j,t_{j+1}}$.
	We then divide the system into $m$ consecutive disjoint blocks, each of size $\ell=n/m$ (Fig.~\ref{FIG_ALGO}).
	Denote by $L_k~(k=1,\dots,m)$ the set of sites in the $k$-th block.
	Using Lemma~\ref{LEM_BREAK_D}, we can approximate
	\begin{align}
	\U{\Lambda}{0,t}&\approx \U{L_1\cup L_2}{0,t}\left(\U{L_2}{0,t}\right)^\dag\U{L_2\cup L_3\cup\dots\cup L_m}{0,t}.
	\end{align}
	This approximation can be visualized using the top two panels of Fig.~\ref{FIG_ALGO}.
	Repeated application of Lemma~\ref{LEM_BREAK_D} yields the desired circuit (bottom panel of Fig.\ 4), with each elementary unitary evolving at most $2\ell$ sites for time $t$.
	
	To obtain the error estimate in Lemma~\ref{LEM_HHKL_existence}, we count the number of times Lemma~\ref{LEM_BREAK_D} is used in our approximation. 
	In each of the $M$ time slices, we use the lemma $\O{m}=\O{{n}/{\ell}}$ times, each of which contributes an error of $\O{{1}/{\ell^{\alpha-2}}}$ [see \cref{EQ_error_fixed_a} with $\Phi=\O{1}$ in one dimension].
	Therefore, with $M\propto T$, the error of using the constructed circuit to simulate $\U{\Lambda}{0,T}$ is 
	\begin{align}
	\O{M\frac{n}{\ell}\frac{1}{\ell^{\alpha-2}}} = \O{\frac{Tn}{\ell^{\alpha-1}}},\label{EQ_error_scaling_ell}
	\end{align}
	as given in Lemma~\ref{LEM_HHKL_existence}.  
\end{proof}
The error bound for the approximation in \cref{LEM_HHKL_existence} leads to an upper bound on the gate complexity of digital quantum simulation, as stated in the following theorem.
\begin{theorem}
	\label{TH_Statement}
	For $\alpha>2D$, there exists a quantum algorithm for simulating $\U{\Lambda}{0,T}$ up to error at most $\epsilon$ with gate complexity
	\begin{align}
	G_D = \O{Tn\left(\frac{Tn}{\epsilon}\right)^{\frac{2D}{\alpha-D}}\log\frac{Tn}{\epsilon}}.\label{EQ_GD}
	\end{align}
\end{theorem}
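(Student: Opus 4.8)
The plan is to feed the circuit produced by \cref{LEM_HHKL_existence} into a black-box Hamiltonian simulation subroutine\textemdash LCU for the (possibly time-dependent) restricted Hamiltonians, or QSP in the time-independent case, exploiting that both have only logarithmic dependence on the precision\textemdash applied to each elementary unitary, and then to optimize the free length scale $\ell$. First I would pick $\ell$ so that the decomposition error $\O{Tn/\ell^{\alpha-D}}$ from \cref{LEM_HHKL_existence} is at most $\epsilon/2$; this forces $\ell=\Theta\big((Tn/\epsilon)^{1/(\alpha-D)}\big)$. The hypothesis $\alpha>2D$ gives $\alpha-D>D$, which is exactly what makes this choice compatible with the constraint $\ell\le n^{1/D}/2$ of \cref{LEM_HHKL_existence} in the large-$n$ regime (and, incidentally, keeps the exponent $2D/(\alpha-D)$ below $2$). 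The circuit then consists of $N_{\mathrm{el}}=\O{Tn/\ell^D}$ elementary unitaries, each an evolution for time $t=\O{1}$ of the Hamiltonian restricted to a subsystem of at most $(2\ell)^D$ sites.

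Next I would bound the gate cost of simulating one elementary unitary. The crucial observation is that, although a block of $(2\ell)^D$ sites carries $\O{\ell^{2D}}$ two-body interaction terms, the relevant one-norm $\sum_{\veci,\vecj}\Norm{h_{\veci,\vecj}}$ of the restricted Hamiltonian is only $\O{\ell^D}$, since for $\alpha>D$ each site contributes a convergent tail $\sum_{\vecj}1/\Norm{\veci-\vecj}^\alpha=\O{1}$. With $t=\O{1}$ (and invoking the assumption that the time dependence is slow with $h'_{\abs{X}}$ polynomially bounded, so that LCU is applicable), QSP/LCU simulates one elementary unitary using $\O{\ell^D}$ queries to a block-encoding of the restricted Hamiltonian, up to a polylogarithmic overhead in the per-step error $\epsilon'$; each such query costs $\O{\ell^{2D}\polylog(\ell)}$ two-qubit gates, dominated by preparing a superposition over the $\O{\ell^{2D}}$ interaction terms and applying the associated controlled operations. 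Hence one elementary unitary costs $\O{\ell^{3D}\polylog(\ell/\epsilon')}$ gates.

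Finally I would do the error bookkeeping: distribute the remaining $\epsilon/2$ uniformly over the $N_{\mathrm{el}}$ elementary simulations, so $\epsilon'=\Theta\big(\epsilon\,\ell^D/(Tn)\big)$, and telescope the replacements one elementary unitary at a time, so that by the triangle inequality the total simulation error is at most $N_{\mathrm{el}}\,\epsilon'+\O{Tn/\ell^{\alpha-D}}\le\epsilon$. Multiplying the per-unitary gate cost by $N_{\mathrm{el}}=\O{Tn/\ell^D}$ gives a total of $\O{Tn\,\ell^{2D}\polylog(\ell/\epsilon')}$ gates; since $\ell$ is polynomially bounded in $Tn/\epsilon$, the polylogarithmic factor collapses to $\log(Tn/\epsilon)$, and substituting $\ell^{2D}=\Theta\big((Tn/\epsilon)^{2D/(\alpha-D)}\big)$ yields $G_D$ as claimed in \cref{EQ_GD}.

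The step I expect to be the main obstacle is the cost estimate for the elementary unitaries: one must carefully separate the effective Hamiltonian norm ($\O{\ell^D}$, using the convergence of the power-law sum for $\alpha>D$) from the number of terms ($\O{\ell^{2D}}$) feeding the block-encoding, because using the crude norm bound $\O{\ell^{2D}}$ instead would degrade the exponent in $G_D$. A secondary technical point is verifying that $\alpha>2D$, rather than the weaker $\alpha>D+1$ that suffices for \cref{LEM_BREAK_D}, is precisely the condition under which the optimized $\ell$ is feasible ($\ell\le n^{1/D}/2$) and the resulting bound is a genuine polynomial improvement; this reduces to comparing $\ell^{2D}$ with $n^2$ at the crossover.
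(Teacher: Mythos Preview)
Your proposal is correct and follows essentially the same route as the paper: invoke \cref{LEM_HHKL_existence}, simulate each of the $\O{Tn/\ell^D}$ elementary unitaries with LCU/QSP at accuracy $\Theta(\epsilon\ell^D/(Tn))$ and gate cost $\O{\ell^{3D}\log(Tn/\epsilon)}$, then set $\ell\propto(Tn/\epsilon)^{1/(\alpha-D)}$ and multiply. Your explanation of why the per-block cost is $\O{\ell^{3D}}$---separating the $\O{\ell^D}$ one-norm of the restricted Hamiltonian from the $\O{\ell^{2D}}$ term count entering the block encoding---is in fact more explicit than the paper's, which simply quotes the LCU gate count and the assumption on $h'_{\ell^D}$.
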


This gate complexity can be achieved by applying the HHKL algorithm~\cite{Haah} for long-range interactions, as described above.
First, the evolution of the whole system $\U{\Lambda}{0,T}$ is approximated by $\O{{Tn}/{\ell^D}}$ elementary unitaries as provided in Lemma~\ref{LEM_HHKL_existence}.
Each of these elementary unitaries is then simulated using one of the existing algorithms, e.g., LCU, with  error that we require to be at most ${\epsilon \ell^D}/{Tn}$.
If the Hamiltonian is time-independent, one can also use the QSP algorithm to simulate the elementary unitaries.

In the decomposition of the evolution, the accuracy of the approximation can be improved by increasing the block size $\ell$. 
By Lemma~\ref{LEM_HHKL_existence}, to achieve an overall error at most $\epsilon$, we need
\begin{align}
\ell \propto \left(\frac{Tn}{\epsilon}\right)^{\frac{1}{\alpha-D}}. \label{EQ_choosel}
\end{align}
When simulating the elementary unitaries, since each is an evolution of at most $(2\ell)^D$ sites for time $t=\O{1}$, the LCU algorithm with error at most ${\epsilon \ell^D}/{Tn}$ uses $\O{\ell^{3D}\log\left(\frac{Tn}{\epsilon}h_{\ell^D}'\right)}$ two-qubit gates~\cite{BerryACS07}.
Recall that we assume $h_{\ell^D}'$ scales at most polynomially with $\ell^{D}$.
With the block size $\ell$ from Eq.~\eqref{EQ_choosel}, we find the total gate complexity of simulating the $\O{{Tn}/{\ell^D}}$ elementary unitaries is
\begin{align}
G_D&=\O{\frac{Tn}{\ell^D}~\ell^{3D}\log\left(\frac{Tn}{\epsilon}h_{\ell^D}'\right)}\\
&=\O{Tn\left(\frac{Tn}{\epsilon}\right)^{\frac{2D}{\alpha-D}}\log{\frac{Tn}{\epsilon}}}.\label{EQ_final_count}
\end{align}

The scaling of $G_D$ as a function of the system size $n$ is significantly better than existing algorithms for large $\alpha$. In particular, at $T = n$, this HHKL algorithm for long-range interactions requires only $\O{n^{2+\frac{4D}{\alpha-D}}\log n}$ gates, while algorithms such as QSP or LCU use $\O{n^4\log n}$ gates or more. 
Therefore, the algorithm provides an improvement for $\alpha> 3D$.
However, the gate complexity of the algorithm depends polynomially on $1/\epsilon$, in contrast to the logarithmic dependence achieved by QSP and LCU, and by the HHKL algorithm for systems with short-range interactions.
While this $\text{poly}(1/\epsilon)$ scaling is undesirable, in practice, the total error of the simulation is often set to a fixed constant (for example, see Ref.~\cite{Babbush15}) and effectively the dependence of $\epsilon$ only contributes a prefactor to the gate complexity of the algorithm.

As an example, in \cref{FIG_actual-gate-count}, we estimate the actual gate count of the HHKL algorithm in simulating a Heisenberg chain [\cref{EQ_Heisenberg}] and compare it with the gate count of the QSP algorithm (up to the same error tolerance).
Because of the $\textrm{poly}(1/\epsilon)$ overhead, the HHKL algorithm based on Lieb-Robinson bounds uses more quantum gates for simulating small systems, but eventually outperforms the QSP algorithm when the system size $n$ is large.

\begin{figure}[t]
\includegraphics[width=0.45\textwidth]{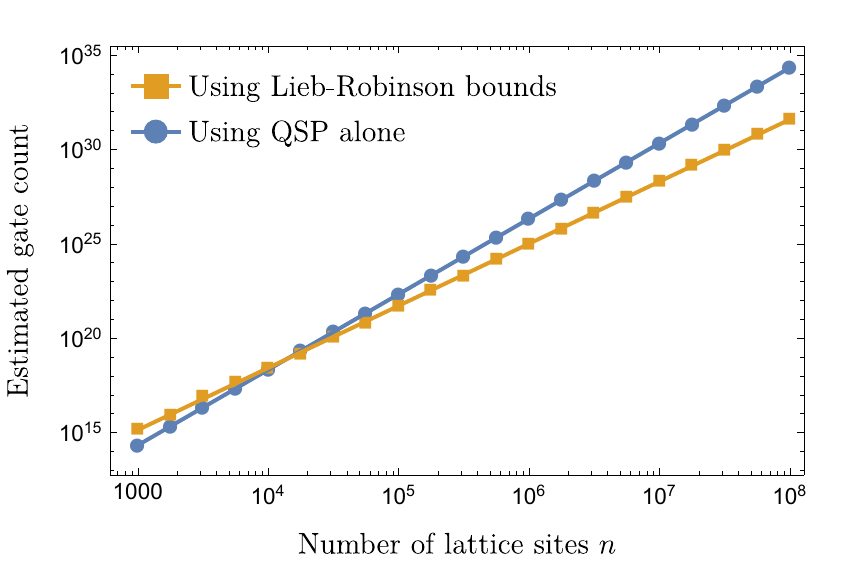}
\caption{The gate count for simulating the dynamics of a one-dimensional Heisenberg chain [\cref{EQ_Heisenberg}] of length $n$, with $\alpha = 4, T=n,$ and $ \epsilon = 10^{-3}$. 
We compare the gate count of the HHKL algorithm (orange square) to the QSP algorithm (blue circle). 
Note that the HHKL algorithm based on Lieb-Robinson bounds also uses the QSP algorithm as a subroutine.
We obtain the scatter points using the approach described in \cref{APP_gate_count} and fit them to a power-law model (solid lines).
The asymptotic scalings of the gate count obtained from the power-law fits ($n^{3.29}$ for HHKL, $n^{4.00}$ for QSP) agree well with our theoretical predictions (see \cref{TAB_gate_count}). 
}
\label{FIG_actual-gate-count}
\end{figure}

It is also worth noting that, in the limit $\alpha\rightarrow\infty$, the gate complexity becomes $\O{Tn\log\left({Tn}/{\epsilon}\right)}$, which coincides (up to a polylogarithmic factor) with the result for short-range interactions in Ref.~\cite{Haah}.
This behavior is expected, given that a power-law decaying interaction with $\alpha\rightarrow\infty$ is essentially a nearest-neighbor interaction.
However, we caution readers that at the beginning of this section, we have assumed that $\alpha$ is finite so that $n\gg \alpha$. Hence, the gate count in \cref{EQ_final_count} is technically not valid in the limit $\alpha\to\infty$.
Nevertheless, the error bound in \Cref{LEM_BREAK_D} reproduces the estimate for short-range interactions in Ref.~\cite{Haah}, and therefore, repeating the argument of this section in the limit $\alpha\to\infty$ should also reproduce the gate count for simulating short-range interactions in Ref.~\cite{Haah}.


\subsection{Numerical evidence of potential improvement}
\label{Subsec_num}
Up to now, we have seen that Lieb-Robinson bounds can improve the error bounds of quantum simulation algorithms, as demonstrated by the HHKL algorithm.
We now provide numerical evidence hinting at the possibility of further improving the error bounds.

Although the HHKL algorithm outperforms previous ones when $\alpha>3D$, it remains an open question whether there is a faster algorithm for simulating long-range interactions.  
We also note that the gate complexities are only theoretical upper bounds, and these algorithms may actually perform better in practice~\cite{ChildsMNRS2017}.

As an example, we compute the empirical gate count of a Suzuki-Trotter product formula simulation of a one-dimensional long-range interacting Heisenberg model
\begin{equation}
H=\sum_{i=1}^{n-1}\sum_{j=i+1}^{n} \frac{1}{|i-j|^4}\vec \sigma_i \cdot \vec \sigma_{j}+\sum_{i=1}^{n}B_i \sigma_i^z, \label{EQ_Heisenberg}
\end{equation}
where $B_j \in [-1,1]$ are chosen uniformly at random and $\vec \sigma_j = (\sigma^x_j,\sigma^y_j,\sigma^z_j)$ denotes the vector of Pauli matrices on the qubit $j$. 
Specifically, we consider a simulation using the fourth-order product formula (PF4).
We use a classical simulation to determine the algorithm's performance for systems of size $n = 4$ to $n = 12$ for time $T = n$, and extrapolate to larger systems.
For each $n$, we search for the minimum number of gates for which the simulation error is at most $\varepsilon=10^{-3}$. 
We plot in \cref{Sec_num} this empirical gate count, which appears to 
scale only as $\O{n^{3.64}}$ with the system size $n$.
We list in \cref{TAB_gate_count} the gate counts of several popular algorithms for comparison.
The \emph{theoretical} gate complexity of PF4 is $\O{n^{5.75}}$~\cite{Lloyd1073}, while the QSP and LCU algorithms both have complexity $\O{n^4\log n}$.
These numerics show that the PF4 algorithm for simulating long-range interacting systems performs better in practice than theoretically estimated; in fact, it even performs almost as well as the HHKL algorithm based on Lieb-Robinson bounds [which scales as $\O{n^{3.33}\log n}$ by our earlier analysis]. 
Whether other quantum simulation algorithms, including the HHKL algorithm, can perform better than suggested by the existing bounds remains an important open question.

\begin{table}[t]
	\begin{tabular}{l l l}
		\toprule 
		Algorithm & Scaling with $n=T$ & Scaling with $\epsilon$  \\ 
		\colrule
		Empirical PF4 & $\O{n^{3.64}}$ &--- \\ 
		Our HHKL bound & $\O{n^{3.33}\log n}$ &$\O{\log({1}/{\epsilon})/{\epsilon^{0.67}}}$ \\ 
		PF4 bound~\cite{BerryACS07}& $\O{n^{5.75}}$ & $\O{{1}/{\epsilon^{0.25}}}$ \\  
		QSP bound~\cite{LowC2017} & $\O{n^4\log n}$ & $\O{\log({1}/{\epsilon})}$ \\ 
		LCU bound~\cite{BerryCCKS2015}& $\O{n^4 \log n}$ & $\O{\log({{1}/{\epsilon}})}$ \\
		\botrule 
	\end{tabular} 
	\caption{A comparison between the gate complexity of several quantum simulation algorithms for simulating one-dimensional power-law systems at $T=n$ and $\alpha=4$.
		Our analysis shows that the HHKL algorithm performs at least as well as the empirical gate count of PF4, while having a similar $\text{poly}(1/\epsilon)$ scaling with the error $\epsilon$.  
		It is not known whether the empirical gate count of PF4 can scale with $\epsilon$ better than suggested by the best proven bound (the third row).
	}
	\label{TAB_gate_count}
\end{table}
\section{Conclusion \& outlook}
\label{Sec_Outlook}
To conclude, we derived an improved bound on how quickly quantum information propagates in systems evolving under long-range interactions. 
The bound applies to power-law interactions with $\alpha>2D$, such as dipole-dipole interactions in 1D (often realizable with nitrogen-vacancy centers~\cite{Maze2011} or polar molecules~\cite{Yan2013}), trapped ions in 1D~\cite{Britton2012,Kim2011}, and van-der-Waals-type interactions between Rydberg atoms~\cite{Saffman10} in either 1D or 2D\@. 
For finite $\alpha>2D$, our Lieb-Robinson bound gives a tighter light cone than previously known bounds\textemdash including the one used in the proof of Lemma~\ref{LEM_BREAK_D}.
As of yet, we are not aware of any physical systems that saturate the Lieb-Robinson bounds for power-law interactions, including the new bound.
In the limit $\alpha \rightarrow \infty$, our bound asymptotically approaches the exponentially decaying bound for short-range interactions. 
Our bound gives a linear light cone only in this limit, however, and it remains an open question whether there exists a stronger bound with a critical $\alpha_c$ such that the light cone is exactly linear for $\alpha \geq \alpha_c$~\cite{Luitz2019}.
Currently, there are no known methods for quantum information transfer that are faster than linear for $\alpha \geq D+1$. 
It is possible, therefore, that a stronger bound exists with a finite $\alpha_c \geq D+1$.
It is our hope that the present work, as well as the techniques that we use, will help motivate the search for such stronger bounds.

Our technique immediately extends the HHKL algorithm in Ref.~\cite{Haah} to the digital quantum simulation of the above systems.
Our error bounds indicate that the gate complexity of the algorithm is better than that of other state-of-the-art simulation algorithms when $\alpha$ is sufficiently large ($\alpha>3D$), and matches that of the short-range algorithm when $\alpha \rightarrow \infty$. 

However, the empirical scaling of other algorithms\textemdash such as product formulas\textemdash indicates that this gate complexity may only be a loose upper bound to the true quantum complexity of the problem. 
While a matching lower bound for the gate complexity of the HHKL algorithm is provided in Ref.~\cite{Haah} for Hamiltonians with short-range interactions, we do not know of any techniques that could provide a corresponding bound for long-range interactions.
In addition to improving the quantum gate complexity, our results may also aid in the design of better classical algorithms for simulating long-range interacting quantum systems.
In particular, while we still expect the classical gate complexity to be exponential in the simulation time, there may be room for a polynomial improvement. 

While the use of Lieb-Robinson bounds to improve the performance of quantum algorithms is a natural extension of Haah et al., the opposite direction\textemdash using quantum algorithms to improve Lieb-Robinson bounds\textemdash is new.
The connection from quantum simulation algorithms to Lieb-Robinson bounds that we have established opens another avenue for the condensed matter and atomic/molecular/optical physics communities to potentially benefit from future advances in quantum algorithms.
In addition to proving a stronger Lieb-Robinson bound, the tools we developed may help to answer other questions regarding both short-range and long-range interacting systems. 
Using the same unitary construction as \Cref{TH_LR}, we can generalize the bounds on connected correlators~\cite{Bravyi06,Tran17} to long-range interacting systems.
Our results can also provide a framework for proving tighter bounds on higher-order commutators, such as out-of-time-order correlators~\cite{Larkin1969,kitaev_soft_2018}.
Previous methods used to derive Lieb-Robinson bounds\textemdash due to their use of the triangle inequality early in their proofs\textemdash have not been able to capture the nuances in the growth of such correlators.  
In addition to the more intuitive proof of the Lieb-Robinson bounds, our framework can be used to provide an alternative, simpler proof of the classical complexity of the boson-sampling problem~\cite{Aaronson2011}, which generalizes the result in Ref.~\cite{Deshpande17} to long-range interactions and also to more general Hamiltonians with arbitrary local interactions~\cite{Boson-sampling-TBP}.
By taking advantage of the unitary decomposition in \cref{LEM_BREAK_D}, we obtain a longer time interval within which the sampler in Ref.~\cite{Deshpande17} is efficient~\cite{Boson-sampling-TBP}.
Moreover, by generalizing from two-body to many-body interactions, our technique may find applications in systems whose Hamiltonians include interaction terms between three or more sites, e.g.\ many-body localized systems in the $l$-bit basis~\cite{Serbyn14}.

\textit{Note added:}
Shortly after we submitted our paper to arXiv,  Else~et~al.~\cite{Else18} posted their work on a different Lieb-Robinson bound for power-law decaying interactions. 
For Hamiltonians consisting of at most two-body interactions, the bound in Ref.~\cite{Else18} and our bound both apply in the same regime, $\alpha>2D$~\footnote{
Note that there is a difference between the definition of the exponent~$\alpha$ in Ref.~\cite{Else18} and in this paper. 
For two-body Hamiltonians, the exponent~$\alpha$ in Ref.~\cite{Else18} is equal to our $\alpha$ minus the dimension $D$}. 
Within this regime, our bound results in a strictly tighter light cone than Ref.~\cite{Else18}.
However, the bound in Ref.~\cite{Else18} also applies to Hamiltonians consisting of $k$-body interactions, for any integer $k$.
Generalizing our framework to cover such $k$-body interactions would be an interesting future direction.

\begin{acknowledgments}
	We thank G. H. Low, A. Deshpande, P. Titum, T. Zhou, and Z.-X. Gong for helpful discussions. 
	MCT, AYG, ZE and AVG acknowledge funding from the U.S.\ Department of Energy ASCR Quantum Testbed Pathfinder program (Award No. DE-SC0019040), ARO MURI, ARO, NSF Ideas Lab on Quantum Computing, ARL CDQI, NSF PFC at JQI, AFOSR, and the U.S. Department of Energy BES Materials and Chemical Sciences Research for Quantum Information Science program (Award No. DE-SC0019449). 
	AMC and YS acknowledge funding from ARO MURI, CIFAR, NSF, and the U.S.\ Department of Energy, Office of Science, Office of Advanced Scientific Computing Research, Quantum Algorithms Teams and Quantum Testbed Pathfinder programs (Award No. DE-SC0019040).
	MCT is supported in part by the NSF Grant No.~NSF PHY-1748958 and the Heising-Simons Foundation.
	AYG is supported by the NSF Graduate Research Fellowship Program under Grant No.\ DGE 1322106.
    JRG is supported by the NIST NRC Research Postdoctoral Associateship Award and performed his work in part at the Aspen Center for Physics, which is supported by National Science Foundation grant PHY-1607611.
    ZE is supported in part by the ARCS Foundation.
\end{acknowledgments}

\appendix 
\crefalias{section}{appsec}
\crefalias{subsection}{appsec}


\Section{Evaluations of the sum in Lemma~\ref{LEM_BREAK_D}}
\label{APP_LEM_1_PROOF}
In this section, we shall show
how we bound $\dtrunc$ from \cref{EQ_sumAC} (\Cref{SM_Subsec_dtrunc}) and $\doverlap$ from \cref{EQ_doverlap} (\Cref{SM_Subsec_doverlap}) in the proof of Lemma~\ref{LEM_BREAK_D}.

\subsection{Evaluation of $\dtrunc$}
\label{SM_Subsec_dtrunc}
In this subsection, we provide explicit calculations of $\dtrunc$ in Eq.~\eqref{EQ_sumAC}.
Recall that $\ell = \dist{A,C}$ is the shortest distance between any two points in $A$ and $C$.
Therefore, $\norm{\vec a-\vec c}$ is always greater than $\ell$.
For each $\vec a\in A$, let $\ell_{\vec a}=\dist{\vec a, C}$ be the minimum distance from $\vec a$ to the set $C$ and $C_{\vec a}=\left\{\veci\in \Lambda: \dist{\vec a,\veci}\geq \ell_{\vec a} \right\}$.
Clearly, $C$ is a subset of $C_{\vec a}$.
Therefore, 
\begin{align}
\norm{\dtrunc} &= \norm{H_{A:C}}\leq \sum_{\vec a\in A}\sum_{\vec c \in C} \frac{1}{\norm{\vec a-\vec c}^{\alpha}}\\
&\leq \sum_{\vec a\in A}\sum_{\vec c \in C_{\vec a}} \frac{1}{\norm{\vec a-\vec c}^{\alpha}}
=\sum_{\vec a\in A}\sum_{\substack{\vec r\\\norm{\vec r}\geq \ell_{\vec a}}} \frac{1}{\norm{\vec r}^{\alpha}}\label{EQ_sumAC_1}\\
&\leq \lambda_1\sum_{\vec a\in A}\frac{1}{(\ell_{\vec a}-\sqrt{D})^{\alpha-D}}\label{EQ_sumveca},
\end{align} 
where $\lambda_1$ is a constant independent of $\vec a$ and the sum over $\vec r$ is bounded using Lemma~\ref{LEM_SIMPLE_SUM} in \Cref{Sec_math}.

Next, to evaluate the sum over $\vec a$, we parameterize the sites in the set $ A$ by their distance to its boundary $\partial A$. 
Note that by assumption the interior of $A$ is non-empty, so that $A \neq \partial A$.
Roughly speaking, there will be at most $\O{\Phi( A)}$ sites whose distances to the boundary $\partial A$ is between $\ell$ and $\ell+\mu$, for each $\mu = 0,1,\dots$, where $\Phi( A)$ is the boundary area of $ A$.
Therefore, we have (see \Cref{LEM_PARA} in \Cref{Subsec_para} for a rigorous proof) 
\begin{align}
\norm{\dtrunc }
&\leq 2\eta\lambda_1 \Phi(A)\sum_{\mu=0}^\infty \frac{1}{(\ell+\mu-\sqrt{D})^{\alpha-D}}\\
&\leq 2\eta\lambda_1\lambda_{2}2^{\alpha-D} \frac{\Phi( A)}{\ell^{\alpha-D-1}}
=c_{\tr}2^\alpha \frac{\Phi( A)}{\ell^{\alpha-D-1}},\label{EQ_ctr_def}
\end{align}
for $\ell>2\sqrt{D}$, 
where $\lambda_{2}$ is a constant that arises after using \cref{LEM_SIMPLE_SUM} to bound the sum, and the factor $2^{\alpha-D}$ is because we lower bound $\ell-\sqrt D\geq \ell/2$ to simplify the expression.
The constants are later absorbed into the definition of $c_{\tr}$.
\subsection{Evaluation of $\doverlap$}
\label{SM_Subsec_doverlap}
In this section, we show how we bound $\doverlap$ from \cref{EQ_doverlap} in the proof of \Cref{LEM_BREAK_D}.
To estimate $\doverlap$, we use the following lemma, which generalizes a similar lemma in Ref.~\cite{Haah} to arbitrary, time-dependent Hamiltonians.
\begin{lemma}
	\label{APP_LEM_EVO_APPROX}
	Let $\Omega\subset \Lambda$ be a subset of sites. 
	Let $H_\Omega(t)=\sum_{i,j\in\Omega}h_{ij}(t)$ be the terms of $H_\Lambda(t)$ supported entirely on $\Omega$.
	Let $O_X(\tau)$ be an observable supported on a subset $X$ at a fixed time $\tau$.
	We have:
	\begin{align}
	&\Norm{\Udag{\Lambda}{0,t}O_X(\tau)\U{\Lambda}{0,t}-\Udag{\Omega}{0,t}O_X(\tau)\U{\Omega}{0,t}}\nonumber\\
	&\leq \int_0^t ds~ \Norm{\comm{\Udag{\Omega}{s,t}O_X(\tau) \U{\Omega}{s,t},H_\Lambda(s)-H_\Omega(s)}},
	\end{align}
	where $\U{\Lambda}{0,t} =\mathcal{T}\exp\left(-i\int_0^t H_\Lambda(s)ds\right)$.
\end{lemma}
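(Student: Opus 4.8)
The plan is a standard interpolation (hybrid) argument of the kind already used in Refs.~\cite{Osborne06,Haah}: I would build a one-parameter family of operators that deforms the full-lattice Heisenberg operator $\Udag{\Lambda}{0,t}O_X(\tau)\U{\Lambda}{0,t}$ continuously into the $\Omega$-restricted one $\Udag{\Omega}{0,t}O_X(\tau)\U{\Omega}{0,t}$, control its derivative, and integrate. Concretely, set, for $s\in[0,t]$,
\begin{align}
\Psi(s)\equiv\Udag{\Lambda}{0,s}\,\Udag{\Omega}{s,t}\,O_X(\tau)\,\U{\Omega}{s,t}\,\U{\Lambda}{0,s}.
\end{align}
Since $\U{\Lambda}{0,0}=I$ and $\U{\Omega}{t,t}=I$, one has $\Psi(0)=\Udag{\Omega}{0,t}O_X(\tau)\U{\Omega}{0,t}$ and $\Psi(t)=\Udag{\Lambda}{0,t}O_X(\tau)\U{\Lambda}{0,t}$, so the left-hand side of the claim equals $\Norm{\Psi(t)-\Psi(0)}=\Norm{\int_0^t (d\Psi/ds)\,ds}$.

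The heart of the argument is the derivative computation. I would use $\frac{d}{ds}\U{\Lambda}{0,s}=-iH_\Lambda(s)\U{\Lambda}{0,s}$ together with $\frac{d}{ds}\U{\Omega}{s,t}=i\,\U{\Omega}{s,t}H_\Omega(s)$ — the latter obtained most cleanly by writing $\U{\Omega}{s,t}=\U{\Omega}{0,t}\Udag{\Omega}{0,s}$ and differentiating the lower endpoint, which also pins down the sign. Applying the product rule to $\Psi$, the $H_\Omega(s)$ pieces generated by conjugating $O_X(\tau)$ partially cancel the $H_\Lambda(s)$ pieces generated by the outer factors $\U{\Lambda}{0,s}$, leaving only the terms straddling $\Omega$:
\begin{align}
\frac{d\Psi}{ds}=i\,\Udag{\Lambda}{0,s}\,\comm{H_\Lambda(s)-H_\Omega(s),\;\Udag{\Omega}{s,t}O_X(\tau)\U{\Omega}{s,t}}\,\U{\Lambda}{0,s}.
\end{align}
Integrating from $0$ to $t$, moving the norm through the integral, and using unitary invariance of the operator norm to discard the conjugation by $\U{\Lambda}{0,s}$ yields exactly the stated inequality (the overall sign and the orientation of the commutator are immaterial under $\Norm{\cdot}$). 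Differentiability of $\Psi$ is justified either by the regularity of $H_\Lambda(t)$ already assumed elsewhere in the paper, or simply because each factor is a norm-convergent Dyson series in a bounded generator.

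There is no serious obstacle; this is a fundamental-theorem-of-calculus estimate rather than a hard lemma. The only point demanding care is the sign and ordering convention when differentiating the time-ordered exponential $\U{\Omega}{s,t}$ with respect to its \emph{lower} limit $s$, which the rewriting $\U{\Omega}{s,t}=\U{\Omega}{0,t}\Udag{\Omega}{0,s}$ handles transparently. It is worth noting in passing that $H_\Lambda(s)-H_\Omega(s)=\sum_{(i,j)\not\subseteq\Omega}h_{ij}(s)$ is precisely the sum of interaction terms crossing the boundary of $\Omega$; this is what makes the right-hand side small once $O_X(\tau)$ is localized well inside $\Omega$, and it is where the Lieb-Robinson bound of Ref.~\cite{GongFF} will be invoked in the subsequent estimate of $\doverlap$.
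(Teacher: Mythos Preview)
Your proof is correct and is essentially the same argument as the paper's: the paper phrases it via the interaction picture, defining $V_I(s)=\Udag{\Omega}{0,s}V(s)\U{\Omega}{0,s}$ and $U_I(s)=\Udag{\Omega}{0,s}\U{\Lambda}{0,s}$ and differentiating $U_I^\dag(s)\Udag{\Omega}{0,t}O_X(\tau)\U{\Omega}{0,t}U_I(s)$, but a one-line check shows this is exactly your $\Psi(s)$. Your direct interpolation avoids naming the interaction-picture objects and is arguably cleaner, but the derivative computation, the cancellation leaving $H_\Lambda(s)-H_\Omega(s)$, and the final unitary-invariance step are identical.
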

\begin{proof}
	To prove the lemma, we shall move into the interaction picture of $H_\Omega(t)$ and treat $V(t)\equiv H_\Lambda(t)-H_{\Omega}(t)$ as a perturbation.
	Let $V_I(t) = \Udag{\Omega}{0,t}V(t)\U{\Omega}{0,t}$ and $U_I(t) = \mathcal T \exp\left(-i\int_0^t V_I(s)ds\right)$ be respectively the Hamiltonian and the evolution operator in the interaction picture.
	We have:
	\begin{align}
	&\Norm{\Udag{\Lambda}{0,t}O_X(\tau)\U{\Lambda}{0,t} - \Udag{\Omega}{0,t}O_X(\tau)\U{\Omega}{0,t}} \nonumber\\
	&=\Norm{ \int_0^t ds~ \frac{d}{ds}\left(U^\dag_I(s) \Udag{\Omega}{0,t}O_X(\tau)\U{\Omega}{0,t} U_I(s)\right)} \\
	&= \Norm{\int_0^t ds~  U^\dag_I(s) \comm{\Udag{\Omega}{0,t}O_X(\tau)\U{\Omega}{0,t},V_I(s)}U_I(s)}\\
	&\leq\int_0^t ds~ \Norm { \comm{\Udag{\Omega}{0,t}O_X(\tau)\U{\Omega}{0,t},V_I(s)}}	\\
	&\leq\int_0^t ds~ \Norm { \comm{\Udag{\Omega}{s,t}O_X(\tau)\U{\Omega}{s,t},V(s)}}.		
	\end{align}
	Thus, Lemma~\ref{APP_LEM_EVO_APPROX} follows.
\end{proof}
By substituting $\Lambda \rightarrow AB,$ $\Omega \rightarrow B$, $O_X \rightarrow H_{B:C}$, $\tau\rightarrow t$, and noting that operators supported on disjoint subsets commute, we can show using Lemma~\ref{APP_LEM_EVO_APPROX} that
\begin{align}
&\Norm{\doverlap}=\Norm{\Udag{AB}{0,t}H_{B:C}(t)\U{AB}{0,t}-\Udag{B}{0,t}H_{B:C}(t)\U{B}{0,t}}\nonumber\\
&\leq \int_0^t ds \Norm{\comm{\Udag{B}{s,t}H_{B:C}(t) \U{B}{s,t},H_{A:B}(s)}}\nonumber\\
&\leq \sum_{\vec a\in A}\sum_{\vec b,\vec b'\in B}\sum_{\vec c\in C}\int_0^t ds \Norm{\comm{\Udag{B}{s,t}h_{\vec b,\vec c}(t) \U{B}{s,t},h_{\vec a,\vec b'}(s)}},\label{EQ_sum_ABC}
\end{align}
where the observables are, in general, evaluated at different times $s\leq t$.
We note that while it is necessary to keep track of $s,t$ for completeness, one should pay more attention to the supports of the operators, as they carry useful information about the locality of the system.

In fact, let us pause for a moment to discuss why the right hand side of Eq.~\eqref{EQ_sum_ABC} should be small when $\ell$, the distance between $A$ and $C$, is large. 
Whenever the supports of $h_{\vec a,\vec b'}(s)$ and $h_{\vec b,\vec c}(t)$ are far from each other, we can bound their commutator norm using a Lieb-Robinson bound for long-range interactions.
We use the bound by Gong \etal~\cite{GongFF}:
\begin{align}
&\Norm{\comm{\Udag{B}{s,t}h_{\vec b,\vec c}(t) \U{B}{s,t},h_{\vec a,\vec b'}(s)}}\nonumber\\
&\leq ce^{v(t-s)}\Norm{h_{\vec b,\vec c}(t)}\Norm{h_{\vec a,\vec b'}(s)} \left(\frac{1}{(1-\gamma)^\alpha}\frac{1}{r^\alpha}+\frac{1}{e^{\gamma r}}\right),\label{EQ_LR_Gong}
\end{align}
where $r = \dist{\{\vec a,\vec b'\},\{\vec b,\vec c\}}$ is the distance between the supports, $\gamma\in(0,1)$ is a constant that can be made arbitrarily close to 1, and $c,v$ are constants that depend only on $D$.

\begin{figure}
	\includegraphics[width=0.45\textwidth]{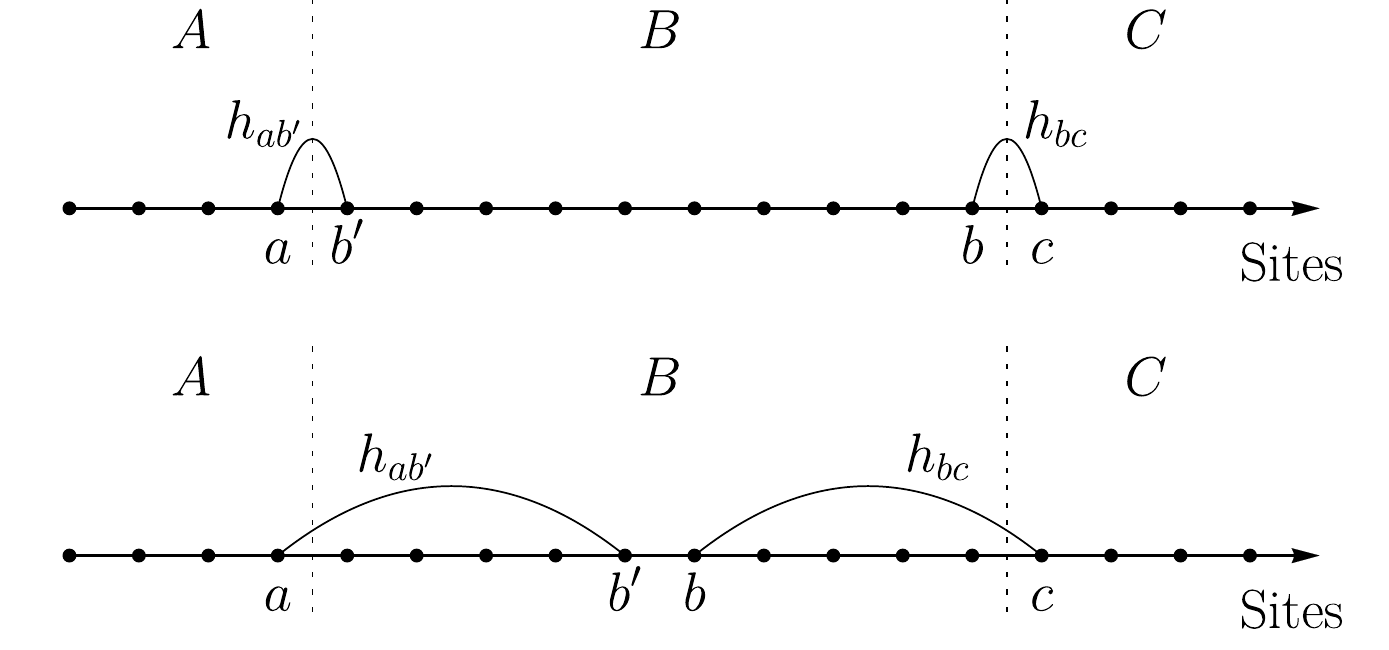}
	\caption{An illustration of $h_{ab'}$ and $h_{bc}$ in a one-dimensional lattice.
		For short-range interactions, the sets $\{a,b'\}$ and $\{b,c\}$ are separated by a distance of the same order as the size of $B$ (upper figure). 
		The contributions from these terms to $\doverlap$ are bounded using a Lieb-Robinson bound.
		However, for long-range interactions, $\{a,b'\}$ and $\{b,c\}$ can be geometrically close to each other (lower figure).
		In such cases, the norms of $h_{ab'}$ and $h_{bc}$ decay as $\Abs{b'-a}^{-\alpha}$ and $\Abs{c-b}^{-\alpha}$ and, therefore, their contributions to $\doverlap$ are small.
	}
	\label{FIG_doverlap}
\end{figure}
However, in contrast to short-range interacting systems, here $\vec b,\vec b'$ run over all possible sites in $B$, so in principle the distance between the supports of $h_{\vec a,\vec b'}(s)$ and $h_{\vec b,\vec c}(t)$ can be small (Fig.~\ref{FIG_doverlap}). 
Fortunately, if that is indeed the case, then although the Lieb-Robinson bound is trivial, the assumption that $\norm{h_{\vec a,\vec b'}}$ and $\norm{h_{\vec b,\vec c}}$ fall off as ${\norm{\vec b'-\vec a}^{-\alpha}}$ and ${\norm{\vec c-\vec b}^{-\alpha}}$, respectively, makes the summand in Eq.~\eqref{EQ_sum_ABC} small.

Let us now evaluate the sum in \cref{EQ_sum_ABC}.
In the following, we shall consider $\vec b\neq \vec b'$, since the estimation for the case $\vec b = \vec b'$ follows a similar, but less complicated argument.
Using Gong \etal's Lieb-Robinson bound for long-range interactions~\cite{GongFF}:
\begin{align}
&\int_0^t ds\norm{\comm{\Udag{B}{s,t}h_{\vec b,\vec c}(t) \U{B}{s,t},h_{\vec a,\vec b'}(s)}}\nonumber\\
&\leq c\int_0^t ds\norm{h_{\vec b,\vec c}(t)}\norm{h_{\vec a,\vec b'}(s)} \left(\frac{e^{v(t-s)}}{((1-\gamma)r)^\alpha}+\frac{e^{v(t-s)}}{e^{\gamma r}}\right)\nonumber\\
&\leq \frac c v \frac{(e^{vt}-1)}{\norm{\vec b-\vec c}^\alpha\norm{\vec a - \vec b'}^\alpha} \left(\frac{1}{(1-\gamma)^\alpha}\frac{1}{r^\alpha}+\frac{1}{e^{\gamma r}}\right),\label{EQ_doverlap_two_parts}
\end{align}
where $\gamma\in(0,1)$ is a constant that can be chosen arbitrarily close to 1, while $c,v$ are finite and bounded constants for all $\alpha$, and
\begin{align}
r&=\dist{\{\vec a,\vec b'\},\{\vec b,\vec c\}}\nonumber\\
 &= \min\left\{\norm{\vec b'-\vec b},\norm{\vec b'-\vec c},\norm{\vec a - \vec b},\norm{\vec a-\vec c}\right\},\label{EQ_Values_of_r}
\end{align}
is the distance between the supports of $h_{\vec b,\vec c}(t)$ and $h_{\vec a,\vec b'}(s)$ (see \cref{FIG_doverlap}).
Since each term of $\doverlap$ contributes a sum of an algebraically decaying as $1/r^\alpha$ and an exponential decaying as $e^{-\gamma r}$ terms, it is convenient to evaluate their contributions separately. 

First, let us find the contribution from the algebraically decaying part. 
It is straightforward to find out their contributions to $\doverlap$ when $r$ takes one of the four allowed values.
Depending on which value $r$ takes, we use either Lemma~\ref{LEM_SIMPLE_SUM} or Lemma~\ref{LEM_REP_CON} in \Cref{Sec_math} to evaluate the sums over $\vec b$ and $\vec b'$. 
For example, the contribution from the terms where $r = \norm{\vec b'-\vec b}$ is at most
\begin{align}
&\sum_{\vec a\in A}\sum_{\vec b\neq \vec b'\in B}\sum_{\vec c \in C}  \frac{c (e^{vt}-1)\left(\frac{1}{1-\gamma}\right)^\alpha}{v\norm{\vec b-\vec c}^\alpha\norm{\vec a - \vec b'}^\alpha\norm{\vec b'-\vec b}^\alpha}\nonumber\\
&\leq \frac{c}{v}\lambda_3 \sum_{\vec a\in A}\sum_{\vec b\in B}\sum_{\vec c \in C}  \frac{  (e^{vt}-1)\left(\frac{2}{1-\gamma}\right)^\alpha}{\norm{\vec b-\vec c}^\alpha\norm{\vec a - \vec b}^\alpha}\label{EQ_Lem7_1}\\
&\leq\frac{c}{v}\lambda_3 \lambda_4 \sum_{\vec a\in A}\sum_{\vec c \in C}  \frac{ (e^{vt}-1)\left(\frac{4}{1-\gamma}\right)^\alpha }{\norm{\vec a - \vec c}^\alpha}\\
&\leq \lambda_5(e^{vt}-1)\left(\frac{8}{1-\gamma}\right)^\alpha\frac{ \Phi( A)}{\ell^{\alpha-D-1}}
\end{align}
where $\lambda_3,\lambda_4$ are constants that arise after we use Lemma~\ref{LEM_REP_CON} in \Cref{Sec_math} twice to evaluate the sums over $\vec b'$ and $\vec b$ consecutively, and the sums over $\vec a, \vec c$ have been bounded in the previous section (see \cref{EQ_sumAC_1}).
The constant $\lambda_5$ absorbs both $\lambda_3,\lambda_4$ and the constants from the sums over $\vec a,\vec c$.

On the other hand, if $r = \norm{\vec b'-\vec c}$, we use Lemma~\ref{LEM_REP_CON} to evaluate the sum over $\vec b'$ and Lemma~\ref{LEM_SIMPLE_SUM} for the sum over $\vec b$:
\begin{align}
&\sum_{\vec a\in A}\sum_{\vec b\neq \vec b'\in B}\sum_{\vec c \in C}  \frac{1}{\norm{\vec b-\vec c}^\alpha}\frac{1}{\norm{\vec a - \vec b'}^\alpha}\frac{c (e^{vt}-1)\left(\frac{1}{1-\gamma}\right)^\alpha}{v\norm{\vec b'-\vec c}^\alpha}\nonumber\\
&\leq \frac{c}{v}\lambda_6 \sum_{\vec a\in A}\sum_{\vec b\in B}\sum_{\vec c \in C}  \frac{1}{\norm{\vec b-\vec c}^\alpha}\frac{ (e^{vt}-1)\left(\frac{2}{1-\gamma}\right)^\alpha}{\norm{\vec a - \vec c}^\alpha}\label{EQ_Lem5_2}\\
&\leq\frac{c}{v}\lambda_6\lambda_7 \sum_{\vec a\in A}\sum_{\vec c \in C}  \frac{ (e^{vt}-1)\left(\frac{2}{1-\gamma}\right)^\alpha}{\norm{\vec a - \vec c}^\alpha}\\
&\leq \lambda_8 (e^{vt}-1)\left(\frac{4}{1-\gamma}\right)^\alpha\frac{\Phi( A)}{\ell^{\alpha-D-1}},
\end{align} 
where $\lambda_6,\lambda_7$ come from the uses of \cref{LEM_REP_CON} and \cref{LEM_SIMPLE_SUM} respectively. 
The constant $\lambda_8$ absorbs both $\lambda_6,\lambda_7$ and the constants from the sums over $\vec a,\vec c$.
Repeating for the other values of $r$, we find that the contribution from the algebraically decaying terms in \Cref{EQ_doverlap_two_parts} to $\doverlap$ is at most 
\begin{align}
\lambda_9(e^{vt}-1)\left(\frac{8}{1-\gamma}\right)^\alpha\frac{ \Phi( A)}{\ell^{\alpha-D-1}},\label{EQ_doverlap_alg}
\end{align}
for some constant $\lambda_9$.

Next, let us find the contribution from the exponentially decaying term in \Cref{EQ_doverlap_two_parts}. 
If $r = \norm{\vec b'-\vec b}$, we have
\begin{align}
	&\sum_{\vec a\in A}\sum_{\vec b\neq \vec b'\in B}\sum_{\vec c \in C}  \frac{c(e^{vt}-1)}{v\norm{\vec b-\vec c}^\alpha\norm{\vec a - \vec b'}^\alpha e^{\gamma\norm{\vec b'-\vec b}}}\nonumber\\
	&\leq \frac{c}{v}\lambda_{10}\sum_{\vec a\in A}\sum_{\vec b\in B}\sum_{\vec c \in C}  \frac{(e^{vt}-1)}{\norm{\vec b-\vec c}^\alpha} \Bigg(\frac{ \left(\frac{4}{1-\gamma}\right)^\alpha}{\norm{\vec{a}-\vec{b}}^\alpha}
	+\frac{\norm{\vec a-\vec b}^{D-1}}{e^{\gamma \norm{\vec a-\vec b}}}\Bigg)\nonumber\\
	&\leq \frac{c}{v}\lambda_{10}\lambda_{11}\sum_{\vec a\in A}\sum_{\vec c \in C} (e^{vt}-1)  \Bigg(
	\frac{ \left(\frac{8}{1-\gamma}\right)^\alpha}{\norm{\vec{a}-\vec{c}}^\alpha}
	+\frac{\norm{\vec{a}-\vec{c}}^{2D-2} }{ e^{\gamma\norm{\vec a-\vec c}}}
	\Bigg)\nonumber\\
	&\leq \lambda_{12} (e^{vt}-1) \Phi(A)\Bigg(
	\frac{\left(\frac{16}{1-\gamma}\right)^\alpha}{\ell^{\alpha-D-1}}
	+\frac{\ell^{3D-3}}{e^{\gamma \ell}}\Bigg),
	\label{EQ_doverlap_exp}
\end{align}
where we have applied \Cref{LEM_REP_CON_MIX} in \Cref{Sec_math} to obtain the first inequality, \Cref{LEM_REP_CON_MIX} twice again and \Cref{LEM_REP_CON} to get the second inequality, and then \Cref{LEM_SIMPLE_SUM} and \Cref{LEM_SIMPLE_SUM_exp} for the sums over $\vec a,\vec c$ similarly to \Cref{SM_Subsec_dtrunc}.
The constants $\lambda_{10},\lambda_{11}$ arise from the applications of the lemmas and are absorbed into a constant $\lambda_{12}$.
We note that the constant $\gamma$ in the last three lines are different from the one in the first line (see \Cref{LEM_REP_CON_MIX} for details). 
However, they both are constants that can be chosen arbitrarily between 0 and 1. Therefore, we denote them by the same constant $\gamma$ for convenience.

Repeating the argument for other choices of $r$ in \cref{EQ_Values_of_r}, we find that the contribution from the exponentially decaying terms to $\doverlap$ is still at most the right hand side of \cref{EQ_doverlap_exp}.

Combining \cref{EQ_doverlap_alg} and \cref{EQ_doverlap_exp}, we have
\begin{align}
	\norm{\doverlap} \leq \lambda_{13} (e^{vt}-1) \Phi(A)\Bigg(
	\frac{\left(\frac{16}{1-\gamma}\right)^\alpha}{\ell^{\alpha-D-1}}
	+\frac{\ell^{3D-3}}{e^{\gamma \ell}}\Bigg),
\end{align}
for a constant $\lambda_{13}$.
Since $\ell^{D-1}\leq \frac{(D-1)!}{\epsilon^{D-1}}e^{\epsilon \ell}$ for any arbitrary small positive constant $\epsilon$, we can upper bound
\begin{align}
	&\norm{\doverlap}\leq c_{\ov} (e^{vt}-1) \Phi(A)\Bigg(
	\frac{\left(\frac{16}{1-\gamma}\right)^\alpha}{\ell^{\alpha-D-1}}
	+\frac{1}{e^{\gamma \ell}}\Bigg), \label{EQ_cov_def}
\end{align}
where we have absorbed $\epsilon$ into the definition of $\gamma$ and $c_{\ov}$. 
This completes the estimation of $\doverlap$.
\Section{Error propagation from generating function}
\label{APP_ERR_PROP}
In this section, we reproduce a lemma in Ref.~\cite{Haah} which shows how the error in approximating the generating function $\mathcal G_W$ propagates to an error of the unitary $W_t$ in \cref{EQ_G_def}.
Suppose we approximate $\mathcal G_W$ by $\mathcal G'_W$ such that
\begin{align}
	\norm{\mathcal G_W-\mathcal G'_W}\leq f(t)\delta,
\end{align}
for some function of time $f(t)$ and $\delta$ is time-independent.
We shall prove that the unitary $W'_t$ generated by $\mathcal G'_{W}$ approximates $W_t$ with error
\begin{align}
	\norm{W'_t-W_t}\leq \delta \int_0^t \d s f(s).
\end{align}
\begin{proof}
By simple differentiation, we have
\begin{align}
	\norm{W^\dag_t W'_t-\mathbb I}&=\Norm{\int_0^t \d s \frac{\d}{\d s} (W^\dag_s W'_s)}\\
	&=\Norm{\int_0^t \d s W^\dag_s(G_{W}-G'_{W})W'_s}\\
	&\leq\int_0^t \d s \norm{W^\dag_s(G_{W}-G'_{W})W'_s}\\
	&=\int_0^t \d s \norm{G_{W}-G'_{W}}\\
	&\leq \delta \int_0^t \d s f(s).
\end{align}
\end{proof}
\section{Proof of the Lieb-Robinson bound for long-range interactions}
\label{APP_LR_PROOF}
We present a more detailed proof of \Cref{TH_LR} in this section.
The key ingredient in the proof of Theorem~\ref{TH_LR} is the following lemma.
\begin{lemma}
	\label{LEM_LR_exist}
	Denote by $\B_{r}=\left\{\veci\in \Lambda: \Norm{\veci}\leq r\right\}$ a $D$-ball of radius $r$ centered around the origin.
	Let $O_X$ be an observable supported on $X = \B_{r_0}$ with $r_0$ being finite.
	For each $\U{\Lambda}{0,T}$ and a positive integer $M$, there exists a unitary $\tilde U$ supported on a $D$-ball $\B_r$ with $r=r_0+M\ell$ such that 
	\begin{align}
	&\Norm{\Udag{\Lambda}{0,T}O_X\U{\Lambda}{0,T}-\tilde U^\dag O_X \tilde U}
	\leq b_1 M e^{vt}\left(r-\ell\right)^{D-1}\xi_{\alpha}(\ell),
	\end{align}
	where $b_1$ is a constant, $t = T/M$ and $\ell\in(0,R)$ is a free parameter.
\end{lemma}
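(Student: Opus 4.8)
The plan is to build $\tilde U$ by the time-slicing construction sketched in \cref{Sec_Overview} and illustrated in \cref{FIG_TH-LR-demo}. First I would split $[0,T]$ into $M$ equal slices of length $t=T/M$, writing $\U{\Lambda}{0,T}=\U{\Lambda}{(M-1)t,Mt}\cdots\U{\Lambda}{0,t}$ (using the Heisenberg-picture ordering, so the first factor acting on $O_X$ is the bottom slice). For the $k$-th slice, counting from the bottom ($k=1,\dots,M$), I apply \Cref{LEM_BREAK_D} with the nested choice of regions: take $A=\B_{r_0+(k-1)\ell}$ (a $D$-ball, which is convex and compact, so the hypothesis of \Cref{LEM_BREAK_D} is met), $B=\mathcal S_{r_0+(k-1)\ell}=\B_{r_0+k\ell}\setminus\B_{r_0+(k-1)\ell}$ the shell of width $\ell$, and $C$ the rest of $\Lambda$, so that $\dist{A,C}=\ell$. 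Each application replaces $\U{\Lambda}{(k-1)t,kt}$ by $\U{AB}{}(\U{B}{})^\dag\U{BC}{}$ up to an error controlled by \Cref{LEM_BREAK_D}.

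The key structural point, which I would verify inductively, is that after decomposing the bottom $k$ slices the operators $\U{B}{}$ and $\U{BC}{}$ coming from slice $k$ are supported on $\B_{r_0+k\ell}^c\cup\mathcal S_{r_0+(k-1)\ell}$, while everything already accumulated around $O_X$ from slices $1,\dots,k-1$ is supported inside $\B_{r_0+(k-1)\ell}$; hence $\U{BC}{}$ and $\U{B}{}$ commute with the operator sitting between $O_X$ and $O_X^\dagger$ copies, and — crucially — they pair with their Hermitian conjugates coming from $\Udag{\Lambda}{0,T}$ and cancel. (This is exactly the cancellation $\Udag{AB}{}\U{B}{}\Udag{BC}{}O_X\U{BC}{}\Udag{B}{}\U{AB}{}=\Udag{AB}{}O_X\U{AB}{}$ displayed in \cref{Sec_Overview}.) What survives from slice $k$ is a single $\U{AB}{}$ supported on $\B_{r_0+k\ell}$. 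After all $M$ slices, the surviving unitaries are supported on $\B_{r_0+M\ell}=\B_r$, so I define $\tilde U$ to be their product; by construction $\tilde U$ is supported on $\B_r$.

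For the error bound I would collect, over the $M$ decompositions, the per-slice errors from \Cref{LEM_BREAK_D}. Since $A$ is the ball of radius $r_0+(k-1)\ell\le r-\ell$, its boundary area satisfies $\Phi(A)=\O{(r-\ell)^{D-1}}$, and the time argument in each slice is $t=T/M$, giving a per-slice contribution $\O{c_0(e^{vt}-1)(r-\ell)^{D-1}\xi_\alpha(\ell)}$. Replacing approximate unitaries one slice at a time and using that $\tilde U^\dag(\cdot)\tilde U$ and the true conjugation are both contractions, the triangle inequality accumulates these $M$ errors additively (this is the standard telescoping argument, e.g.\ as in \Cref{APP_ERR_PROP}), yielding the stated $b_1 M e^{vt}(r-\ell)^{D-1}\xi_\alpha(\ell)$ after absorbing $c_0$ and the $D$-dependent geometric constant into $b_1$ and bounding $e^{vt}-1\le e^{vt}$.

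The main obstacle is the bookkeeping of supports across slices: one must choose the regions in slice $k$ so that (i) $A$ still contains $X$ and, more subtly, (ii) the leftover pieces $\U{B}{},\U{BC}{}$ from slice $k$ genuinely commute with all the unitaries already retained from the earlier slices, so that they can be freely commuted through $O_X$ and annihilated against $\Udag{\Lambda}{0,T}$. The growing-ball / expanding-shell choice is exactly what makes this work, but proving the invariant ``retained support after $k$ slices $\subseteq\B_{r_0+k\ell}$'' rigorously, and checking that the cancellation is exact (not just approximate) before the errors are tallied, is the delicate part. Everything else — the geometry $\Phi(\B_\rho)=\O{\rho^{D-1}}$, the time rescaling, and the additive accumulation of $M$ errors — is routine.
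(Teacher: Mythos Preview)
Your proposal is correct and follows essentially the same route as the paper: split into $M$ time slices, apply \Cref{LEM_BREAK_D} at slice $k$ with $A=\B_{r_0+(k-1)\ell}$, $B$ the surrounding shell of width $\ell$, and $C$ the complement, then use the support invariant to commute the $B$- and $BC$-pieces through and cancel them, leaving $\tilde U$ supported on $\B_{r_0+M\ell}$. The error bookkeeping is identical; the paper also sums the per-slice errors, bounds each $\Phi(\B_{r_0+k\ell})$ by $\Phi(\B_{r_0+(M-1)\ell})$, and absorbs the surface-area constant $\tfrac{2\pi^{D/2}}{\Gamma(D/2)}$ into $b_1$.
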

\begin{proof}
	\begin{figure}[t]
		\includegraphics[width=0.35\textwidth]{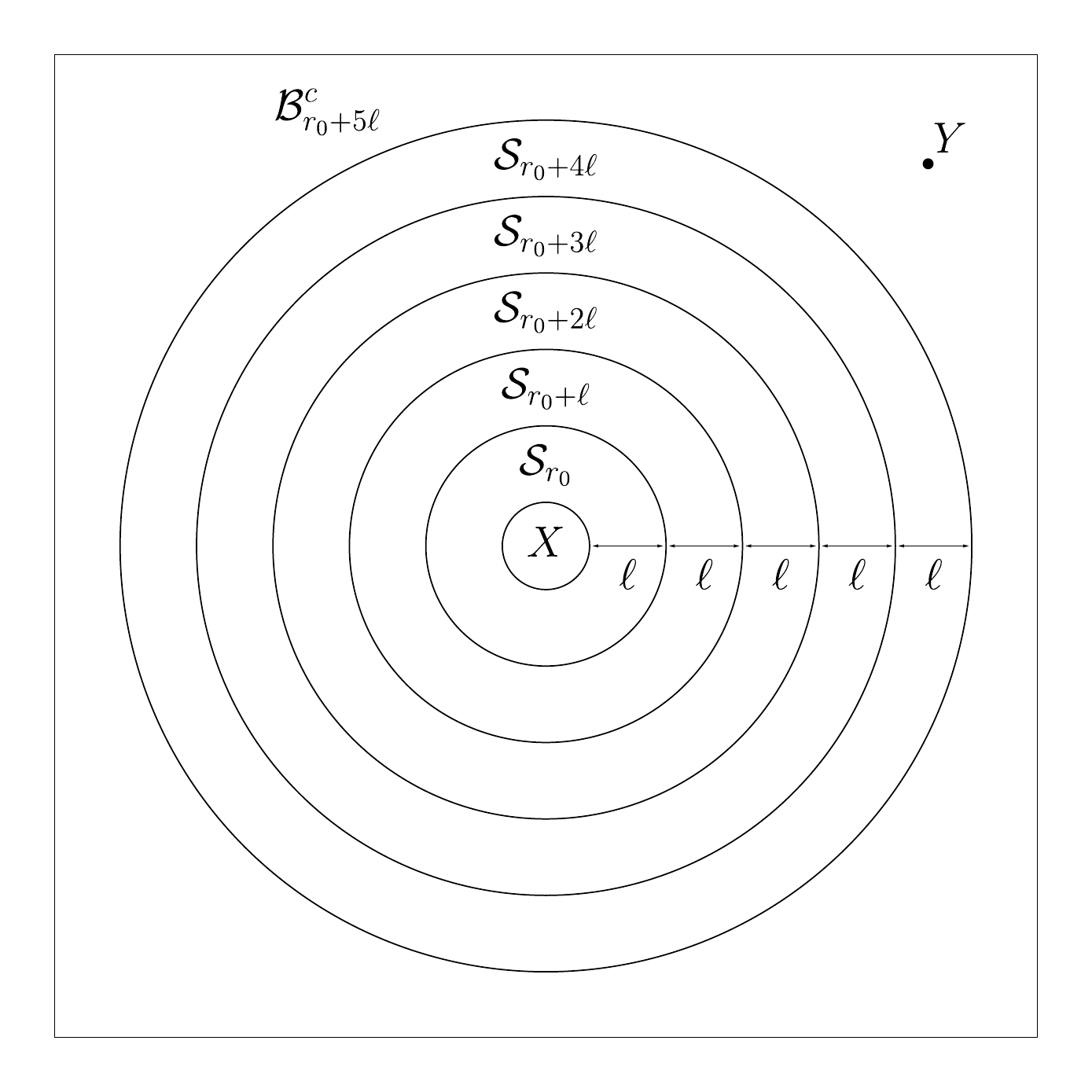}
		\caption{An example of the subset $X=\B_{r_0}$ and five shells $\mathcal S_r$ for $r = r_0,r_0+\ell,\dots,r_0+4\ell$. The operator $O_Y$ is supported on $Y$, which lies on $\B_{r_0+5\ell}^c$, the complement of the ball $\B_{r_0+5\ell}$.}
		\label{FIG_TH-LR-shell}
	\end{figure}
	We shall prove the lemma by constructing the unitary $\tilde U$.
	In addition to $\mathcal{B}_r$ above, we define
	\begin{align}
	\mathcal S_{r} = \mathcal{B}_{r+\ell}\setminus \mathcal{B}_r
	\end{align}
	to be a shell consisting of sites between $r$ and $r+l$ away from the origin (\cref{FIG_TH-LR-shell}).

	We divide $[0,T]$ into $M$ equal time intervals, namely $[(M-k-1)t,(M-k)t]$ for $k = 0,\dots,M-1$, where $t = T/M$. 
	The unitary $\U{\Lambda}{0,T}$ then naturally decomposes into a product of unitaries $\U{\Lambda}{k}\equiv\U{\Lambda}{(M-k-1)t,(M-k)t}$:
	\begin{align}
	\U{\Lambda}{0,T}= \U{\Lambda}{0}\U{\Lambda}{1}\dots\U{\Lambda}{M-1}.
	\end{align}
	We now use Lemma~\mainref{LEM_BREAK_D} to further decompose each $\U{\Lambda}{k}$ into evolutions of subsystems.
	We start with $k = 0$ and use Lemma~\ref{LEM_BREAK_D} with $A\rightarrow X=\B_{r_0},B\rightarrow \mathcal S_{r_0}$, and $C\rightarrow \mathcal B_{r_0+\ell}^c$ (Fig.~\ref{FIG_TH-LR-demo}) to decompose $\Udag{\Lambda}{0}$ (instead of $\U{\Lambda}{0}$):
	\begin{align}
	&\Norm{\Udag{\Lambda}{0}-\Udag{\B_{r_0+\ell}}{0}\U{\mathcal S_{r_0}}{0}\Udag{\mathcal{B}_{r_0}^c}{0}}\nonumber\\
	&\leq c_0e^{vt} \Phi(\B_{r_0})\xi_{\alpha}(\ell),
	\end{align}
	where again $\mathcal{B}_{r_0}^c$ denotes the complement subset $\mathbb R^D\setminus \mathcal{B}_{r_0}$, and $\Phi(\mathcal{B}_{r_0})$ is the boundary area of $\mathcal{B}_{r_0}$.
	This choice of decomposition allows us to eliminate the contribution to the evolution from the terms of the Hamiltonian that commute with $X$, i.e.\ those supported entirely on $\B_{r_0}^c$.
	Explicitly, we have:
	\begin{align}
	&\Udag{\Lambda}{0} O_X \U{\Lambda}{0}\nonumber\\
	&\approx\Udag{\B_{r_0+\ell}}{0} \U{\mathcal S_{r_0}}{0} \Udag{\mathcal{B}_{r_0}^c}{0}O_X \U{\mathcal{B}_{r_0}^c}{0}\Udag{\mathcal S_{r_0}}{0}\U{\B_{r_0+\ell}}{0}\\
	&=\Udag{\B_{r_0+\ell}}{0} O_X \U{\B_{r_0+\ell}}{0}=\tilde U_0^\dag O_X \tilde U_0,
	\end{align}
	where $\tilde U_0\equiv \U{\B_{r_0+\ell}}{0}$ is supported entirely on $\B_{r_0+\ell}$.

	If we repeat the above argument for $\U{\Lambda}{1}$ but with $O_X$ replaced by $\tilde U_0^\dag O_X \tilde U_0$, we can approximate
	\begin{align}
	\Udag{\Lambda}{1} \tilde U_0^\dag O_X \tilde U_0 \U{\Lambda}{1} \approx \tilde U_1^\dag \tilde U_0^\dag O_X \tilde U_0 \tilde U_1,
	\end{align}
	for some $\tilde U_1$ supported entirely on $\B_{r_0+2\ell}$.
	The error of this approximation is at most 
	$
	c_0 e^{vt} \Phi(\B_{r_0+\ell})\xi_{\alpha}(\ell).
	$	
	
	By induction to all $k=2,\dots,M-1$, we can construct $\tilde U = \tilde U_0 \tilde U_1\dots \tilde U_{T-1}$ such that
	\begin{align}
	\Udag{\Lambda}{0,T}O_X \U{\Lambda}{0,T}\approx \tilde U^\dag O_X \tilde U,
	\end{align}
	where the overall error is at most
	\begin{align}
	&\sum_{k=0}^{M-1}c_0 e^{vt} \Phi(\B_{r_0+k\ell})\xi_{\alpha}(\ell)\nonumber\\
	&\leq Mc_0 e^{vt} \Phi(\B_{r_0+(M-1)\ell})\xi_{\alpha}(\ell)\\
	&\leq \underbrace{c_0 \frac{2\pi^{\frac{D}{2}}}{\Gamma(\frac{D}{2})}}_{\equiv b_1} M e^{vt}\left(r_0+\left(M-1\right)\ell\right)^{D-1}\xi_{\alpha}(\ell),
	\end{align}
	and where we have replaced the surface area $\Phi(\B_r)$ of a $D$-ball $\B_r$ by $\frac{2\pi^{\frac{D}{2}}}{\Gamma(\frac{D}{2})} r^{D-1}$ and $M$ by $T/t$.
	Also by induction, the unitary $\tilde U$ is supported entirely on $\B_{r_0+M\ell}$. 
	Therefore the lemma follows.
\end{proof}

We are now ready to prove our Lieb-Robinson bound in Theorem~\ref{TH_LR}.
Without loss of generality, we assume the origin is in $X$. 
Since $\Norm{X}=\O{1}$, there exists $r_0=\O{1}$ such that $X$ is a subset of $\B_{r_0}$. 
By \Cref{LEM_LR_exist}, there exists a unitary $\tilde U$ supported entirely on a $D$-ball $\mathcal{B}_{r}$ with $r = r_0+M\ell$ such that 
\begin{align}
&\epsilon = \Norm{\Udag{\Lambda}{0,T}O_X\U{\Lambda}{0,T}-\tilde U^\dag O_X \tilde U}\nonumber\\
&\leq b_1 M e^{vt}\left(r-\ell\right)^{D-1}\xi_{\alpha}(\ell).\label{EQ_LR_app_err}
\end{align}
If we choose the number of time slices ($M$) and the block size ($\ell$) such that $r\leq R+r_0$, the set $Y$ will lie outside the support $\B_r$ of $\tilde U^\dag O_X \tilde U$, and therefore $\tilde U^\dag O_X \tilde U$ will commute with $O_Y$.
Note that for a fixed value of $M$, the error should decrease with a larger value of $\ell$.
Therefore, to prove the strongest bound, we should choose $\ell$ as large as possible, i.e., $\ell = R/M$, and hence $M = R/\ell$.  
Substituting the value of $M$ and $t=T/M$ into \cref{EQ_LR_app_err}, we obtain the error bound in terms of $\ell$ alone:
\begin{align}
\epsilon &\leq b_2 \frac{R}{\ell} (e^{\frac{vT\ell}{R}}-1)\left(1+R-\ell\right)^{D-1}\xi_{\alpha}(\ell), \label{EQ_LR_err_l}
\end{align}
where $b_2 = b_1 r_0^{D-1}$ is a finite constant.
We note that the above bound is valid for all values of $\ell \leq R$.
The tightest bound can therefore be obtained by choosing a value for $\ell$ that minimizes the above expression.
Our intuition and numerical evidence suggest that this happens when $\ell\sim \frac{R\alpha}{vT}$, so in the below analysis, we aim to choose $\ell$ as close to this value as possible. 

To proceed, we consider two regimes of time $T$, when $vT\geq \alpha$ and when $vT<\alpha$.
In the former regime, we choose $\ell = \frac{ R\alpha}{vT}\leq R$ and substitute into \cref{EQ_LR_err_l} to get
\begin{align}
\epsilon &\leq  b_2\frac{vT}{\alpha} (e^{\alpha}-1) \left(1+R\left(1-\frac{\alpha}{vT}\right)\right)^{D-1}\xi_{\alpha}\left(\frac{R\alpha}{vT}\right)\nonumber\\
&\leq  \underbrace{b_2v2^{D-1}}_{\equiv c_{\lr}}T  R^{D-1}\xi_{\alpha}\left(\frac{R\alpha}{vT}\right),\label{EQ_epsilon_bound}
\end{align}
where we have used $\frac{e^\alpha-1}{\alpha}\leq 1$ for all $\alpha\geq 1$, $1-\frac{\alpha}{vT}\leq 1$ and $1+R\leq 2R$.
In particular, if $\frac{R\alpha}{vT}>x_0$, where $x_0$ is the larger solution of $x^{\alpha-D-1}=e^{\gamma x}$, the algebraically decaying term in $\xi_\alpha$ dominates the exponentially decaying one, and therefore
\begin{align}
\xi_{\alpha}\left(\frac{R\alpha}{vT}\right) 
&= 
\left(\frac{16}{1-\gamma}\right)^\alpha\frac{1}{\left(\frac{R\alpha}{vT}\right)^{\alpha-D-1}}	
+e^{-\gamma \left(\frac{R\alpha}{vT}\right)},\\
&\leq 2
\left(\frac{16}{1-\gamma}\right)^\alpha\frac{1}{\left(\frac{R\alpha}{vT}\right)^{\alpha-D-1}}\\
&= 2
\left(\frac{16}{1-\gamma}\right)^\alpha\left(\frac{v}{\alpha}\right)^{\alpha-D-1}\left(\frac{T}{R}\right)^{\alpha-D-1}.\label{EQ_xia_gtr}
\end{align}
Combining \cref{EQ_epsilon_bound,EQ_xia_gtr}, we obtain a bound on the commutator norm:
\begin{align}
\mathcal{C}(T,R)\leq \epsilon \leq  c_{\lr,\alpha}
\frac{T^{\alpha-D}}{R^{\alpha-2D}},
\end{align}
where
\begin{align}
	c_{\lr,\alpha}\coloneqq 2c_{\lr} 
	\left(\frac{16}{1-\gamma}\right)^\alpha\left(\frac{v}{\alpha}\right)^{\alpha-D-1}.\label{EQ_clra}
\end{align}
The light cone implied by the bound is
\begin{align}
	T \gtrsim R^{\frac{\alpha-2D}{\alpha-D}}.
\end{align}
In the limit $\alpha\to\infty$, the exponent of the light cone converges to one at a rate given by
\begin{align}
	\mu = \lim_{\alpha\to\infty} \frac{\abs{\frac{\alpha+1-2D}{\alpha+1-D}-1}}{\abs{\frac{\alpha-2D}{\alpha-D}-1}} = 1.\label{EQ_converge_rate}
\end{align}

On the other hand, if $vT<\alpha$, we simply choose $\ell = R$.
\Cref{EQ_LR_err_l} then becomes
\begin{align}
	\mathcal{C}(T,R)\leq \epsilon\leq b_2(e^{vT}-1)\xi_\alpha(R).
\end{align} 
Therefore, we arrive at the Lieb-Robinson bound in \Cref{TH_LR} with $\tilde c_\text{lr}=b_2$.
\section{Proof of Lemma~\mainref{LEM_HHKL_existence} in higher dimensions}
\label{SM_Sec_HHKL}
In this section, we discuss the construction of the circuit in Lemma~\ref{LEM_HHKL_existence} that generalizes the lemma to higher dimensions.
Similar to the $D=1$ case, we first break the unitary into $\O{T}$ unitaries $\exp(-iHt)$ for some $t=\O{1}$.
We then use an algorithm consisting of $D$ steps to break the simulation of $\exp(-iHt)$ into simulations of Hamiltonians on smaller hypercubes of size at most $2\ell$.
In the first of the $D$ steps, we cut the $D$-dimensional lattice into $L/\ell$ layers, each with the same thickness $\ell$, a parameter to be chosen later.
In this step, the cross section of the cut is $L^{D-1}$.
Therefore, by \cref{LEM_BREAK_D}, each time a new layer is generated, we accumulate an error of 
\[\O{\frac{L^{D-1}}{\ell^{\alpha-D-1}}}.\]
For $TL/\ell$ layers of the first step, the accumulated error will be $\epsilon^{(1)}=\O{{TL^{D}}/{\ell^{\alpha-D}}}$.\\

Next, for each of the $\O{{TL}/{\ell}}$ layers of $D-1$ dimensions, we break them again into $L/\ell$ layers of $D-2$ dimensions. Using Lemma~\ref{LEM_BREAK_D} with a cross section $L^{D-2}$, we find the error of the second step 
\begin{align}
\epsilon^{(2)}=\frac{TL}{\ell}\frac{L}{\ell}\O{\frac{L^{D-2}}{\ell^{\alpha-(D-1)-1}}}=\O{\frac{TL^D}{\ell^{\alpha-D+2}}},
\end{align}
which decreases with $\ell$ faster than the error of the first step.

More explicitly, in the $k$th of the $D$ steps, the error is $\epsilon^{(k)} = \O{{L^{D}}/{\ell^{\alpha-D-2k}}}$, which is dominated by the error in the first step for all $k>1$.
Therefore, the error of cutting the $D$-dimensional lattice of size $L$ into $L^D/\ell^D$ subsystems is still $\O{{TL^{D}}/{\ell^{\alpha-D}}}$.
To meet a fixed total error $\epsilon$, we need to choose $\ell\propto \left(TL^D/\epsilon\right)^{\frac{1}{\alpha-D}}$.
The geometrical constraint $\ell<L$ requires $\alpha>2D$.
Finally, simulating each of the $\O{{TL^D}/{\ell^D}}$ subsystems using the LCU algorithm up to ${\epsilon \ell^D}/{(TL^D)}$ accuracy requires $\O{\ell^{3D}\log\left({TL^D}/{\epsilon \ell^D}}\right)$ quantum gates.
Therefore, the overall gate complexity of the algorithm is 
\begin{align}
G_D = \O{\frac{(Tn)^{1+\frac{2D}{\alpha-D}}}{\epsilon^{\frac{2D}{\alpha-D}}}\log\frac{Tn}{\epsilon}}.
\end{align}

\section{Estimation of the actual gate count}
\label{APP_gate_count}
In this section, we describe how we estimate the actual gate count of the HHKL algorithm and the QSP algorithm in simulating one-dimensional power-law systems. 

The direct implementation of the QSP algorithm requires computing a sequence of rotation angles on a classical computer, which is prohibitive for large-size Hamiltonian simulation. Instead, we use a suboptimal approach described in Ref.~\cite{ChildsMNRS2017}. To simulate $H=\sum_{j=1}^L\beta_jH_j$ for time $t$ and accuracy $\epsilon$, where $L$ is the number of terms in the Hamiltonian, $\beta_j\geq 0$ and $H_j$ are both unitary and Hermitian, we divide the entire evolution into $r$ segments. We choose $r$ sufficiently large so that each segment is short enough for the classical preprocessing. Specifically, we choose
\begin{equation}
r=\bigg\lceil\frac{\sum_j\beta_j t}{\tau_{\max}}\bigg\rceil
\end{equation}
and $\tau_{\max}=1000$ \cite{Haah2018}. Within each segment, we choose $q$ to be the smallest positive integer satisfying
\begin{equation}
\frac{4(\sum_j\beta_j t/r)^q}{2^q q!}\leq\frac{\epsilon}{8r},
\end{equation}
so that the overall error is at most $\epsilon$. This gives $M=2(q-1)$ phased iterates within each segment~\cite{ChildsMNRS2017}.

The number of elementary operations of each phased iterate is $\log(L)+4L+8L$. Here, the first term corresponds to the reflection along an $L$-dimensional state $|0\rangle$; the second term costs the preparation/unpreparation of an $L$-dimensional state; and the third term is the cost of selecting $L$ two-body operators. We thus estimate the gate complexity of the QSP algorithm as $\big(\log(L)+12L\big)rM$.

Next, in order to determine the gate count of the HHKL algorithm, we need an estimate for the error of the unitary decomposition in \cref{LEM_BREAK_D}.
Recall that for $D=1$, the error given by our analysis is $b/\ell^{\alpha-2}$, where $b$ is a constant that can be estimated numerically by computing the actual error for small values of $\ell$ and extrapolating for larger $\ell$.

\begin{figure}[t]
	\includegraphics[width=0.45\textwidth]{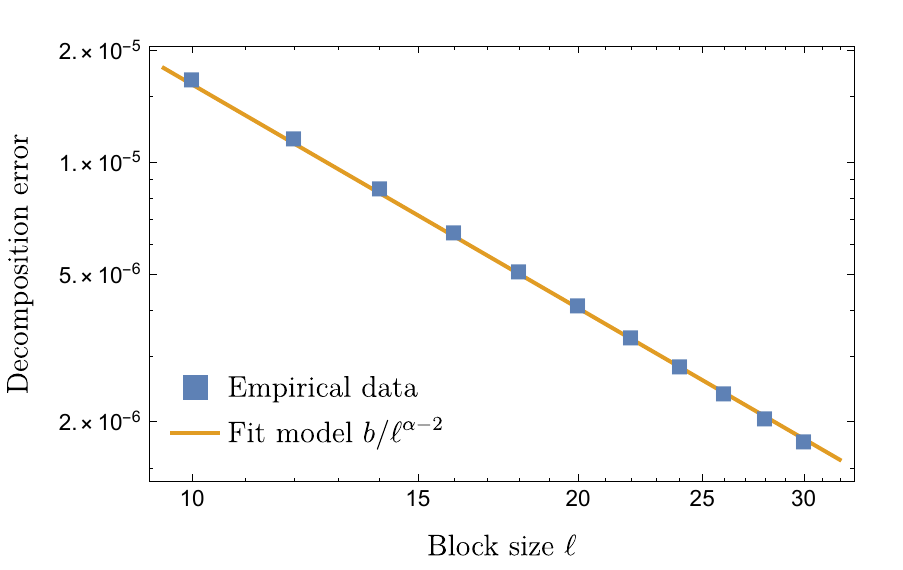}
	\caption{ 
		The empirical error of the unitary decomposition in \cref{LEM_BREAK_D}, computed for the single-excitation one-dimensional Heisenberg chain ($\alpha = 4$) in \cref{EQ_Heisenberg} at different values of $\ell$.
		The system size is fixed at $n=300$ and the evolution time at $t=0.01$.
		We fit the data (blue square) to the theoretical model $b/\ell^{\alpha-2}$ and obtain $b = 1.62\times 10^{-3}$.
	}
	\label{FIG_ell_scaling}
\end{figure}

Since simulating the evolution of a generic system is classically intractable even for a moderate system size, we study only the one-dimensional Heisenberg model given in \cref{EQ_Heisenberg} and restrict our calculation to the single-excitation subspace.
In \cref{FIG_ell_scaling}, we plot the error of the unitary decomposition in \cref{LEM_BREAK_D} at several different values of $\ell$ (for system size $n=300$ and evolution time $t=0.01$).
The scaling of the error agrees well with our prediction.
By fitting the data to $b/\ell^{\alpha-2}$, we obtain an estimate $b = 1.62 \times 10^{-3}$. 

Recall that there are $T/t$ time slices in the HHKL algorithm.
In each time slice, there are $n/\ell$ blocks of size $\ell$ and $2n/(2\ell)$ blocks of size $2\ell$.
To meet the total error at most $\epsilon$, we need to choose (see also \cref{EQ_error_scaling_ell})
\begin{align}
	\ell = \left(\frac{T}{t}\frac{2nb}{\epsilon}\right)^{\frac{1}{\alpha-1}}.
\end{align}
By multiplying the number of blocks by the gate count for using QSP to simulate a single block, we arrive at the total gate count presented in \cref{FIG_actual-gate-count}.

\section{Numerical performance of the product formula}
\label{Sec_num}
This section includes the numerical performance of the fourth-order product formula (PF4) used to simulate the evolution of the system given in \cref{EQ_Heisenberg} for time $T=n$.
We plot this numerical performance as well as the theoretical estimates for the gate counts of the PF4, LCU, QSP, and HHKL algorithms in \cref{FIG_COMPARE}.
\vspace{0.01in}
\begin{figure}[H]
	\includegraphics[width=0.45\textwidth]{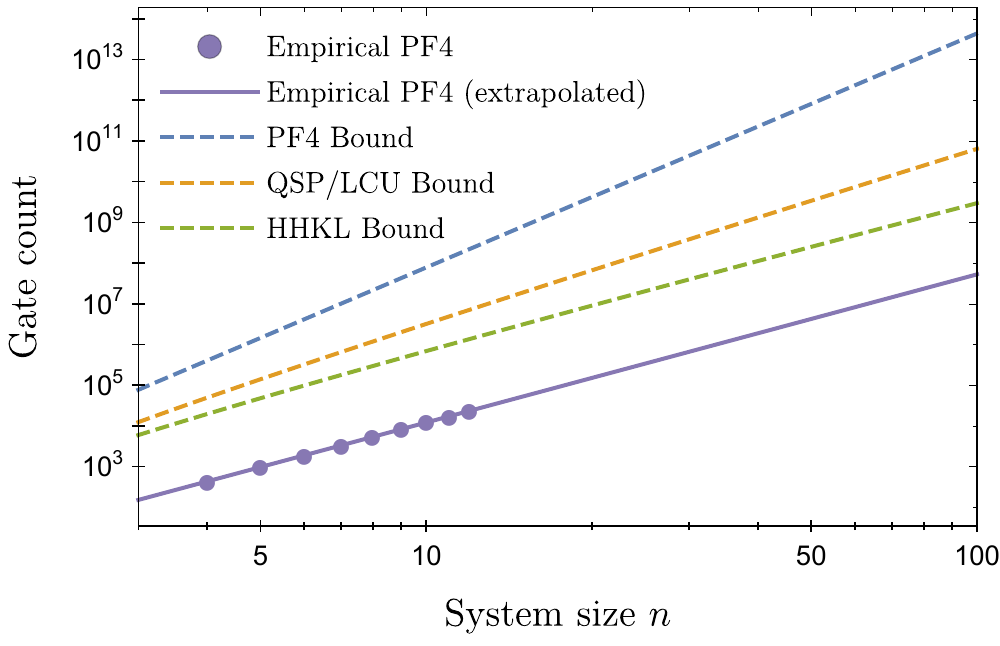}
	\caption{ 
		The empirical gate count of PF4 (purple dots) from $n = 4$ to $n = 12$, extrapolated to larger system sizes (solid, purple), for simulating the dynamics of the Hamiltonian in Eq.~\eqref{EQ_Heisenberg} for time $T=n$ at a fixed error tolerance.
		The error bars are smaller than the size of the markers and hence not visible in the plot.
		Also shown in dashed lines are the \emph{slopes} of the gate counts of several advanced algorithms for comparison.
		These slopes represent the scaling of the gate counts as functions of $n$.	
		Their $y$-intercepts, which represent a constant multiplicative factor, should be ignored.
	}
	\label{FIG_COMPARE}
\end{figure}

\textbf{}
\section{Mathematical tools}
\label{Sec_math}
This section contains a collection of mathematical results omitted from the previous sections.
In \Cref{SM_Subsec_std_sum}, we present the upper bounds on standard sums we use in the proof of \cref{LEM_BREAK_D} in \cref{APP_LEM_1_PROOF}.
In \Cref{Subsec_para}, we show how we estimate the sum over the convex set $A$ in \cref{EQ_sumveca} by parameterizing the elements of the set by their distance to the boundary of $A$. 
We also note that we use the same notation ``$\lambda$'' for constants that appear in different lemmas.

\subsection{Standard sums}\label{SM_Subsec_std_sum}
In this section, we present upper bounds on a a few standard sums used in the previous sections.
Specifically, we use \Cref{LEM_SIMPLE_SUM} to bound \cref{EQ_sumAC_1}, \cref{EQ_Lem5_2},
\Cref{LEM_SIMPLE_SUM_exp} to bound \cref{EQ_Lem7_1}, \cref{EQ_doverlap_exp},
\Cref{LEM_REP_CON} to bound \cref{EQ_Lem5_2}, \cref{EQ_doverlap_exp}, 
and \Cref{LEM_REP_CON_MIX} to bound \cref{EQ_doverlap_exp}.

\begin{lemma}
	\label{LEM_SIMPLE_SUM}
	Let $\Lambda$ be a $D$-dimensional lattice and $\vec r$ be the coordinates of sites in $\Lambda$.
	For $\alpha>D+1$ and $R> \sqrt D$, there exists a constant $\lambda$ that may depend on $D$ but not on $R,\alpha$ such that:
	\begin{align}
	\sum_{\substack{\vec r\in \Lambda\\\Norm{\vec r}\geq R}}\frac{1}{\Norm{\vec r}^\alpha} \leq \frac{\lambda}{(R-\sqrt D)^{\alpha-D}}.
	\end{align}
	In particular, it implies that the sum $\sum_{\substack{\vec r\in \Lambda}}$ converges for all $\alpha>D$.
\end{lemma}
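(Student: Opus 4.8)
The plan is to dominate the discrete sum by a radial integral. To each lattice site $\vec r$ I would associate the unit cube $Q_{\vec r}$ centered at $\vec r$; these cubes are pairwise disjoint, and every $\vec x\in Q_{\vec r}$ satisfies $\Norm{\vec x-\vec r}\le\sqrt D/2$, hence $\Norm{\vec r}\ge\Norm{\vec x}-\sqrt D/2$. Because $R>\sqrt D$, for every $\vec r$ with $\Norm{\vec r}\ge R$ the quantity $\Norm{\vec x}-\sqrt D/2$ is strictly positive throughout $Q_{\vec r}$, so $\Norm{\vec r}^{-\alpha}\le(\Norm{\vec x}-\sqrt D/2)^{-\alpha}$ there. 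Integrating this inequality over $Q_{\vec r}$ (which has unit volume), summing over all such $\vec r$, and using $\bigcup_{\Norm{\vec r}\ge R}Q_{\vec r}\subseteq\{\vec x:\Norm{\vec x}\ge R-\sqrt D/2\}$, gives
\[
\sum_{\substack{\vec r\in\Lambda\\\Norm{\vec r}\ge R}}\frac{1}{\Norm{\vec r}^\alpha}\le\int_{\Norm{\vec x}\ge R-\sqrt D/2}\frac{d\vec x}{(\Norm{\vec x}-\sqrt D/2)^\alpha}.
\]

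Next I would pass to spherical coordinates, writing the right-hand side as $\Omega_{D-1}\int_{R-\sqrt D/2}^{\infty}\rho^{D-1}(\rho-\sqrt D/2)^{-\alpha}\,d\rho$, with $\Omega_{D-1}=2\pi^{D/2}/\Gamma(D/2)$ the area of the unit $(D-1)$-sphere. Using $\rho^{D-1}\le 2^{D-1}(\rho-\sqrt D/2)^{D-1}$, which holds once $\rho\ge\sqrt D$ (the finitely many sites with $R\le\Norm{\vec r}<3\sqrt D/2$ contribute only a bounded remainder, absorbed into the final constant), and substituting $u=\rho-\sqrt D/2$, the integral is bounded by $2^{D-1}\Omega_{D-1}\int_{R-\sqrt D}^{\infty}u^{D-1-\alpha}\,du=\frac{2^{D-1}\Omega_{D-1}}{\alpha-D}\,(R-\sqrt D)^{D-\alpha}$, valid for $\alpha>D$.

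The last step is the only place the sharper hypothesis is used: since $\alpha>D+1$ we have $1/(\alpha-D)<1$, so the bound simplifies to $2^{D-1}\Omega_{D-1}(R-\sqrt D)^{-(\alpha-D)}$, i.e.\ the claim holds with a constant $\lambda$ (equal to $2^{D-1}\Omega_{D-1}$ plus the bounded remainder) that depends only on $D$, not on $R$ or $\alpha$. The ``in particular'' statement then follows by running the same cube-to-integral comparison on the tail $\Norm{\vec r}\ge R_0$ for any fixed $R_0>\sqrt D$: convergence there requires only $\alpha>D$ once one no longer insists that $\lambda$ be $\alpha$-free, and the remaining finitely many terms are trivially summable.

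\textbf{Main obstacle.} The conceptual content is light; essentially all of the work is geometric bookkeeping — choosing the auxiliary cubes and the $\sqrt D/2$ shift so that they remain disjoint, their union avoids the singularity at the origin, and the resulting radial integral collapses to a single power of $R-\sqrt D$. The comparison $\rho^{D-1}\le 2^{D-1}(\rho-\sqrt D/2)^{D-1}$ forces the mild annoyance of peeling off the innermost shell, and one must check that the remainder from those finitely many terms is genuinely bounded by a $D$-dependent constant, uniformly over $R>\sqrt D$ and $\alpha>D+1$. The one substantive analytic point — that $\alpha>D+1$, rather than merely $\alpha>D$, is what keeps $\lambda$ independent of $\alpha$ — is then immediate.
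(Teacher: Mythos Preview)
Your argument is correct and follows the same sum-to-integral comparison as the paper's proof; the only difference is bookkeeping --- you use centered unit cubes and carry the shift $\sqrt D/2$ into the integrand, whereas the paper uses orthant-aligned cubes (each lattice point with all nonzero coordinates gets the unit cube lying between it and the origin, so the integrand is simply $1/\Norm{\vec x}^\alpha$) and then handles the coordinate hyperplanes by a short induction on dimension. Both routes collapse to the same radial integral and invoke $\alpha>D+1$ at exactly the same point, namely to absorb the factor $1/(\alpha-D)$ into a $D$-only constant $\lambda$.
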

\begin{proof}
	The proof of this bound is straightforward. 
	For simplicity, we first assume none of the coordinates of $\vec r$ is zero. 
	Since $\frac{1}{x^\alpha}$ is a decreasing function of $x$ for all $\alpha>0$, we can always bound the sum over such $\vec r$ by an integral
	\begin{align}
	&{\sum_{\substack{\vec r\in \Lambda \\ \Norm{\vec r}\geq R}}}' \frac{1}{\Norm{\vec r}^\alpha} \leq \int_{\Norm{\vec r}\geq R-\sqrt{D}} \frac{\d^D \vec r}{\Norm{\vec r}^\alpha}\nonumber\\
	&=\frac{2\pi^{\frac D 2}}{\Gamma(\frac D2)} \int_{R-\sqrt{D}}^\infty \frac{\d r}{r^{\alpha-D+1}}
	\leq \frac{g(D)}{(R-\sqrt D)^{\alpha-D}},
	\end{align}
	where $\sum'$ denotes the sum over $\vec r$ with no zero coordinate and $g(D)\equiv 2\pi^{\frac D 2}\big/\Gamma(\frac{D}{2})$.
	
	Next, consider $\vec r$ with exactly one zero coordinate. 
	These sites lie on $D$ hyperplanes, each of dimension $(D-1)$. 	Therefore the contribution from them can be evaluated using the above integral with $D\rightarrow D-1$: 
	\begin{align}
	\frac{Dg(D-1)}{(R-\sqrt{D-1})^{\alpha-D+1}}< \frac{Dg(D-1)}{(R-\sqrt{D})^{\alpha-D}}.
	\end{align}
	By repeating this argument for the sums over $\vec r$ with different number of zero coordinates, we arrive at 
	\begin{align}
	{\sum_{\substack{\vec r\in \Lambda \\ \Norm{\vec r}\geq R}}} \frac{1}{\Norm{\vec r}^\alpha} \leq \frac{\lambda}{(R-\sqrt D)^{\alpha-D}},
	\end{align}
	where $\lambda = \sum_{d = 0}^{D} \binom{D}{d}g(D-d)$ is a constant independent of $R$.
\end{proof}
\begin{lemma}
	\label{LEM_SIMPLE_SUM_exp}
	Let $\Lambda$ be a $D$-dimensional lattice and $\vec r$ be the coordinates of sites in $\Lambda$.
	For all $R>0$, there exists a constant $\lambda$ that may depend on $\beta,D$ but not on $R$ such that:
	\begin{align}
	\sum_{\substack{\vec r\in \Lambda\\\Norm{\vec r}\geq R}}\frac{\norm{\vec r}^\beta}{e^{\Norm{\vec r}}} \leq \frac{\lambda R^{\beta+D-1}}{e^{R}},
	\end{align}
	where $\beta$ is a positive constant.
	In particular, it also implies that the sum $\sum_{\substack{\vec r\in \Lambda}}$ converges.
\end{lemma}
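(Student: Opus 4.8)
The plan is to mirror the proof of \Cref{LEM_SIMPLE_SUM} almost line for line, replacing the power-law tail $1/\Norm{\vec r}^\alpha$ by the mixed exponential tail $\Norm{\vec r}^\beta/e^{\Norm{\vec r}}$ and the resulting $r$-integral by an incomplete Gamma integral. First I would apply the same coordinate-splitting device as in \Cref{LEM_SIMPLE_SUM}: the sites $\vec r$ with exactly $d$ vanishing coordinates lie on $\binom{D}{d}$ affine translates of a $(D-d)$-dimensional lattice, so it suffices to bound the contribution of the sites with \emph{no} zero coordinate and then add $\binom{D}{d}$ copies of the $(D-d)$-dimensional estimate; since $R^{\beta+(D-d)-1}\le R^{\beta+D-1}$ for $R\ge 1$, this merely inflates the final constant $\lambda$ by $\sum_{d=0}^D\binom{D}{d}=2^D$ together with some dimension-dependent factors.

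For the sites with all coordinates nonzero I would pass to an integral. The one genuinely new feature relative to \Cref{LEM_SIMPLE_SUM} is that $g(x)=x^\beta e^{-x}$ is \emph{not} monotone decreasing — it increases on $(0,\beta)$ and decreases only on $(\beta,\infty)$ — so the termwise comparison ``$\sum\le\int$'' is valid only once every site in the sum has norm above $\beta$, i.e. once $R-\sqrt D>\beta$, which holds whenever $R$ exceeds a constant $R_0=R_0(\beta,D)$. For such $R$,
\begin{align}
{\sum_{\substack{\vec r\in\Lambda\\\Norm{\vec r}\ge R}}}' \frac{\Norm{\vec r}^\beta}{e^{\Norm{\vec r}}}
&\le \int_{\Norm{\vec x}\ge R-\sqrt D}\frac{\Norm{\vec x}^\beta}{e^{\Norm{\vec x}}}\,\d^D\vec x \nonumber\\
&=\frac{2\pi^{D/2}}{\Gamma(D/2)}\int_{R-\sqrt D}^\infty r^{\beta+D-1}e^{-r}\,\d r,\nonumber
\end{align}
and I would control the radial integral with the elementary fact that, for $x\ge 2(s-1)$, the map $r\mapsto r^{s-1}e^{-r/2}$ is nonincreasing on $[x,\infty)$, whence $\int_x^\infty r^{s-1}e^{-r}\,\d r\le x^{s-1}e^{-x/2}\int_x^\infty e^{-r/2}\,\d r=2x^{s-1}e^{-x}$. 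Applying this with $s=\beta+D$ and $x=R-\sqrt D$ (legitimate once $R\ge\sqrt D+2(\beta+D-1)$, which I absorb into $R_0$) gives a bound $\lesssim(R-\sqrt D)^{\beta+D-1}e^{-(R-\sqrt D)}\le e^{\sqrt D}R^{\beta+D-1}e^{-R}$; collecting all $\beta$- and $D$-dependent constants then yields the claim for $R\ge R_0$. For $R$ below $R_0$ the left side is at most the full sum $\sum_{\vec r\in\Lambda}\Norm{\vec r}^\beta e^{-\Norm{\vec r}}$, which is finite (e.g. by bounding $e^{-\Norm{\vec r}}\le C_p/\Norm{\vec r}^p$ and invoking \Cref{LEM_SIMPLE_SUM}), and is therefore $\le\lambda R^{\beta+D-1}e^{-R}$ on the bounded range of interest once $\lambda$ is enlarged. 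The final assertion of the statement — convergence of $\sum_{\vec r\in\Lambda}$ — is immediate from the same estimate.

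I do not anticipate a serious obstacle: the only delicate point is the non-monotonicity of $x^\beta e^{-x}$, which confines the integral-comparison argument to the regime $R\gtrsim\beta$ and forces the separate (but trivial) treatment of smaller $R$; everything else — the coordinate splitting, the passage to spherical coordinates, and the incomplete-Gamma tail estimate — is routine and exactly parallels \Cref{LEM_SIMPLE_SUM}.
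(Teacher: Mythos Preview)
Your proposal is correct and follows essentially the same approach as the paper: both split into the regime where $R$ exceeds a threshold depending on $\beta$ (so that $x^\beta e^{-x}$ is monotone and the integral comparison from \Cref{LEM_SIMPLE_SUM} goes through) and the bounded regime $R$ below that threshold (handled by noting the sum is finite and $R^{\beta+D-1}e^{-R}$ is bounded below). Your treatment is in fact slightly more explicit than the paper's---you spell out the coordinate-splitting and give a concrete incomplete-Gamma tail estimate where the paper simply asserts the bound---but the structure is identical.
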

\begin{proof}
	The proof of this lemma follows the same idea as of \Cref{LEM_SIMPLE_SUM}.
	However, note that the function $x^\beta e^{-x}$ is a decreasing function of $x$ only when $x\geq x_0$ for some $x_0$ that depends only on $\beta$.
	Therefore, if $R\geq x_0$, we follow the exact same lines as in the proof of \Cref{LEM_SIMPLE_SUM}.
	For example, if none of the coordinates of $\vec r$ is zero, we can bound
	\begin{align}
	&{\sum_{\substack{\vec r\in \Lambda \\ \Norm{\vec r}\geq R}}}' \frac{\Norm{\vec r}^\beta}{e^{\Norm{\vec r}}} \leq \int_{\Norm{\vec r}\geq R-\sqrt{D}}  \frac{\Norm{\vec r}^\beta}{e^{\Norm{\vec r}}}\d^D\vec r\nonumber\\
	&=\frac{2\pi^{\frac D 2}}{\Gamma(\frac{D}{2})} \int_{R-\sqrt{D}}^\infty \frac{r^{\beta+D-1}\d r}{e^r}\\
	&\leq \lambda_1\frac{(R-\sqrt D)^{\beta+D-1}}{e^{R-\sqrt D}}
	\leq \lambda_2\frac{R^{\beta+D-1}}{e^{R}},
	\end{align}
	for some constants $\lambda_1,\lambda_2$ that depend only on $\beta,D$.
	
	On the other hand, if $R<x_0$, we consider
	\begin{align}
	\lambda = \max\left\{\frac{e^R}{R^{\beta+D-1}},\sum_{\substack{\vec r\in \Lambda\\\Norm{\vec r}\geq R}}\frac{\norm{\vec r}^\beta}{e^{\Norm{\vec r}}} \right\}.
	\end{align}
	The lemma should follow if we can argue that $\lambda$ can be chosen independently of $R$.
	Indeed, since $1\leq R<x_0$ and from the previous calculation, we know that the sum over $\vec r$ converges to a constant that depends only on $\beta,D$. 
	This concludes the proof of \Cref{LEM_SIMPLE_SUM_exp}.
\end{proof}
\begin{lemma}
	\label{LEM_REP_CON}
	Let $\vec a,\vec b,\vec c$ be three distinct sites in a $D$-dimensional lattice $\Lambda$. 
	For all $\alpha>D$, 
	\begin{align}
	\sum_{\vec b\in \Lambda} \frac{1}{\norm{\vec a-\vec b}^\alpha}\frac{1}{\norm{\vec b-\vec c}^\alpha}\leq \frac{\lambda 2^\alpha}{\norm{\vec{a}-\vec{c}}^\alpha},
	\end{align}
	where $\lambda$ is a constant independent of $\vec a, \vec c,\alpha$.
\end{lemma}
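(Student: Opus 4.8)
The plan is to decompose the sum over $\vec b$ according to whether $\vec b$ lies closer to $\vec a$ or to $\vec c$, and in each piece to discard the factor that is bounded below by half of $\Norm{\vec a-\vec c}$, thereby reducing the two-point sum to a single convergent one-point sum that is controlled by \Cref{LEM_SIMPLE_SUM}.

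\textbf{Step 1 (triangle-inequality split).} Since $\Norm{\vec a-\vec c}\leq\Norm{\vec a-\vec b}+\Norm{\vec b-\vec c}$ for every site $\vec b$, at least one of $\Norm{\vec a-\vec b}\geq\tfrac12\Norm{\vec a-\vec c}$ or $\Norm{\vec b-\vec c}\geq\tfrac12\Norm{\vec a-\vec c}$ must hold. Let $\Lambda_a$ and $\Lambda_c$ be the sets of lattice sites for which the first, respectively the second, inequality holds; then $\Lambda=\Lambda_a\cup\Lambda_c$, and the fact that these two sets may overlap only weakens the upper bound we are after.

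\textbf{Step 2 (reduction to a one-point sum).} On $\Lambda_a$ we bound $\Norm{\vec a-\vec b}^{-\alpha}\leq 2^\alpha\Norm{\vec a-\vec c}^{-\alpha}$ and pull this factor out of the sum:
\begin{align}
\sum_{\vec b\in\Lambda_a}\frac{1}{\Norm{\vec a-\vec b}^\alpha\Norm{\vec b-\vec c}^\alpha}
\leq\frac{2^\alpha}{\Norm{\vec a-\vec c}^\alpha}\sum_{\vec b\neq\vec c}\frac{1}{\Norm{\vec b-\vec c}^\alpha}
=\frac{2^\alpha}{\Norm{\vec a-\vec c}^\alpha}\sum_{\vec r\in\Lambda,\,\vec r\neq 0}\frac{1}{\Norm{\vec r}^\alpha},
\end{align}
the last step being translation invariance of the lattice. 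By \Cref{LEM_SIMPLE_SUM}, after peeling off the $\mathcal O_D(1)$ sites within a fixed radius (each contributing at most $1$ since $\Norm{\vec r}\geq 1$) and applying the lemma to the tail, the sum $\sum_{\vec r\neq 0}\Norm{\vec r}^{-\alpha}$ is at most a constant $\lambda_0$ depending only on $D$. Running the identical argument on $\Lambda_c$ with the roles of $\vec a$ and $\vec c$ interchanged gives the same bound, and adding the two contributions yields the claim with $\lambda=2\lambda_0$.

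\textbf{Main obstacle.} The proof is otherwise a direct consequence of the triangle inequality and the monotonicity of $x\mapsto x^{-\alpha}$; the one point that needs a little care is the assertion that $\lambda_0$ — and hence $\lambda$ — can be chosen independently of $\alpha$. This follows because the tail estimate supplied by \Cref{LEM_SIMPLE_SUM} has the shape $\lambda(D)/(R-\sqrt D)^{\alpha-D}$, which is at most $\lambda(D)$ once $R$ is taken with $R-\sqrt D\geq 1$, so the $\alpha$-dependence drops out uniformly on the range relevant to our applications (where in fact $\alpha>D+1$). I would therefore present Steps~1 and~2 essentially verbatim and spend the only genuine effort on pinning down this $\alpha$-uniformity of the constant.
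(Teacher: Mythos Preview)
Your argument is correct and is precisely the standard Hastings--Koma ``reproducing condition'' proof that the paper defers to by citing Ref.~\cite{HK} in lieu of giving its own proof. The triangle-inequality split followed by pulling out the large factor and bounding the remaining one-point sum is exactly how this is done there, so your write-up can stand in for the citation; your observation about the $\alpha$-uniformity of $\lambda$ (which indeed requires staying bounded away from $\alpha=D$, as in the paper's applications where $\alpha>D+1$) is the only point worth flagging explicitly.
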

\begin{proof}
	A proof of the lemma is presented in Ref.~\cite{HK}.
\end{proof}
\begin{lemma}
	\label{LEM_REP_CON_MIX}
	Let $\vec a,\vec b,\vec c$ be three distinct sites in a $D$-dimensional lattice $\Lambda$. 
	For all $\alpha>D$, $\gamma\in(0,1)$, and positive integers $\beta\in \mathbb{Z}^+$, there exists a constant $\gamma'\in(0,1)$ such that
	\begin{align}
	&\sum_{\vec b\in \Lambda} \frac{1}{\norm{\vec a-\vec b}^\alpha}\frac{\norm{\vec b-\vec c}^\beta}{e^{\gamma\norm{\vec b-\vec c}}}
	\leq \frac{\lambda \left(\frac{4}{1-\gamma'}\right)^\alpha}{\norm{\vec{a}-\vec{c}}^\alpha}
	+\frac{\lambda'\norm{\vec a-\vec c}^{\beta+D-1}}{ e^{\gamma'\norm{\vec a-\vec c}}},
	\end{align}
	where $\lambda,\lambda'$ are constants that may depend on $\beta,D$, but not on $\vec a, \vec c,\alpha$.
\end{lemma}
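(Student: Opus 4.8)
The plan is to mimic the proof of \cref{LEM_REP_CON} (the pure power-law convolution), the only new ingredient being the handling of the stretched-exponential factor $x^\beta e^{-\gamma x}$. Write $R=\norm{\vec a-\vec c}$ and partition $\Lambda$ into the sites at least as close to $\vec c$ as to $\vec a$, namely $\mathcal N_{\vec c}=\{\vec b:\norm{\vec b-\vec c}\le\norm{\vec a-\vec b}\}$, and the rest, $\mathcal N_{\vec a}=\Lambda\setminus\mathcal N_{\vec c}$. Since $\norm{\vec a-\vec b}+\norm{\vec b-\vec c}\ge R$, on $\mathcal N_{\vec c}$ we have $\norm{\vec a-\vec b}\ge R/2$ and on $\mathcal N_{\vec a}$ we have $\norm{\vec b-\vec c}\ge R/2$. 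The contribution of $\mathcal N_{\vec c}$ will yield the first (algebraic) term and that of $\mathcal N_{\vec a}$ the second (stretched-exponential) term.

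On $\mathcal N_{\vec c}$ I would bound $\norm{\vec a-\vec b}^{-\alpha}\le 2^\alpha R^{-\alpha}$ and then extend $\sum_{\vec b\in\mathcal N_{\vec c}}\norm{\vec b-\vec c}^\beta e^{-\gamma\norm{\vec b-\vec c}}$ to a sum over all of $\Lambda$; this residual sum converges to a constant $\kappa$ depending only on $\beta,\gamma,D$ (the $\vec b=\vec c$ term vanishes because $\beta\ge 1$, and convergence is as in the proof of \cref{LEM_SIMPLE_SUM_exp}). Hence the $\mathcal N_{\vec c}$-contribution is at most $2^\alpha\kappa/R^\alpha\le\lambda\,(4/(1-\gamma'))^\alpha/R^\alpha$ for any $\gamma'\in(0,1)$, since $4/(1-\gamma')>2$; this fixes the first term.

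On $\mathcal N_{\vec a}$ I would fix $\gamma'=\gamma/2\in(0,\gamma)$ and first treat $R$ large enough that $R/2\ge\beta/\gamma$, so that $x\mapsto x^\beta e^{-\gamma x}$ is decreasing on $[R/2,\infty)$. Split further by whether $s\equiv\norm{\vec a-\vec b}$ is $\le R/2$ or $>R/2$. If $s\le R/2$, then $\norm{\vec b-\vec c}\ge R-s\ge R/2$, so $\norm{\vec b-\vec c}^\beta e^{-\gamma\norm{\vec b-\vec c}}\le(R-s)^\beta e^{-\gamma(R-s)}\le R^\beta e^{-\gamma R}e^{\gamma s}$; summing the unit-width-shell count $O(s^{D-1})$ against $s^{-\alpha}e^{\gamma s}$ over $1\le s\le R/2$ costs a constant times $e^{\gamma R/2}$, giving a bound $\lesssim R^\beta e^{-\gamma R/2}$. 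If $s>R/2$, then $\norm{\vec b-\vec c}>s$ and $\sum_{s>R/2}O(s^{D-1})s^{-\alpha}s^\beta e^{-\gamma s}$ is, for $R$ large, dominated by its first term, i.e.\ $\lesssim R^{\beta+D-1}e^{-\gamma R/2}$ (using $\alpha>D$ so that $s^{D-1-\alpha}\le 1$ for $s\ge1$, and $R\ge 2$ to absorb the $2^{-\alpha}$). Both sub-cases are $\lesssim R^{\beta+D-1}e^{-\gamma' R}$ with $\gamma'=\gamma/2$, giving the second term with $\lambda'$ depending only on $\beta,\gamma,D$.

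For the loose ends, the remaining range $1\le R<\max\{2,2\beta/\gamma\}$ consists of finitely many values; for each, the left-hand side is bounded by a constant independent of $\alpha$ (again the $\vec b=\vec c$ term vanishes and $\sum_{\vec b\ne\vec c}\norm{\vec b-\vec c}^\beta e^{-\gamma\norm{\vec b-\vec c}}$ converges), so enlarging $\lambda,\lambda'$ to the maximum over this finite set — the same device used at the end of the proof of \cref{LEM_SIMPLE_SUM_exp} — closes the argument. I expect the main obstacle to be bookkeeping rather than analysis: one must ensure $\lambda,\lambda'$ genuinely do not depend on $\alpha$, which is exactly why the split is made at the $\alpha$-independent radius $R/2$ and why one checks that the surviving $\alpha$-dependence appears only as $2^\alpha/R^\alpha\le1$ (once $R\ge2$) or is harmlessly absorbed into the prefactor $(4/(1-\gamma'))^\alpha\ge2^\alpha$. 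The geometric decomposition and the lattice sums themselves are routine.
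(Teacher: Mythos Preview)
Your argument is correct and follows the same two-regime idea as the paper: split according to whether $\vec b$ is close to $\vec c$, bound the power-law factor in the near regime and the stretched-exponential factor in the far regime. The paper's execution is a bit more economical. It splits on a ball of radius $\mu\ell$ around $\vec c$ (with $\mu\in(0,1)$ a free parameter) rather than on the bisecting half-space, and in the far regime it simply discards the power-law factor via $\norm{\vec a-\vec b}^{-\alpha}\le 1$ and applies \cref{LEM_SIMPLE_SUM_exp} once\textemdash this sidesteps your further case split by $s\lessgtr R/2$ and the accompanying shell sums. Keeping $\mu$ free also lets the paper take $\gamma'=\gamma\mu=\gamma/(2-\gamma)$, which can be made arbitrarily close to $1$ when $\gamma$ is; your choice $\gamma'=\gamma/2$ is capped at $1/2$, which is fine for the lemma as stated but slightly less flexible when the lemma is iterated in the $\doverlap$ estimate.
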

\begin{proof}
	Without loss of generality, assume $\vec c = 0$.
	Let $\ell = \norm{\vec c - \vec a}=\norm{\vec a}$ be the distance between $\vec c$ and $\vec a$.
	We need to prove
	\begin{align}
	\sum_{\vec b\in \Lambda} \frac{1}{\norm{\vec a-\vec b}^\alpha}\frac{\norm{\vec b}^\beta}{e^{\gamma\norm{\vec b}}}
	\leq \frac{\lambda \left(\frac{4}{1-\gamma'}\right)^\alpha}{\ell^\alpha}
	+\frac{\lambda'\ell^{\beta+D-1}}{e^{\gamma' \ell}}.
	\end{align}
	
	Let $\B_{\mu \ell}$ be a $D$-ball of radius $\mu \ell$ centered around $\vec c$ for some arbitrary constant $\mu\in(0,1)$.
	We shall divide the sum over $\vec b$ into two regimes, corresponding to 
	$\vec b$ inside and outside $\B_{\mu \ell}$.
	
	In the first regime where $\vec b$ is inside $\B_{\mu \ell}$, we can show using the triangle inequality that $\norm{\vec a-\vec b}\geq (1-\mu) \ell$. Therefore, the sum over these $\vec b$ can be bounded by
	\begin{align}
	\frac{1}{\left((1-\mu)\ell\right)^\alpha}\sum_{\vec b\in \B_{\mu \ell}}\frac{\norm{\vec b}^\beta}{e^{\norm{\vec b}}}\leq \frac{\lambda \left(\frac{2}{1-\mu}\right)^\alpha}{\ell^\alpha},\label{EQ_REP_CON_MIX_1}
	\end{align} 
	where we have used the fact that $\sum_{\vec b\in \B_{\mu \ell}}\frac{\norm{\vec b}^\beta}{e^{\norm{\vec b}}}$ converges and is bounded by a constant $\lambda$ which may depend only on $D,\beta$.
	
	In the second regime, we bound $\norm{\vec a-\vec b}\geq 1$ to obtain
	\begin{align}
	\sum_{\vec b\notin \B_{\mu \ell}} \frac{1}{\norm{\vec a-\vec b}^\alpha}\frac{\norm{\vec b}^\beta}{e^{\gamma\norm{\vec b}}}
	&\leq \sum_{\substack{\vec b\notin \B_{\mu \ell}}} \frac{\norm{\vec b}^\beta}{e^{\gamma\norm{\vec b}}}\nonumber\\
	&\leq \lambda' \frac{ \ell^{\beta+D-1}}{2^\alpha e^{\gamma \mu \ell}},\label{EQ_REP_CON_MIX_2}
	\end{align}
	where the last sum is bounded using \Cref{LEM_SIMPLE_SUM_exp} and noting that $\mu<1$.
	
	Combining \cref{EQ_REP_CON_MIX_1}, \cref{EQ_REP_CON_MIX_2}, we arrive at a bound
	\begin{align}
		\sum_{\vec b\in \Lambda} \frac{1}{\norm{\vec a-\vec b}^\alpha}\frac{\norm{\vec b}^\beta}{e^{\gamma\norm{\vec b}}}
		\leq \frac{\lambda \left(\frac{2}{1-\mu}\right)^\alpha}{\ell^\alpha}
		+\frac{\lambda'\ell^{\beta+D-1}}{e^{\gamma \mu \ell}}.
	\end{align}
	Let $\gamma' = \gamma\mu$ and take $\mu\leq \frac{1}{2-\gamma}$, it is straightforward to show that $\frac{2}{1-\mu}\leq \frac{4}{1-\gamma'}$, and therefore,
	\begin{align}
	\sum_{\vec b\in \Lambda} \frac{1}{\norm{\vec a-\vec b}^\alpha}\frac{\norm{\vec b}^\beta}{e^{\gamma\norm{\vec b}}}
	\leq \frac{\lambda \left(\frac{4}{1-\gamma'}\right)^\alpha}{\ell^\alpha}
	+\frac{\lambda'\ell^{\beta+D-1}}{e^{\gamma' \ell}}.
	\end{align}
	Note that if we choose $\mu = \frac{1}{2-\gamma}$, then $\gamma'=\frac{\gamma}{2-\gamma}$ takes on a value between 0 and 1,  which can be arbitrarily close to 1.
\end{proof}
\subsection{Parameterizing a convex set}
\label{Subsec_para}
In this subsection, we show how we evaluate the sum over $\vec a$ in \cref{EQ_sumveca}. 
First, we parameterize a convex set by the distance to its boundary.
The following lemma simplifies a sum over every site in a convex set to a sum over the above distance, multiplied by the boundary area of the set.
\begin{lemma}
	\label{LEM_PARA}
	Let $A\subset \mathbb R^D$ be a compact and convex set in $\mathbb R^D$ with non-empty interior. Let $C\subset \mathbb R^D$ be another subset disjoint from $A$,
	and let $\ell = \dist{A,C}$ be the smallest distance between elements of the two sets.
	Furthermore, we denote by $\ell_{\vec a} = \dist{\vec a,C}$ the minimal distance from a given lattice site $\vec a$ in $A$ to $C$.
	For a decreasing function $f:\mathbb R\rightarrow \mathbb R$, we shall have
	\begin{align}
	\sum_{\vec a\in  A \cap \Lambda}f(\ell_{\vec a})\leq 2\eta \Phi( A)\sum_{\mu=0}^\infty f(\ell+\mu), 
	\end{align} 
	where $\eta$ is a constant that may depend only on $D$ and $\Phi( A)$ is the boundary area of $ A$.
\end{lemma}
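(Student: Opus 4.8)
\emph{Proof plan.} The strategy is to sort the lattice sites of $A$ into ``shells'' according to their distance to the boundary $\partial A$, so that the radial decay of $f$ can be extracted one shell at a time. For each integer $\mu\geq 0$, set
\[
  A_\mu \;=\; \bigl\{\vec a\in A\cap\Lambda:\; \mu\leq\dist{\vec a,\partial A}<\mu+1\bigr\},
\]
which partitions $A\cap\Lambda$ into finitely many pieces (finite because $A$ is bounded). Then $\sum_{\vec a\in A\cap\Lambda}f(\ell_{\vec a})=\sum_{\mu\geq 0}\sum_{\vec a\in A_\mu}f(\ell_{\vec a})$, so the lemma will follow from two claims: \emph{(i)} $f(\ell_{\vec a})\leq f(\ell+\mu)$ for every $\vec a\in A_\mu$; and \emph{(ii)} $\abs{A_\mu}\leq 2\eta\,\Phi(A)$ for some constant $\eta=\eta(D)$. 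Granting these,
\[
  \sum_{\vec a\in A\cap\Lambda}f(\ell_{\vec a})\;\leq\;\sum_{\mu\geq 0}\abs{A_\mu}\,f(\ell+\mu)\;\leq\;2\eta\,\Phi(A)\sum_{\mu\geq 0}f(\ell+\mu).
\]

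\emph{Claim (i).} Fix $\vec a\in A_\mu$ and an arbitrary $\vec c\in C$. Since $\vec a\in A$, $\vec c\notin A$, and $A$ is closed and convex, the segment from $\vec a$ to $\vec c$ meets $A$ in a subsegment $[\vec a,\vec p]$ with $\vec p\in\partial A$. As $\vec p$ lies on that segment, $\Norm{\vec a-\vec c}=\Norm{\vec a-\vec p}+\Norm{\vec p-\vec c}$; moreover $\Norm{\vec a-\vec p}\geq\dist{\vec a,\partial A}\geq\mu$ because $\vec p\in\partial A$, and $\Norm{\vec p-\vec c}\geq\dist{A,C}=\ell$ because $\vec p\in A$ and $\vec c\in C$. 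Hence $\Norm{\vec a-\vec c}\geq\ell+\mu$ for every $\vec c\in C$, so $\ell_{\vec a}\geq\ell+\mu$, and since $f$ is decreasing, $f(\ell_{\vec a})\leq f(\ell+\mu)$. This short step is exactly where compactness and convexity of $A$ enter.

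\emph{Claim (ii).} The plan is to convert the lattice count into a volume and then invoke a standard convex-geometry estimate. The unit cubes $Q_{\vec a}=\vec a+[-\tfrac12,\tfrac12]^D$ centered at the points of $A_\mu$ are pairwise disjoint with unit volume, so $\abs{A_\mu}=\mathrm{vol}\bigl(\bigcup_{\vec a\in A_\mu}Q_{\vec a}\bigr)$. Since $y\mapsto\dist{y,\partial A}$ is $1$-Lipschitz and $\mathrm{diam}(Q_{\vec a})=\sqrt D$, this union lies in a boundary layer of $A$ (slightly enlarged) of width $\O{1}$, so it suffices to prove: for a compact convex body $K$ with non-empty interior and $0\leq a\leq b$,
\[
  \mathrm{vol}\bigl(\{y\in K:\ a\leq\dist{y,\partial K}<b\}\bigr)\;\leq\;(b-a)\,\Phi(K).
\]
This follows from the coarea formula applied to the $1$-Lipschitz function $g(y)=\dist{y,\partial K}$ (with $\Norm{\nabla g}=1$ almost everywhere on $\{g>0\}$): the left-hand side equals $\int_a^b\mathcal H^{D-1}\bigl(g^{-1}(t)\bigr)\,dt$, and each level set $g^{-1}(t)$ is the boundary of the inner parallel body $\{y:\dist{y,\partial K}\geq t\}\subseteq K$, whose surface area is at most $\Phi(K)$ by monotonicity of surface area under inclusion of convex bodies. (Equivalently, one can use the area formula for the inward-normal map $\partial K\times[a,b)\to K$, whose Jacobian is at most $1$ because the second fundamental form of a convex boundary is non-negative.) Applying this with $K=A$ for the shells that sit inside $A$, and with $K$ a suitable inner parallel body of $A$ (still of surface area $\leq\Phi(A)$) for the few shells nearest $\partial A$, and absorbing the $\O{\sqrt D}$ widening into $\eta$ and the factor $2$, gives $\abs{A_\mu}\leq 2\eta\,\Phi(A)$; this last step uses only a mild non-degeneracy of $A$, which holds automatically for the $D$-balls and hyperrectangles that arise in the applications.

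\emph{Main obstacle.} Claim (i) and the telescoping are routine; the real work is Claim (ii). Neither ingredient there is deep, but making the per-shell bound uniform in $A$ — and honestly handling the innermost shells, where the inner parallel body can degenerate — is the only delicate point, and it is the step I would write out most carefully.
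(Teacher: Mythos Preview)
Your overall strategy---partition $A\cap\Lambda$ into shells $A_\mu$ by distance to $\partial A$, use $\ell_{\vec a}\geq\ell+\mu$ on each shell, and bound $\abs{A_\mu}$ by a constant times $\Phi(A)$---is exactly the paper's. Your proof of Claim~(i) is actually more explicit than the paper's, which simply asserts $\ell_{\vec a}\geq\ell+\mu$ without the segment argument.

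The genuine difference is in Claim~(ii). The paper counts lattice points in a shell by an elementary packing argument: points at distance $>\tfrac13$ from $\partial A$ are counted by disjoint balls of radius $\tfrac14$ inside $A$ (giving a volume bound), while points within $\tfrac13$ of $\partial A$ are counted by projecting them to $\partial A$ and showing the images are $\tfrac13$-separated (giving an area bound). The paper then proves separately, via the contraction property of the nearest-point projection onto a convex set, that inner parallel bodies have surface area at most $\Phi(A)$. Your route instead passes through the coarea formula for $\dist{\,\cdot\,,\partial A}$ and cites the same monotonicity-of-surface-area fact. Both are correct; the paper's version is more self-contained, while yours is shorter once one accepts the coarea formula. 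One small caveat: your sentence about taking ``$K$ a suitable inner parallel body of $A$'' for the shells nearest $\partial A$ is muddled---the issue there is cubes protruding \emph{outside} $A$, which calls for an outer layer, not an inner one; the paper sidesteps this entirely by its direct $\tfrac13$-projection count.
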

\begin{proof}
	Let us divide the set $ A\in \mathbb R^D$ into disjoint subsets $ S_\mu = \left\{\vec a \in A:\mu \leq \dist{\vec a,\partial  A} \leq \mu+1 \right\}$ for $\mu=0,1,\dots$
	Note that the assumption that the interior of $A$ is non-empty implies that $\dist{\vec a,\partial A}$ is not uniformly zero. 
	Roughly speaking, $ S_\mu$ contains the sites in $ A$ whose distances to the boundary $\partial  A$ are between $\mu$ and $\mu+1$. 
	Therefore, $\ell_{\vec a}\geq \ell+\mu$ for all $\vec a\in  S_\mu$.
	We then obtain
	\begin{align}
	\sum_{\vec a\in  A \cap \Lambda}f(\ell_{\vec a}) &= \sum_{\mu=0}^{\infty} \sum_{\vec a\in  S_\mu\cap \Lambda}f(\ell_{\vec a})\\
	&\leq  \sum_{\mu=0}^\infty f(\ell+\mu)\Abs{ S_\mu\cap \Lambda},\label{EQ_sum_over_mu}
	\end{align}
	where $\abs{ S_\mu\cap \Lambda}$ is the number of lattice sites that lie within $ S_\mu$.
	
	Let $ A_\mu = \left\{\vec a \in A:\dist{\vec a,\partial  A} \geq \mu \right\}$ be a subset of $ A$ containing sites at least a distance $\mu$ from the boundary of $ A$. 
	Clearly, $ S_\mu =  (A_\mu\setminus  A_{\mu+1})\cup\partial  A_{\mu+1}$ and $\partial  S_\mu = \partial  A_\mu \cup \partial  A_{\mu+1}$.
	Roughly speaking, $ S_\mu$ is a shell with the outer surface $ A_\mu$, the inner surface $ A_{\mu+1}$ and a unity thickness.
	The number of lattice sites in $ S_\mu$ will be bounded by $\eta\Phi(S_\mu)=\eta (\Phi( A_\mu)+\Phi( A_{\mu+1}))$ (see \Cref{Subsec_count} for the definition of the constant $\eta$).
	Since $A$ is compact and convex, $\Phi( A_{\mu+1})<\Phi( A_{\mu})<\Phi( A)$ (see \Cref{Subsec_convex_shrinkable}). 
	Therefore, we arrive at the lemma.
\end{proof}
\subsubsection{The number of lattice sites in a compact region}
\label{Subsec_count}
In this subsection, we shall provide an upper bound on the number of lattice sites inside a compact set $A\subset \mathbb R^D$.
We use this bound in \cref{EQ_sum_over_mu} to estimate the number of lattice sites in the set $\Abs{ S_\mu\cap \Lambda}$ by its boundary area.
Let $A_{>}=\left\{\vec a\in A\cap \Lambda:\dist{a,\partial A}>\frac{1}{3} \right\}$ be the set of lattice sites that are at least a distance $\frac{1}{3}$ away from the boundary $\partial A$, and let $A_{\leq}=A\setminus A_{>}$ be the other lattice sites of $A$. 

First, note that for every lattice site $\vec a$ in $A_{>}$, there exists a $D$-ball $\B_{1/4}(\vec a)$ of radius $\frac{1}{4}$ that contains no other lattice site and $\B_{1/4}(\vec a)\subset A$.
Therefore, the number of lattice sites in $A_{>}$ is at most $\mathcal V(A)/\mathcal V(\B_{1/4}(\vec a)) =\eta_1 \mathcal V(A)$, where $\mathcal V(A)$ is the volume of $A$ in $\mathbb R^D$ and $\eta_1 = \mathcal V(\B_{1/4}(\vec a))^{-1}$.

Next, to count the lattice sites in $A_{\leq}$, we note that for every $\vec a\in A_{\leq}$, we can select a point $f(\vec a)\in\partial A$ on the boundary such that $\norm{f(\vec a)-\vec a}\leq \frac{1}{3}$.
We now argue that $\norm{f(\vec a)-f(\vec b)}\geq \frac{1}{3}$ for all distinct lattice sites $\vec a\neq \vec b$ in $A_{\leq}$.
Indeed, since $\vec a,\vec b$ are distinct lattice sites, the least distance between them is 1, i.e.\ $\norm{\vec a-\vec b}\geq 1$. Using a triangle inequality, we can show that
\begin{align}
	\norm{f(\vec a)-f(\vec b)}&\geq \norm{\vec a-\vec b}-\norm{f(\vec a)-\vec a}-\norm{f(\vec b)-\vec b}\nonumber\\
	&\geq 1-\frac{1}{3}-\frac{1}{3}=\frac{1}{3}.
\end{align}
Therefore, a $D$-ball $\B_{1/6}(f(\vec a))$ around $f(\vec a)\in \partial A$ shall contain no $f(\vec b)$ of any other lattice site $\vec b\in A_{\leq}$.
Therefore, the number of lattice sites in $A_{\leq}$ is at most $\eta_2\Phi(A)$, where $\Phi(A)=\abs{\partial A}$ is the boundary area of $A$ and $\eta_2$ is the area of a $(D-1)$-dimensional disk of radius $1/6$.  

In summary, the number of lattice sites in $A$ is therefore at most $\eta_1 V(A)+\eta_2\Phi(A)$.
In particular, for a shell $A$ whose volume $\mathcal V(A)$ can be upper bounded by $\eta_3\Phi(A)$, the number of lattice sites will be at most $\eta\Phi(A)$, where $\eta=\eta_1\eta_3+e\eta_2$.
\subsubsection{Convex sets in $\mathbb{R}^{D}$ are shrinkable}
\label{Subsec_convex_shrinkable}

In the proof of Lemma \ref{LEM_PARA} [see the discussion after \cref{EQ_sum_over_mu}], we used the fact that $\Phi( A_{\mu})<\Phi( A)$. In this section, we will show that this property of $A$\textemdash which we term \emph{shrinkability}\textemdash holds if $A$ belongs to the class of convex and compact sets in $\mathbb R^D$. The formal definition is as follows:
\begin{definition}[Shrinkable set]
	A compact set $A\subset \mathbb R^D$ with boundary $\partial A$ is \emph{shrinkable} if, for all $r>0$, $A_r = \left\{\vec a \in A:\dist{\vec a,\partial  A} \geq r \right\}$, we have that $\Phi(A_r)=\abs{\partial A_r}\leq \abs{\partial A}=\Phi(A)$.
\end{definition}
In other words, a set is shrinkable if the surface area of the boundary of $A_r\subseteq A$ is no larger than that $A$. 
In this section, we will prove that convexity is a sufficient condition for shrinkability. 
Recall that a set is compact if it is both closed and bounded, whereas convexity is usually defined as follows:
\begin{definition}
	A set $A$ is convex if for any $x,y\in A$ and any $\theta$ such
	that $0\le\theta\le1$, we have $\theta x+(1-\theta)y\in A$. 
\end{definition}
Examples of convex sets include $D$-balls and hyperrectangles, which are also shrinkable.
To prove this holds in general, we will first show that if $A$ is convex, then $A_r$ is also convex (or empty) for all $r>0$. 
To do this, we formulate an equivalent definition of a convex set as an intersection of halfspaces. 
\begin{definition}
	A halfspace $\mathcal{H}$ is given by the points $\{x\in\mathbb{R}^{D}\mid a^{T}x\ge b\}$,
	where $a\in\mathbb{R}^{D}\backslash\{0\}$. 
\end{definition}
From this definition, it follows that halfspaces are convex sets. 
A folk lemma \cite{Boyd04} states that a closed set $A$ is convex iff
\[
A=\bigcap_{k\in I}\{\mathcal{H}_{k}\mid\mathcal{H}_{k}\text{ halfspace},A\subseteq\mathcal{H}_{k}\},
\]
for some countable index set $I$. 
In other words, $A$ is equivalent to the intersection of all halfspaces that contain it. 
Since convexity is preserved under arbitrary intersection, this implies that $A$ is convex.
The converse follows from the separating hyperplane theorem\textemdash see \cite{Boyd04} for details.

With this equivalent definition of convexity in hand, we will prove that $A_r$ is also convex.
\begin{lemma}
	\label{AR_CONV}
	If a compact set $A\subset\mathbb{R}^{D}$ is convex, then $A_r = \left\{\vec a \in A:\dist{\vec a,\partial  A} \geq r \right\}$ is convex (or empty) for all $r>0$.
\end{lemma}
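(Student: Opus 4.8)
The plan is to realize $A_r$ as the erosion of $A$ by a closed ball of radius $r$, and to inherit convexity from that of $A$. We may assume $\operatorname{int}(A)\neq\emptyset$: if not, then since $A$ is closed we have $\partial A=A$, so $\dist{a,\partial A}=0$ for every $a\in A$ and $A_r=\emptyset$ for all $r>0$, which the statement permits. Granting a non-empty interior, the first step is to prove
\[
A_r=\bigl\{x\in\mathbb R^D:\ \overline{\mathcal B_r(x)}\subseteq A\bigr\},
\]
where $\mathcal B_r(x)$ denotes the open $r$-ball about $x$. This rests on the elementary identity $\dist{a,\partial A}=\dist{a,\mathbb R^D\setminus A}$ for $a\in A$: any segment joining $a$ to a point outside $A$ must meet $\partial A$, giving ``$\le$'', while every point of $\partial A$ is a limit of points outside $A$, giving ``$\ge$''. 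Hence $\dist{a,\partial A}\ge r$ is equivalent to $\mathcal B_r(a)\subseteq A$, which, as $A$ is closed, is equivalent to $\overline{\mathcal B_r(a)}\subseteq A$; conversely $\overline{\mathcal B_r(x)}\subseteq A$ forces $x\in A$ and, since $\mathcal B_r(x)$ is open, $\mathcal B_r(x)\subseteq\operatorname{int}(A)$, so no boundary point lies within distance $r$ of $x$.

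With this description, convexity of $A_r$ drops out in two independent ways. Using the halfspace representation set up just above, write $A=\bigcap_{k\in I}\mathcal H_k$ with $\mathcal H_k=\{x:a_k^{\top}x\ge b_k\}$, which we may take to have $\norm{a_k}=1$. Then $\overline{\mathcal B_r(x)}\subseteq A$ iff $\overline{\mathcal B_r(x)}\subseteq\mathcal H_k$ for every $k$, i.e.\ iff $a_k^{\top}x-r\ge b_k$ for every $k$; thus $A_r=\bigcap_{k\in I}\{x:a_k^{\top}x\ge b_k+r\}$ is again an intersection of halfspaces, hence convex (or empty). Alternatively, and without that representation: if $a,b\in A_r$ and $z=\theta a+(1-\theta)b$ with $\theta\in[0,1]$, then for any $u$ with $\norm{u}\le r$ we have $z+u=\theta(a+u)+(1-\theta)(b+u)$ with $a+u,b+u\in\overline{\mathcal B_r(a)}\cup\overline{\mathcal B_r(b)}\subseteq A$, so $z+u\in A$ by convexity of $A$; letting $u$ range over the closed $r$-ball gives $\overline{\mathcal B_r(z)}\subseteq A$, i.e.\ $z\in A_r$. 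Either way $A_r$ is convex; it is also bounded (being contained in $A$) and closed (the intersection of the closed sets $A$ and $\{x:\dist{x,\partial A}\ge r\}$), hence compact.

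The only delicate point is the first step — identifying $A_r$ with the ball-erosion of $A$ — which genuinely uses that $A$ is closed with non-empty interior; everything afterwards is a one-line computation, so I anticipate no further obstacles. In the write-up I would feature the halfspace argument as the main proof, since the definitions immediately preceding the lemma are tailored to it, and relegate the averaging argument to a remark.
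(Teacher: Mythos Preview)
Your proof is correct and uses essentially the same halfspace-shifting argument as the paper. You are in fact more careful than the paper in two respects: you justify the identification $A_r=\bigcap_k\{x:a_k^{\top}x\ge b_k+r\}$ via the erosion description $A_r=\{x:\overline{\mathcal B_r(x)}\subseteq A\}$ (the paper simply asserts this identity), and you normalize $\norm{a_k}=1$ so that the shift by $r$ is actually the correct one.
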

\begin{proof}
	Write $A$ as the intersection of half-spaces $\mathcal{H}_{k}=\{x\in\mathbb{R}^{D}\mid a_{k}^{T}x\ge b_{k}\}$,
	for $k\in I$. Then $A_{r}$ is the intersection of the half-spaces
	given by $H_{k}^{r}=\{x\in\mathbb{R}^{D}\mid a_{k}^{T}x\ge b_{k}+r\}$.
	By the converse of the above lemma, $A_{r}$ is convex (or empty).
\end{proof}
To show that $A$ is shrinkable, we must show that $\Phi(A_r) = |\partial A_{r}|\le|\partial A| = \Phi(A)$.
Following a standard technique in the literature, we define the \textit{nearest-point projection} of $\mathbb{R}^{D}$ onto a convex set and then show that it is a contraction. 
The following lemma implies that such a mapping is well-defined.
\begin{lemma}
	\label{CONV_PROJ}
	Given a non-empty, compact and convex set $A\subseteq\mathbb{R}^{D}$ and
	a point $x\in \mathbb{R}^{D}$, there exists a unique point $p_{A}(x)\in A$
	such that 
	\begin{align*}
	p_{A}(x) & =\underset{y\in A}{\arg \min}\|x-y\|.
	\end{align*}
\end{lemma}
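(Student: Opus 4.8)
The plan is to dispatch the two halves of the statement---existence and uniqueness of the minimizer---by short, standard arguments. For existence, I would observe that $y\mapsto\Norm{x-y}$ is continuous on all of $\mathbb{R}^D$ and that $A$ is non-empty and compact by hypothesis; the extreme value theorem then guarantees that the restriction of this function to $A$ attains its infimum at some point, which we name $p_A(x)$. The only ingredient beyond Weierstrass's theorem is the standing assumption $A\neq\emptyset$.

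For uniqueness, I would suppose $y_1,y_2\in A$ both attain the minimal distance $d\coloneqq\dist{x,A}=\Norm{x-y_1}=\Norm{x-y_2}$. Convexity of $A$ places the midpoint $m\coloneqq\tfrac12(y_1+y_2)$ in $A$ as well, so $\Norm{x-m}\geq d$. On the other hand, the parallelogram law applied to the vectors $x-y_1$ and $x-y_2$ gives
\[
\Norm{x-y_1}^2+\Norm{x-y_2}^2=2\Norm{x-m}^2+\tfrac12\Norm{y_1-y_2}^2,
\]
whence $\Norm{x-m}^2=d^2-\tfrac14\Norm{y_1-y_2}^2\leq d^2$. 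Comparing with the previous inequality forces $\Norm{y_1-y_2}=0$, i.e.\ $y_1=y_2$, so the minimizer is unique and $p_A(x)$ is unambiguously defined.

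I do not anticipate any genuine obstacle: the argument is elementary and self-contained. The single point worth flagging is that uniqueness hinges on the strict convexity of the Euclidean norm---which is precisely what the parallelogram-law step encodes---so the conclusion would fail for a non-strictly-convex norm such as $\ell^\infty$, where the set of nearest points can be a whole face rather than a single point. Since every downstream use of this lemma (showing that the nearest-point projection $p_A$ is a contraction, and hence that $\Phi(A_r)\leq\Phi(A)$) relies only on the Euclidean structure, this subtlety is harmless here.
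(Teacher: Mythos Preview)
Your proof is correct and mirrors the paper's argument: existence by continuity on a compact set, uniqueness by showing the midpoint of two putative minimizers lies in $A$ and is strictly closer to $x$. The only cosmetic difference is that you invoke the parallelogram law where the paper phrases the same inequality as the altitude of an isosceles triangle being shorter than its legs.
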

\begin{proof}
	Since $A$ is compact, the continuous function $d_{x}(y)=\|x-y\|$ must achieve
	its minimum value on $A$. 
	
	Now suppose that minimum value of $d_x$ occurs at a point $y\in A$. We will
	show that $y$ is unique. Assume for the sake of contradiction that
	there exists some point $\tilde{y}\in A$ such that $d_{x}(y)=d_{x}(\tilde{y})$,
	but $y\neq\tilde{y}$. Then the set of points $x,y,$ and $\tilde{y}$
	form an isosceles triangle, with $\overline{y\tilde{y}}$ as the base.
	Dropping an altitude from $x$ intersects this line segment at the
	midpoint $m$ such that $\|x-m\|<\|x-y\|=\|x-\tilde{y}\|$. But $m$
	is a convex combination of $y$ and $\tilde{y}$, i.e.\ $m=\frac{1}{2}(y+\tilde{y})\in A$,
	so we have reached a contradiction. Thus, $y$ must be unique, and, therefore,
	$p_{A}(x)$ is well-defined. 
\end{proof}
The projection function $p_{A}(x)$ can be interpreted as generalizing
the concept of the orthogonal projection into an affine subspace. 
It is also well-known that the nearest point projection $p_A$ is a contraction mapping.
\begin{lemma}
	\label{CONTRACT_MAP}
	Given a nearest-point projection $p_{A}:\mathbb{R}^{D}\rightarrow A$ onto a convex set $A$, it holds 
	for all $x,y\in\mathbb{R}^{D}$ that 
	\begin{align*}
	\|p_{A}(x)-p_{A}(y)\| & \le\|x-y\|.
	\end{align*}
\end{lemma}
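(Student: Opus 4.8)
The plan is to prove the contraction property via the standard \emph{variational characterization} (obtuse-angle criterion) of the nearest-point projection, then apply that inequality twice and finish with Cauchy--Schwarz.

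\textbf{Step 1: the obtuse-angle criterion.} I would first show that for any $x\in\mathbb R^D$ the projection $u=p_A(x)$ (which exists and is unique by Lemma~\ref{CONV_PROJ}) satisfies
\[
\langle x-u,\ z-u\rangle\le 0\qquad\text{for every }z\in A .
\]
This is a consequence of minimality together with convexity: for $z\in A$ and $t\in(0,1]$, the point $u+t(z-u)$ lies in $A$, so $\norm{x-u-t(z-u)}^2\ge\norm{x-u}^2$. Expanding the left-hand side gives $-2t\,\langle x-u,z-u\rangle+t^2\norm{z-u}^2\ge0$; dividing by $t$ and letting $t\to 0^+$ yields the stated inequality.

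\textbf{Step 2: combine two instances.} Set $u=p_A(x)$ and $v=p_A(y)$. Applying Step 1 to $x$ with the feasible point $z=v$, and to $y$ with the feasible point $z=u$, gives
\[
\langle x-u,\ v-u\rangle\le 0,\qquad \langle y-v,\ u-v\rangle\le 0 .
\]
Adding these (after rewriting $\langle y-v,u-v\rangle=\langle v-y,v-u\rangle$) and collecting terms yields $\norm{u-v}^2\le\langle x-y,\ u-v\rangle$. By the Cauchy--Schwarz inequality, $\langle x-y,u-v\rangle\le\norm{x-y}\,\norm{u-v}$, so $\norm{u-v}\le\norm{x-y}$ (trivially true if $u=v$), which is exactly the claimed bound $\norm{p_A(x)-p_A(y)}\le\norm{x-y}$.

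There is no real obstacle here — the argument is routine. The only point that needs a little care is Step 1: justifying the passage $t\to 0^+$ after dividing by $t$, and being clear that compactness of $A$ is used only through Lemma~\ref{CONV_PROJ} to guarantee that $p_A$ is well-defined, whereas it is convexity of $A$ that actually produces the contraction.
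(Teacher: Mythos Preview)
Your proof is correct and is the standard variational argument. The paper takes a different, more geometric route: assuming $p_A(x)\neq p_A(y)$, it erects the two hyperplanes $H_x$, $H_y$ through $p_A(x)$ and $p_A(y)$ perpendicular to the segment $\overline{p_A(x)p_A(y)}$, argues by contradiction (using minimality and convexity) that $x$ must lie on the far side of $H_x$ from $p_A(y)$ and likewise for $y$, and then reads off the inequality from the fact that $x$ and $y$ both lie outside the slab between the hyperplanes. The two arguments are really the same idea in different clothing: the paper's ``$x$ lies on the opposite side of $H_x$ from $p_A(y)$'' is exactly your obtuse-angle inequality $\langle x-u,\,v-u\rangle\le 0$, and the slab argument is a geometric rendering of the Cauchy--Schwarz step. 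Your version has the advantage of handling all $x,y\in\mathbb R^D$ uniformly (the paper explicitly restricts to $x,y\notin A$, which suffices for its application), and it avoids the appeal to a figure.
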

\begin{proof} 	
	While the lemma can be proved for all $x,y\in\mathbb{R}^{D}$, for
	our purposes, we only need to consider $x,y\notin A$. Assume that
	$p_{A}(x)\neq p_{A}(y)$. Then consider the hyperplanes $H_{x}$ and
	$H_{y}$ that pass through $p_{A}(x)$ and $p_{A}(y)$ respectively,
	and are perpendicular to the line segment $\overline{p_{A}(x)p_{A}(y)}$.
	(See the geometric diagram in Fig.~\ref{CON_PROJ}.)
	
	We prove by contradiction that $x$ ($y$) and $p_{A}(y)$ ($p_{A}(x)$) lie on opposite sides of $H_{x}$ ($H_{y}$). 
	Suppose without loss of generality that $x$ and $p_{A}(y)$
	lie on the same side of $H_{x}$. Then the point where the altitude from $x$
	intersects the line segment $\overline{p_{A}(x)p_{A}(y)}$ would lie in $A$,
	contradicting the fact that $p_{A}(x)$ is the nearest-point in
	$A$ to $x$. Thus, $x$ ($y$) must lie on the opposite side of $H_{x}$
	($H_{y}$) from $p_{A}(y)$ ($p_{A}(x)$). Then, as shown in Fig.~\ref{CON_PROJ},
	the points $x$ and $y$ must fall outside the rectangular strip between
	the two hyperplanes. From this we conclude that $\|p_{A}(x)-p_{A}(y)\|\le\|x-y\|.$
\end{proof}
\begin{figure}[t]
	\centering
	\includegraphics[width=0.35\textwidth]{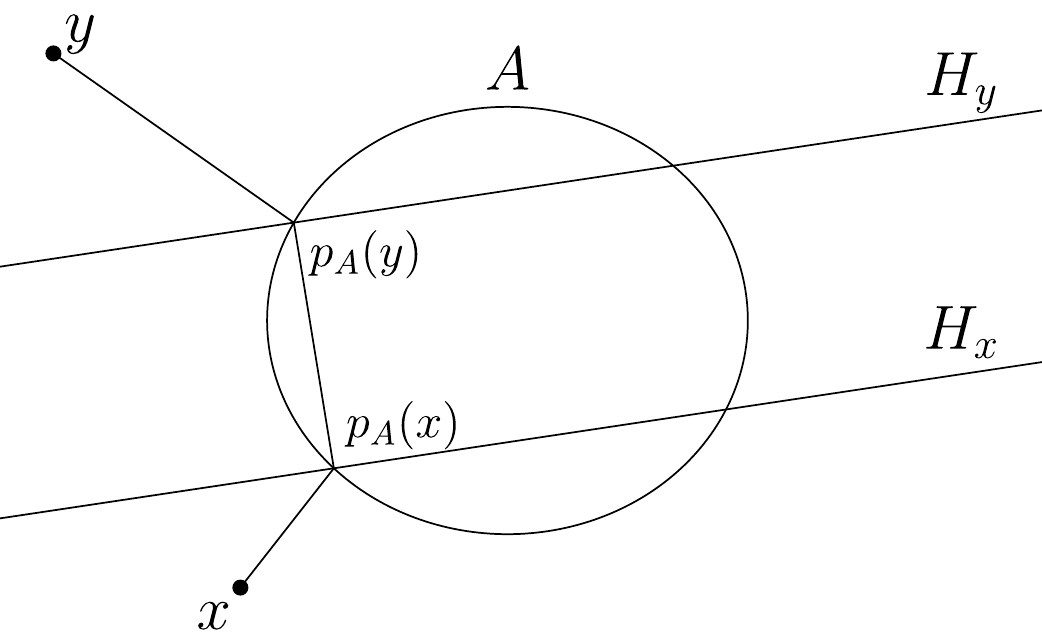}
	\caption{The nearest-point projection $p_{A}$ of two points $x$ and $y$ onto a
		compact set $A$ (oval). Also depicted are the line segment connecting
		the two image points $p_{A}(x)$ and $p_{A}(y)$, as well as the two
		hyperplanes orthogonal to it.}
	\label{CON_PROJ}
\end{figure}

The above result proves that the projection $p_{A}(x)$ is indeed
a contraction. Since contraction mappings do not increase lengths,
we can use this fact to demonstrate that the boundary of $A_{r}$
is less than that of $A$. 
\begin{theorem} If the set $A\subset\mathbb{R}^{D}$ is compact and convex, then $\Phi(A_r) = |\partial A_{r}|\le|\partial A| = \Phi(A)$. 
\end{theorem}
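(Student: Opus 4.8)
The plan is to run the nearest-point projection argument ``in reverse'': project the boundary of $A$ onto $A_r$. By \Cref{AR_CONV} the set $A_r$ is convex, and it is compact (a closed subset of the compact set $A$); by \Cref{CONV_PROJ,CONTRACT_MAP} the nearest-point projection $p_{A_r}:\mathbb R^D\to A_r$ is then well-defined and $1$-Lipschitz. The whole theorem reduces to the single geometric claim $\partial A_r\subseteq p_{A_r}(\partial A)$: granting it, monotonicity of the $(D-1)$-dimensional Hausdorff measure $\mathcal H^{D-1}$ gives $\mathcal H^{D-1}(\partial A_r)\le\mathcal H^{D-1}\!\bigl(p_{A_r}(\partial A)\bigr)$, and since a $1$-Lipschitz map cannot increase $\mathcal H^{D-1}$ we get $\mathcal H^{D-1}\!\bigl(p_{A_r}(\partial A)\bigr)\le\mathcal H^{D-1}(\partial A)$, that is, $\Phi(A_r)\le\Phi(A)$. (If $A_r=\emptyset$ the statement is vacuous, so we may assume $A_r\neq\emptyset$.)

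To prove the claim $\partial A_r\subseteq p_{A_r}(\partial A)$, fix $q\in\partial A_r$. Since $A_r=\{a\in A:\dist{a,\partial A}\ge r\}$ is closed we have $q\in A_r$, hence $\dist{q,\partial A}\ge r>0$, so $q$ lies in the interior of $A$. Because $A_r$ is convex there is a unit vector $u$ with $A_r\subseteq\{x:\langle x-q,u\rangle\le 0\}$ (a supporting halfspace at the boundary point $q$; if $A_r$ is lower-dimensional any $u$ normal to its affine hull works). Let $t^\ast=\sup\{t\ge 0:q+tu\in A\}$; this is finite and positive because $A$ is compact and $q$ is interior to $A$, and since $A$ is closed, $p:=q+t^\ast u$ lies on $\partial A$. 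For every $a\in A_r$, expanding $\|p-a\|^2=\|p-q\|^2+2t^\ast\langle u,q-a\rangle+\|q-a\|^2$ and using $\langle u,q-a\rangle\ge 0$ yields $\|p-a\|\ge\|p-q\|$, so $q$ is the point of $A_r$ nearest to $p$, i.e.\ $p_{A_r}(p)=q$. As $q\in\partial A_r$ was arbitrary, the claim follows, and assembling the pieces then finishes the proof; together with \Cref{AR_CONV} this shows that compact convex sets are shrinkable.

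The step I expect to be the main obstacle is precisely this reverse-projection claim $\partial A_r\subseteq p_{A_r}(\partial A)$: one must check that travelling outward from a boundary point of $A_r$ along the normal to a supporting halfspace exits $A$ at a point whose nearest point of $A_r$ is the one we started from. The supporting-halfspace inequality above is exactly what makes $q=p_{A_r}(p)$ hold, and the only further care needed is that $t^\ast$ is finite and positive (compactness of $A$, interiority of $q$) and that a lower-dimensional $A_r$ is handled by the same argument with $\partial A_r=A_r$. A minor auxiliary fact, which I would either cite or prove in a line directly from the definition of Hausdorff measure via covers, is that an $L$-Lipschitz map scales $\mathcal H^{D-1}$ by at most $L^{D-1}$, so that $L=1$ gives no increase.
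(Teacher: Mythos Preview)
Your proposal is correct and follows essentially the same strategy as the paper: apply the nearest-point projection $p_{A_r}$ to $\partial A$ and invoke the $1$-Lipschitz property from \Cref{CONTRACT_MAP}. The paper simply asserts that $p_{A_r}(\partial A)=\partial A_r$ ``since the map is onto,'' whereas you supply the missing justification for the nontrivial inclusion $\partial A_r\subseteq p_{A_r}(\partial A)$ via a supporting-halfspace argument and are more explicit about the measure-theoretic step; but the underlying idea is identical.
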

\begin{proof}
	Consider the projection $p_{A_{r}}:A\rightarrow A_{r}$. Note
	that for $r>0$, we have that $A_{r}=\{x\in A\mid d(x,A^{c})\ge r\}$
	is entirely contained in the interior of $A$, which implies that
	$A_{r}\cap\partial A=\emptyset$. Thus, our situation satisfies the
	assumption we made in the proof of Lemma \ref{CONTRACT_MAP}. 
	
	Under the action
	of $p_{A_{r}}$, any point in $\mathbb{R}^{D}$ outside of $A_{r}$ will
	get mapped to $\partial A_{r}$. In particular, since the map is onto, $\partial A$ will
	get mapped to $\partial A_{r}$, i.e.\ $p(\partial A)= \partial A_{r}$. 
	Using the fact that $p_{A_{r}}$ is contractive, we have that 
	\begin{align*}
	\Phi(A_r) =|\partial A_{r}| = |p(\partial A)| & \le|\partial A| = \Phi(A),
	\end{align*}
	from which we conclude that $A$ is a shrinkable set. 
\end{proof}
This provides the final step in our proof of Lemma \ref{LEM_PARA}.
Note that we do not require an explicit formula for the surface area
of the boundary of a $D$-dimensional convex set. 
In general, one may use the Cauchy-Crofton formula to calculate this quantity\textemdash for more details, see
Theorem 5.5.2 of Ref.~\cite{Klain97}. 

\bibliography{qu-simulation}

\end{document}